\documentclass{elsarticle}
\journal{Signal Processing}

\usepackage{amsmath,amssymb,graphicx,epsfig,algorithm,algorithmic,epstopdf, url}
\usepackage[utf8]{inputenc}
\usepackage[mathscr]{euscript}
\usepackage{hyperref}
\usepackage{color} 
\usepackage{amsthm}
\usepackage{flushend}
\usepackage{bbm}
\usepackage{booktabs}
\usepackage{xcolor}
\usepackage{algorithm}
\usepackage{algorithmic}
\usepackage{bm}



\newcommand{\bM}{\mathbf{M}}

\def\minwrt[#1]{\underset{#1}{\text{minimize }}}
\def\argminwrt[#1]{\underset{#1}{\text{arg min }}}
\def\maxemphwrt[#1]{\underset{#1}{\text{\emph{maximize} }}}
\newcommand{\ett}{{\bf 1}}

\DeclareMathOperator*{\maxwrt}{maximize}

\newcommand{\mR}{{\mathbb R}}

\newcommand{\diag}{{\rm diag}}

\newtheorem{theorem}{Theorem}
\newtheorem{remark}{Remark}
\newtheorem{proposition}{Proposition}
\newtheorem{lemma}{Lemma}
\newtheorem{example}{Example}



\DeclareMathOperator{\expm}{expm}

\newcommand{\norm}[1]{\left\lVert#1\right\rVert}
\newcommand{\abs}[1]{\left|#1\right|}
\newcommand{\trace}[1]{\text{tr}\left(#1\right)}

\def\bx{{\bf x}}


\def\bC{{\bf C}}

\def\bK{{\bf K}}

\def\bU{{\bf U}}

\def\ccA{{\mathcal{A}}}

\def\ccM{{\mathcal{M}}}

\def\ccX{{\mathcal{X}}}
\def\ccY{{\mathcal{Y}}}

\def\imagunit{{\mathfrak{i}}}

\def\discstate{{\mathrm{x}}}

\def\mongemap{{g}}

\newcommand{\cD}{{\mathcal D}}

\newcommand{\cT}{{\mathcal T}}
\newcommand{\bcX}{{\bm{\mathcal X}}}
\newcommand{\cX}{{\mathcal{X}}}

\def\RC{{\mathbb{C}}}
\def\RE{{\mathbb{E}}}

\def\RM{{\mathbb{M}}}
\def\RN{{\mathbb{N}}}
\def\RR{{\mathbb{R}}}
\def\RT{{\mathbb{T}}}
\def\RZ{{\mathbb{Z}}}



\newcommand{\realpart}{\mathfrak{Re}}
\newcommand{\imagpart}{\mathfrak{Im}}

\newcommand{\Omtspec}{S}

\newcommand{\costfunc}{c}
\newcommand{\costfuncmulti}{\bm{\mathcal{C}}}

\newcommand{\disccovop}{G}

\newcommand{\wavelength}{\xi}
\newcommand{\projop}[2]{\mathcal{P}_{#1}^{#2}}
\newcommand{\projopdisc}[2]{P_{#1}^{#2}} 

\setlength{\textwidth}{6.5in}
\setlength{\textheight}{9.15in}
\setlength{\oddsidemargin}{-.2in} 
\setlength{\topmargin}{-0.5in}
%
        \makeatletter
        \def\fps@eqnfloat{!t}
        \def\ftype@eqnfloat{4}
        
        \newenvironment{eqnfloat*}
               {\@dblfloat{eqnfloat}}
               {\end@dblfloat}
        \makeatother
%



%
{%
\end{oldthebibliography}%
}%


\begin{document}
\begin{frontmatter}


\title{Multi-Marginal Optimal Mass Transport with Partial Information\tnoteref{t1}}
\tnotetext[t1]{This work was supported in part by eSSENCE grant 2017-4:2, Carl Trygger's foundation grant 16:208, and the Swedish science foundation grants 2015-04148 and 2014-5870.}
\author[LU]{Filip Elvander\corref{cor1}} 
\ead{filip.elvander@matstat.lu.se}

\author[KTH]{Isabel Haasler}
\ead{haasler@kth.se}

\author[LU]{Andreas Jakobsson}
\ead{aj@maths.lth.se}

\author[KTH]{Johan Karlsson}
\ead{johan.karlsson@math.kth.se}

\address[LU]{Department of Mathematical Statistics, Lund University, P.O. Box 118, SE-221 00 Lund, Sweden}
\address[KTH]{Department of Mathematics, KTH Royal Institute of Technology}
\cortext[cor1]{Corresponding author. Phone: +46722004363.}

\begin{abstract}
During recent decades, there has been a substantial development in optimal mass transport theory and methods.
In this work, we consider multi-marginal problems wherein only partial information of each marginal is available, which is a setup common in many inverse problems in, e.g., imaging and spectral estimation.
By considering an entropy regularized approximation of the original transport problem, we propose an algorithm corresponding to a block-coordinate ascent of the dual problem, where Newton's algorithm is used to solve the sub-problems. 
In order to make this computationally tractable for large-scale settings, we utilize the tensor structure that arises in practical problems,
allowing for computing projections of the multi-marginal transport plan using only matrix-vector operations of relatively small matrices. As illustrating examples, we apply the resulting method to tracking and barycenter problems in spatial spectral estimation.
In particular, we show that the optimal mass transport framework allows for fusing information from different time steps, as well as from different sensor arrays, also when the sensor arrays are not jointly calibrated. Furthermore, we show that by incorporating knowledge of underlying dynamics in tracking scenarios, one may arrive at accurate spectral estimates, as well as faithful reconstructions of spectra corresponding to unobserved time points. 
\end{abstract}
\begin{keyword}
optimal mass transport, multi-marginal problems, entropy regularization, array signal processing, sensor fusion
\end{keyword}
\end{frontmatter}

\section{Introduction}

The use of mass, or power, distributions for modeling and describing properties of signals and stochastic processes is ubiquitous in the field of signal processing, being fundamental to areas such as spectral estimation, classification, and localization \cite{StoicaM05}. Spectral representations, detailing the distribution of power over frequency, serve as compact signal descriptions and constitute the foundation for many applications, such as speech enhancement \cite{HuL03_11}, non-invasive estimation \cite{AminiEG05_52}, spectroscopy \cite{MierisovaA01}, as well as radar and sonar \cite{trees1992detection}. In such applications, one requires a means of comparing and quantifying distances between spectra. Popular choices include the Kullback-Leibler \cite{KullbackL51_22} and Itakura-Saito divergences \cite{GrayBGM80_28}, the R\'enyi entropy \cite{Renyi61}, and $L_p$-norms, where the latter is often implicitly utilized through the use of matrix norms applied to the signal covariance matrix \cite{StoicaBL11_59b, SwardAJ18_143}.
As an alternative to these measures, it has also been proposed to define distances between spectra based on the concept of optimal mass transport \cite{GeorgiouKT09_57}. Optimal mass transport (OMT) is concerned with the task of moving one distribution of mass into another, with the description of the movement of mass being referred to as a transport plan.
The minimal total cost of rearrangement, i.e., the one associated with the optimal plan, may then be used as a measure of distance between the two distributions \cite{Villani08}.
Historically, OMT has been widely used in economics, e.g., for planning and logistics (see \cite{Villani08} for an introduction and an overview of the topic).
Recently, there has been a rapid development in the theory and methods for optimal mass transport, and the ideas have attracted considerable attention in various economic and engineering fields, being used, for example, for option pricing \cite{BeiglbockHLP13_17}, color and texture transfer in image processing \cite{DominitzT10_16}, as well as machine learning \cite{SchmitzHBNCCPS18_11,ArjovskyCB17_arXiv} and other signal processing tasks \cite{KolouriPTSR17_34}.
These developments have lead to a mature framework for OMT, with computationally efficient algorithms \cite{Cuturi13, peyre2019computational}, and constitute a flexible modeling tool that may be used to address many problems in the areas of signal processing and systems theory.

One of the appeals of using OMT for defining distances between distributions lies in its geometric properties. Specifically, as it models transport between two mass distributions, taking place on their domain of definition, the distance is not only related to the point-wise differences between the distributions, but also depends on where the mass is located. As a consequence of this, OMT distances directly reflect proximity of mass on the underlying space \cite{GeorgiouKT09_57}. This property becomes very attractive when used in array processing and spatial spectral estimation problems, in which the support of the spectrum directly details the location of signal emitting sources \cite{StoicaM05}. For example, OMT has been shown to produce robust and predictable results when applied to direction of arrival (DoA) and localization problems \cite{ElvanderHJK18_eusipco,ElvanderHJK19_icassp}. 
Also, when computing the distance between two mass distributions, the OMT problem naturally induces a way of interpolating distributions by considering the flow of mass on the underlying space \cite{mccann1997convexity, AngenentHT03_35}. This procedure has been shown to yield physically meaningful results, e.g., when used for modeling heat diffusion on graphs \cite{SimouF19_icassp}, as well as for interpolating and tracking speech/sound signals \cite{JiangLG12_60, ning2015matrix} and (structured) covariance matrices \cite{NingJG13_20, ElvanderJK18_66, yamamoto2018regularization}.

An important generalization, not least for applications in signals and systems, is the introduction of dynamics to the OMT formulation, thereby allowing for modeling, e.g., inertia. Such a model was proposed and developed in \cite{ChenGP17_62}, wherein the mass distributions were modeled as ensembles of mass particles endowed with linear time-varying dynamics specified by a state space model.
Building on the fluid dynamics formulation in \cite{benamou2000computational}, this may also be viewed as an optimal control problem of an ensemble of particles that is required to reach a given final distribution \cite{chen2016relation}.
The cost of transport then corresponds to the control energy required for deviating from the expected trajectory as to reach the specified state. 
This theory has also been used for tracking and estimation of ensembles where only access to certain output states are available \cite{ChenK18_2}.
However, the increased flexibility of the dynamical setup comes at the price of higher computational complexity of the resulting OMT problem, as the domain is multi-dimensional.
Furthermore, the OMT formulation has recently been extended as to consider transport between several distributions, referred to as the multi-marginal OMT problem \cite{pass2015multi}. Several problems such as tracking, barycenter,  and flow problems can be seen as special cases of multi-marginal OMT by selecting the cost function in a suitable manner \cite{peyre2019computational}. However, also here the complexity increases with the number of marginals.

The optimal mass transport problem has many useful properties but it also has drawbacks. 
The original formulation of Gaspard Monge is a non-convex optimization problem, and even though the formulation by Leonid Kantorovich is convex, it results in very large optimization problems that are intractable with standard methods, even for medium-sized transportation problems. Recently, this computational issues have been addressed by regularizing the problem using an entropic barrier term. The resulting optimization problem may be solved using the so called Sinkhorn iterations \cite{Cuturi13, Sinkhorn67_74}, allowing for finding approximate solutions to large transportation problems in a computationally efficient manner, which
has proved useful for many cases where no computationally feasible method previously existed.
However, both multi-dimensional and multi-marginal problems suffer from the curse of dimensionality, and the computational complexity remains a challenge even for the Sinkhorn iteration scheme, although it has been shown that for the problem of computing the Euler flow, computationally efficient solutions may be found by utilizing the structure of the cost function \cite{benamou2015iterative}. 
As we will see in this paper, several other cases of multi-marginal OMT problems can also be addressed in a computationally efficient manner by exploiting cost function structures.

In this work, we consider multi-marginal OMT problems where the distributions themselves are not directly observed, but manifested through linear measurement equations. We refer to this problem as multi-marginal OMT with partial information. Several problems of interest may be expressed within such a framework, e.g., CT imaging (cf. \cite{benamou2015iterative,KarlssonR17_10}), ensemble estimation \cite{ChenK18_2}, image deblurring \cite{lellmann2014imaging}, spectral analysis \cite{JiangLG12_60}, and radar imaging problems \cite{ElvanderHJK19_icassp}.  
As motivating examples, we consider problems from radar imaging, being spatial spectral estimation problem, wherein the mass distributions, i.e., power spectra, are observable from covariance matrices corresponding to a set of sensor arrays. For such problems, typically only the co-located sensors, i.e., sensors in an array, can be processed coherently due to calibration errors between the sensor arrays. Herein, we propose an OMT framework for fusing the information provided by the array covariances in a non-coherent manner. For such scenarios, we demonstrate that multi-marginal OMT constitutes a flexible and robust tool for performing information fusion, illustrating the inherent robustness to spectral perturbations, as caused by, e.g., calibration errors, provided by the use of transport models.
In addition to this, we show that the proposed OMT framework may be used to fuse information corresponding to different time points, allowing for modeling the time evolution of spatial spectra. In this case, we demonstrate how the framework allows for incorporating prior knowledge of underlying dynamics of the signal generating mechanism, providing a means of forming accurate reconstructions of target trajectories in spatial estimation problems.

In addition to the modeling framework, one of the main contributions of this work is deriving a computationally efficient method for solving multi-marginal OMT problems with partial information. As the original OMT formulation is posed in an infinite-dimensional function space, we formulate discrete approximations of this problem, allowing for practical implementations. Specifically, one of the main contributions of this work is proposing a generalization of the Sinkhorn iterations, which may be applied to OMT problems with partial information.
In particular, we consider the corresponding dual problem and propose to solve this with a block-coordinate ascent method, wherein each block update is computed using Newton's method. Furthermore, for certain multi-marginal OMT problems, such as the tracking and barycenter problems, we show that the structure of the cost function may be exploited, allowing for solving high-dimensional problems with a large number of marginals.
It should be stressed that even though spatial spectral estimation is here used as a motivating and illustrating example, the proposed framework applies to a large class of multi-marginal OMT problem where only partial information of the marginal distributions is available.

This work is based on several of our previous work on tracking and information fusion. In particular, the original tracking formulation is from \cite{ChenK18_2}. This was further developed for radar problems \cite{ElvanderHJK18_eusipco} as well as computationally efficient computation using pairwise entropy regularization \cite{ElvanderHJK19_icassp}. In this paper, by extending this work to the  multi-marginal setting, we obtain a more flexible framework, as well as more numerically stable methods. 

This paper is structured as follows. In Section~\ref{sec:motivating_examples}, we present the motivating example of spatial spectral estimation, describing the relation between spectra and covariance matrices. Section~\ref{sec:background} then provides a background to the problem of optimal mass transport. In Section~\ref{sec:omt_multi_margin}, we present the multi-marginal OMT problem with partial information, and describe the use of this formulation in spectral estimation. Section~\ref{sec:entropy_reg_comp} presents computational tools necessary for formulating an efficient solution algorithm for the multi-dimensional OMT problem, which is then extended to problems with partial information in Section~\ref{sec:discrete_omt_PI}. In Section~\ref{sec:numerical_results}, we provide numerical examples, illustrating the properties of the proposed framework, whereas Section~\ref{sec:conclusions} concludes upon the work.

%
%
\section{Notation}\label{sec:notation}
Let $\RM^p$ denote the space of Hermitian matrices of size $p\times p$, and let $\bar{(\cdot)}$, $(\cdot)^T$ and $(\cdot)^H$ denote complex conjugation, transpose, and conjugate transpose, respectively. For vectors $x \in \RC^n$, let $\norm{x}_p$ denote the standard $\ell_p$-norm, i.e.,
%
	$\norm{x}_p^p = \sum_{k=1}^n \abs{x_k}^p $
%
for $p\geq 1$. Furthermore, for a space $\ccX\subset \RR^d$, let $\ccM(\ccX)$ be the set of generalized integrable functions on $\ccX$ (i.e., signed measures on $\ccX$), with $\ccM_+(\ccX)$ denoting the subset of these that are non-negative (i.e., non-negative measures on $\ccX$).  

Unless stated otherwise, vectors, matrices, and uni- and bivariate functions are denoted by italic letters, whereas boldface letters are reserved for multivariate functions and tensors. The notation ./, $\odot$, $\exp$, and $\log$ denote element-wise division, multiplication, exponential function, and logarithm, respectively. Finally, for a square matrix $A$, the matrix exponential is denoted $\expm(A)$.
%
%
%
%
%
\section{Motivating examples}\label{sec:motivating_examples}

The modeling tools proposed in this work may be applied to several different classes of reconstruction and estimation problems, such as, e.g., inverse problems encountered in imaging, radar, tomography, and signal processing. However, in order to give a concise description, we will herein focus on a few motivating applications in spectral estimation and radar imaging. 
In a typical reconstruction problem, one seeks to recover a spectrum based on a number of measurements, e.g., estimates of the covariance function of a time series, or estimates of the covariance matrices corresponding to a set of sensor arrays.
However, in practice, the time series may not be perfectly stationary, and the set of sensors may not be perfectly calibrated. Such modeling errors may, when combining all available measurements in forming an estimate of the spectrum, result in artifacts and/or lack of feasible solutions.
In addition, spectral reconstruction problems are typically ill-posed, meaning that solutions, if they exist, are not unique, or that small measurement or calibration errors may result in large errors in the reconstruction of the spectrum. A main goal of this work is to develop a robust modeling framework for addressing such reconstruction problems, allowing for fusing the available measurements in an informed an dependable manner. A key tool for this will be the use of OMT.

In this section, we describe the basic mathematical model relating the covariance estimates to the power spectra and the problem of localization. In the subsequent sections, we will develop theory and formulate OMT problems which may be used to fuse covariance measurements obtained from different sensors or from different time points. 
%
\subsection{Spectral estimation} \label{sub_sec:spec}
Generally, the power spectrum of a time series represents the distribution of power over frequency.  
More precisely, consider a complex-valued discrete-time stochastic process, $y(t)$, for $t \in \RZ$, which we will assume to be zero mean and wide sense stationary (WSS), i.e., $\RE (y(t))=0$ for all $t\in \RZ$, and with covariance
\begin{equation} \label{eq:r_k}
r_k \triangleq \RE (y(t)\overline{y(t-k)})
\end{equation}
being independent of $t$. Here, $\RE(\cdot)$ denotes the expectation operator. The frequency content of the process $y(t)$ is described by the power spectrum, $\Phi\in \ccM_+(\RT)$, which is the non-negative measure on $\RT=(-\pi,\pi]$ whose Fourier coefficients coincide with the covariances (see, e.g.,  \cite[Chapter 2]{JohnsonD93})
\begin{equation}\label{eq:moments}
r_k=\frac{1}{2\pi}\int_{-\pi}^\pi \Phi(\theta)e^{-\imagunit k\theta} d\theta
\end{equation}
for $k\in \RZ$, where $\imagunit = \sqrt{-1}$ is the imaginary unit. Commonly, one considers the inverse problem of reconstructing the power spectrum $\Phi$ from a set of given covariances, $r_k$, for $k \in \RZ,$ with $\abs{k}\leq p-1$ for some $p \in \RN$, i.e., one seeks a function $\Phi$ that is consistent with the covariance sequence $\left\{ r_k \right\}_{\abs{k}\leq p-1}$, such that \eqref{eq:moments} holds for $\abs{k}\leq p-1$. The corresponding $p\times p$ covariance matrix is then given by
\begin{align*}
R=\begin{pmatrix}
r_0&r_{-1}& r_{-2}&\cdots &r_{-p+1}\\
r_1&r_0& r_{-1}&\cdots &r_{-p+2}\\
r_2&r_1& r_{0}&\cdots &r_{-p+3}\\
\vdots &\vdots& \vdots&\ddots&\vdots\\
r_{p-1}& r_{p-2}&r_{p-3}&\cdots& r_{0}
\end{pmatrix}
\end{align*}
which is a Hermitian Toeplitz matrix, since $y(t)$ is WSS. Thus, expressed using matrices, a spectrum is consistent with a partial covariance sequence, or, equivalently, a finite covariance matrix, if $\Gamma(\Phi) = R$,
where \mbox{$\Gamma:\ccM(\RT) \to \RM^{p}$} is the linear operator 
\begin{align}\label{eq:gamma1}
	\Gamma(\Phi)\triangleq\frac{1}{2\pi}\int_{\RT} a(\theta)\Phi(\theta) a(\theta)^Hd\theta
\end{align}
where
\begin{align*}
	a(\theta) \triangleq \left[\begin{array}{cccc} 1 & e^{\imagunit\theta} & \cdots &  e^{\imagunit(p-1)\theta} \end{array} \right]^T
\end{align*}
denotes the Fourier vector. Note that since $\Phi$ is non-negative, it follows from \eqref{eq:gamma1} that  $R=\Gamma(\Phi)$ is positive semi-definite. Furthermore, if the Toeplitz matrix $R$ is positive semi-definite, then there exists at least one non-negative power spectrum $\Phi$ such that $R=\Gamma(\Phi)$. If $R$ is positive definite, there exists an infinite family of solutions $\Phi$ \cite{GrenanderS58}. For parameterizations of specific families of such spectra, e.g., rational and exponential spectra, see \cite{ByrnesGL00_48, Georgiou05_50, FerrantePR08_53}.

In most practical cases, the stochastic process cannot be assumed to be stationary over a longer time interval. Thus, the covariances generally have to be estimated based on shorter intervals, resulting in possibly inaccurate and low resolution spectral estimates.
In this work, we will see how OMT may be utilized for modeling slowly time-varying spectra, thereby allowing for improved spectral reconstructions from noisy covariance measurements.
%
\subsection{Direction of arrival estimation and localization} \label{sub_sec:doa_loc}

The inverse problem of estimating a power spectrum from a finite set of temporal covariances, as described above, may be extended to DoA and localization problems by noting that the spatial covariance matrix corresponding to a sensor array may be parametrized analogously to equation \eqref{eq:moments}. Specifically, consider a scenario in which point-like sources located in a space $\ccX \subset \RR^d$ emit waves impinging on a set of $p \in \RN$ sensors located at $x_k \in \RR^d$, with $d \leq 3$. Letting the corresponding sensor signal vector be denoted $y \in \RC^p$, one may consider the sensor covariance matrix, often called the spatial covariance matrix,
\begin{align*}
	R = \RE\left( y y^H \right).
\end{align*}
Then, just as in \eqref{eq:gamma1}, the covariance $R$ may be related to a positive measure on $\ccX$, denoted $\Phi \in \ccM_+(\ccX)$, i.e., $R = \Gamma(\Phi)$, where $\Gamma: \ccM(\ccX) \to \RM^{p}$ is given by
\begin{align} \label{eq:gamma_operator}
	\Gamma(\Phi) = \int_\ccX a(x)\Phi(x)a(x)^H dx,
\end{align}
where $dx$ denotes the Lebesgue measure on $\ccX$ \cite{Georgiou05_50}. 
Here, $a: \ccX \to \RC^p$ is, generally, no longer a Fourier vector, as in the temporal case, but instead constitutes an array steering vector, i.e., the manifold vector. The specific functional form of $a$ depends on the sensor locations, the wavelength, as well as on the propagation properties of this space. For example, the array manifold vector corresponding to spherical wave fronts in $\ccX \subset \RR^d$ may be defined as \cite{JohnsonD93}
\begin{equation}
a(x)=\begin{pmatrix}\frac{1}{\norm{x_k-x}_2^{(d-1)/2} }e^{-{2\pi
\imagunit}\!\frac{\norm{x_k-x}_2}{\wavelength}} \end{pmatrix}_{k=1}^p,
\label{eq:array_mainifold_vector}
\end{equation}
where $\norm{\cdot}_2$ denotes the Euclidean norm, and $\wavelength$ is the source signal wavelength. Here, $\Phi$ is referred to as a spatial spectrum, as it details the distribution of mass, or power, on $\ccX$, with locations of the signal sources on $\ccX$ corresponding to the support of $\Phi$. In practice, signal sources are often identified with peaks of power, i.e., the modes of $\Phi$, which is, for instance, utilized in detection methods such as MUSIC \cite{Schmidt79} and ESPRIT \cite{RoyPK86_34}. It may be noted that the identifiability of the spatial spectral estimation problem depends on the functional form of the array manifold vector $a$. For example, the problem of localization requires non-zero curvature of the impinging wave fronts, as expressed in \eqref{eq:array_mainifold_vector}, modeling spherical waves.
It may also be noted that the array geometry, i.e., the relative location of the sensors, determine the identifiability of source localization and DoA problems. For example, when using a uniform linear array (ULA), the geometry of the array gives rise to spatial aliasing such that source locations are only identifiable when confined to one of the half-spaces defined by the array. It should be stressed that even for array geometries that do not give rise to spatial aliasing,%
\footnote{ That is, the steering vector is such that $\tilde{x} \neq x$ implies $a(\tilde{x}) \neq a(x)$ for all $x,\tilde{x} \in \ccX$.} the operator $\Gamma$ in \eqref{eq:gamma_operator} is not injective and hence there are several distinct $\Phi \in \ccM_+(\ccX)$ satisfying $\Gamma(\Phi) = R$ for any given $R$ in the range of $\Gamma$.

In this work, we will use the problem of localization, seen as a spectral estimation problem, as an illustrative example of how to utilize the proposed multi-marginal OMT framework for fusing information. Specifically, we will consider the problems of sensor fusion and spectral tracking. Sensor fusion corresponds to combining measurements obtained from separate sensor arrays in forming a single estimate of the spectral spectrum, whereas tracking refers to combining measurements collected at consecutive time points in order to reconstruct the time evolution of a spatial spectrum.
%

\section{Background: Optimal mass transport}\label{sec:background}
In this section, we provide a brief background on the problem of OMT, together with extensions to settings with underlying dynamics and incomplete information.
\subsection{Monge's problem and the Kantorovich relaxation} \label{subsec:monge_kantorovich}
The problem of OMT, as formulated by Monge, is concerned with finding a mapping between two distributions of mass such that a measure of cost, interpreted as the cost of transportation, is minimized. That is, given two mass distributions $\Phi_0\in\ccM_+(\ccX_0)$ and  $\Phi_1 \in \ccM_+(\ccX_1)$, defined on the spaces $\ccX_0$ and $\ccX_1$, respectively, one seeks a function $\mongemap: \ccX_0 \to \ccX_1$ minimizing the functional \cite{Villani08}
\begin{align} \label{eq:monge_omt}
	\int_{\ccX_0} \costfunc\left(x_0, \mongemap(x_0)\right) \Phi_0(x_0)dx_0,
\end{align}
subject to the constraint
\begin{align} \label{eq:push_forward_for_mapping0}
	\int_{x_1 \in \mathcal{U}} \Phi_1(x_1)dx_1 = \int_{\mongemap(x_0) \in \mathcal{U}} \Phi(x_0) dx_0, \mbox{ for all } \mathcal{U} \subset \ccX_1.
\end{align}
Here, $\costfunc: \ccX_0 \times \ccX_1 \to \RR_+$ denotes the cost of transporting one unit of mass from points in $\ccX_0$ to points in $\ccX_1$, yielding that \eqref{eq:monge_omt} equals the total cost of transporting the distribution $\Phi_0$ to $\Phi_1$ for a given mapping $\mongemap$. 
The constraint in \eqref{eq:push_forward_for_mapping0} enforces the condition that $\mongemap$ transports $\Phi_0$ to $\Phi_1$, i.e., the mapping $\mongemap$ moves the mass in $\Phi_0$ onto $\Phi_1$ without any loss or addition of mass in the transition. This may be expressed using the shorthand notation $\mongemap_\# \Phi_0 = \Phi_1$, where $\mongemap_\# \Phi_0$ denotes the so-called push-forward measure, i.e.,
\begin{align} \label{eq:push_forward_for_mapping}
	\int_{x_1 \in \mathcal{U}} \left(\mongemap_\#\Phi_0\right)(x_1)dx_1 = \int_{\mongemap(x_0) \in \mathcal{U}} \Phi(x_0) dx_0,
\end{align}
for all $\mathcal{U} \subset \ccX_1$. The notation thus implies that for any subset $\mathcal{U}$ of $\ccX_1$, the mass of $\mongemap_\#\Phi_0$ on $\mathcal{U}$ is equal to the mass of $\Phi_0$ on the pre-image of $\mathcal{U}$. It directly follows that the total mass of $\mongemap_\#\Phi_0$ equals that of $\Phi_0$.
Finding such a $\mongemap$ that minimizes \eqref{eq:monge_omt}, in general, constitutes a difficult problem. Also, the use of a mapping for modeling transport imposes some additional requirements on $\Phi_0$ and $\Phi_1$; for example, the problem in \eqref{eq:monge_omt} does not allow for splitting up Dirac delta functions in $\Phi_0$ \cite{Villani08}, and the problem of minimizing \eqref{eq:monge_omt} subject to $\Phi_1=\mongemap_\#\Phi_0$ may lack an optimal solution. In contrast, the Kantorovich relaxation of \eqref{eq:monge_omt} circumvents these restrictions by instead of a mapping $\mongemap$ seeking an optimal coupling between the distributions $\Phi_0$ and $\Phi_1$.
The set of valid couplings is the set of measures on the product space $ \ccX_0\times \ccX_1$ with marginals coinciding with $\Phi_0$ and $\Phi_1$, i.e.,
\begin{align} \label{eq:marginal_constraint}
	\Omega(\Phi_0,\Phi_1) = \left\{ M \in \ccM_+(\ccX_0\times \ccX_1) \mid \int_{\ccX_1} M(\cdot,x_1) dx_1 = \Phi_0, \int_{\ccX_0} M(x_0,\cdot) dx_0 = \Phi_1 \right\}.
\end{align}
Here, if $M \in \Omega(\Phi_0,\Phi_1)$, then $M$ is referred to as a transport plan, as the function value $M(x_0,x_1)$ may be interpreted as the amount of mass that is transported between $x_0$ and $x_1$. Thus, the marginal constraint  \eqref{eq:marginal_constraint} replaces the condition $\mongemap_\#\Phi_0 = \Phi_1$ used in the Monge formulation. It may be noted that $\Omega(\Phi_0,\Phi_1)$ is always non-empty; the normalized outer product of $\Phi_0(x_0)$ and $\Phi_1(x_1)$ is an element of $\Omega(\Phi_0,\Phi_1)$. It should also be noted that it is assumed that the mass of $\Phi_0$ and $\Phi_1$ are the same, i.e., $\int_{\ccX_0} \Phi_0(x_0) dx_0 = \int_{\ccX_1} \Phi_1(x_1) dx_1$. The Monge-Kantorovich problem of optimal mass transport may then be stated as
\begin{equation}\label{eq:omt_standard}
\begin{aligned}
	 \minwrt[{M \in \Omega(\Phi_0,\Phi_1) }]&\quad \int_{\cX_0\times \cX_1} M(x_0,x_1)\costfunc(x_0,x_1) dx_0 dx_1,
\end{aligned}
\end{equation}
i.e., it seeks the transport plan $M$ minimizing the cost of transportation between $\Phi_0$ and $\Phi_1$, where the point-wise cost of moving a unit mass from $x_0\in\ccX_0$ to $x_1\in\ccX_1$ is given by $\costfunc(x_0,x_1)$.
In contrast with \eqref{eq:monge_omt}, the objective function in  \eqref{eq:omt_standard} is linear in $M$ and the optimization problem always has an optimal solution. Further, for absolutely continuous $\Phi_0$ and $\Phi_1$, the problems \eqref{eq:monge_omt} and \eqref{eq:omt_standard} are equivalent \cite{Villani08}.

The resulting minimal objective value of \eqref{eq:omt_standard} may be used in order to define a notion of distance $S: \ccM_+(\cX_0)\times \ccM_+(\cX_1) \to \RR$ according to
\begin{equation} \label{eq:omt_stationary_spectra}
\begin{aligned}
	\Omtspec(\Phi_0,\Phi_1) = \min_{M \in \Omega(\Phi_0,\Phi_1) }&\quad \int_{\cX_0\times \cX_1} M(x_0,x_1)\costfunc(x_0,x_1) dx_0 dx_1.
\end{aligned}
\end{equation}
Recently, the idea of using an OMT-induced concept of proximity has attracted considerable interest in a plethora of modeling applications, such as serving as the learning criterion in machine learning scenarios \cite{AdlerROK17_arxiv,ArjovskyCB17_arXiv}, for performing texture and color transfer \cite{DominitzT10_16}, dictionary learning \cite{SchmitzHBNCCPS18_11}, as well as being used as an evaluation metric \cite{AguilaJ18_66}. Also, it has been used for inducing metric structure on the space of power spectra \cite{GeorgiouKT09_57}, where it was shown that choosing $\costfunc$ as a metric on $\ccX=\cX_0=\cX_1$ in general results in $\Omtspec$ being metric on $\ccM_+(\ccX)$.
Similar efforts have been directed towards defining distances for Toeplitz covariance matrices \cite{ElvanderJK18_66}, allowing for formulating a spectral estimation framework with inherent robustness \cite{ElvanderHJK18_eusipco,ElvanderHJK19_icassp}.
One of the main motivations of using OMT-based distances such as \eqref{eq:omt_stationary_spectra} is its ability to reflect the geometric properties of the underlying space, $\ccX$. As a consequence of this, using OMT distances in order to perform interpolation-like tasks, such as, e.g., morphing of images \cite{KolouriPTSR17_34}, or, recently, for describing the time evolution of a signal on a graph \cite{SimouF19_icassp}, typically yields meaningful and interpretable signal reconstructions. This should be contrasted with the standard Euclidean metric that, when used in the same applications, has been shown to give rise to fade-in fade-out effects, thereby failing to respect underlying assumptions for the signal generating mechanisms, such as, e.g., finite movement speed \cite{SimouF19_icassp, ElvanderJK18_66,GeorgiouKT09_57}.
%
\subsection{Optimal mass transport with dynamics} \label{sub_sec:omt_dyn}
It may be noted that the specific choice of cost function $\costfunc$ used in the OMT objective in \eqref{eq:omt_standard} may greatly influence the structure of the solution, i.e., the transport plan $M$. Preferably, $\costfunc$ should be chosen as to reflect properties or limitations of the considered application, e.g., in order to ensure expected smooth trajectories of transported mass that may be dictated by physical considerations. In many applications, the $\ell_p$ norms, i.e., $c(x_0,x_1) = \|x_0-x_1\|_p^p$ constitute a popular choice, giving rise to the Wasserstein$-p$ distances, with the special case of $p = 1$ often being referred to as the Earth Mover's distance \cite{Villani08}. Among the appeals of this choice are the metric properties of the Wasserstein$-p$ distances \cite{Villani08}, the availability of closed form solutions in one-dimensional problems via the cumulative distribution function (see, e.g., \cite{HakerZTA04_60}), and, more recently, computationally efficient approximations allowing for their use in high-dimensional machine learning applications \cite{BonneelRPP15_51}. However, inherent in this class of cost functions is the tacit assumption that the distribution of mass should remain stationary, i.e., it is expected not to change. Although reasonable, and even desirable, in, e.g., classification applications, we will herein consider a generalization of these distances in order to allow for incorporating knowledge of the dynamics of the process generating the signal observations \cite{ChenGP17_62}.

To this end, let the underlying space be $\ccX \subset \RR^d$, and consider a mass particle with time-varying location, or state, $x(t) \in \ccX$. Furthermore, assume that the dynamics of the particle may be well modeled by the (continuous time) linear differential equation
\begin{equation} \label{eq:dyn_eq}
\begin{aligned}
	\dot{x}(t) = A(t)x(t) + B(t)u(t),
\end{aligned}
\end{equation}
where $x(t) \in \ccX$, $A(t) \in \RR^{d\times d}$, $B(t) \in \RR^{d\times f }$, and where $u(t) \in \RR^f$ denotes an input signal, steering the evolution of the state $x(t)$. 
As in \cite{ChenGP17_62}, we will herein consider this type of model in order to define a cost function $\costfunc: \ccX\times\ccX \to \RR_+$, describing the cost of transport for a unit mass between two points of $\ccX$, corresponding to the time instances $t = t_0$ and $t = t_1$. For notational simplicity, and without loss of generality\footnote{It may be noted that the here presented formulation may be modified in a straightforward manner to other time values $t_0$ and $t_1$.}, we here use $t_0 = 0$ and $t_1 = 1$. Specifically, we define the cost for transporting mass between $x_0 = x(0)$ to $x_1 = x(1)$ via a quadratic optimal control problem, according to
\begin{equation} \label{eq:quad_costfunc}
\begin{aligned}
	c(x_0,x_1) = &\quad\min_{u}\qquad \int_0^1 \norm{u(t)}_2^2 dt \\
	&\text{subject to } \quad\dot{x}(t) = A(t)x(t) + B(t)u(t) \\
	&\qquad\qquad\qquad x(0) = x_0 \;,\; x(1) = x_1,
\end{aligned}
\end{equation}
i.e., the cost is equal to the minimal input signal energy required to steer the state from $x_0$ at $t = 0$ to $x_1$ at $t = 1$. For this case, the cost may be expressed in closed form as
\begin{align} \label{eq:dynamic_cost}
	c(x_0,x_1) = (x_1 - \ccA(1,0) x_0)^TW(1,0)^{-1}(x_1-\ccA(1,0) x_0),
\end{align}
where $\ccA(t,\tau)$ is the state transition matrix and
\begin{align*}
	W(t,s) =\int_s^t \ccA(t,\tau)B(\tau)B(\tau)^T\ccA(t,\tau)^T d\tau
\end{align*}
is the controllability Gramian. It may be noted that for the case of a constant $A(t) \equiv A$, the state transition matrix is given by the matrix exponential $\ccA(t,\tau) = \expm\left(A(t-\tau)\right)$.
It is further worth observing that the transportation cost is not necessarily zero when the initial and end states are the same, but instead when the end state coincides with the expected evolution of the initial state, in accordance with the state equation in \eqref{eq:dyn_eq} with $u\equiv 0$. Also, the cost function is not, in general, symmetric in its arguments. Thus, as the cost of transport is dictated by the system dynamics, the optimal transport plan is the coupling $M \in \ccM_+(\ccX\times \ccX)$ that provides the mass association most in accordance with the trajectories implied by \eqref{eq:dyn_eq}.
It may be noted that the standard stationary case may be retrieved using $A \equiv 0$ and $B \equiv I$, where $I$ is the identity matrix, in which case the cost is equal to the squared Euclidean distance, i.e., $\costfunc(x_0,x_1) = \norm{x_0 - x_1}_2^2$.
Furthermore, having access to a transport plan, $M$, that minimizes the OMT criterion with the dynamic cost in \eqref{eq:quad_costfunc} directly allows for computing intermediate mass distributions interpolating the given marginals using the system dynamics in \eqref{eq:dyn_eq}; this is termed a displacement interpolation \cite{mccann1997convexity}. Specifically, for a particle $x(t)$ such that $x(0) = x_0$ and $x(1) = x_1$, the optimal input signal is given by \cite[equation (24)]{ChenGP17_62}
\begin{align*}
	\hat{u}(t;x_0,x_1) = B(t)^T \ccA(1,t)^T W(1,0)^{-1} \left( x_1 - \ccA(1,0)x_0  \right),
\end{align*}
implying that the corresponding trajectory is
\begin{align*}
	\hat{x}(t;x_0,x_1) = \ccA(t,1) W(1,t)W(1,0)^{-1} \ccA(1,0)x_0 + W(t,0) \ccA(1,t)^TW(1,0)^{-1}x_1,
\end{align*}
obtained by solving \eqref{eq:dyn_eq} using the input signal $\hat{u}(t;x_0,x_1)$. Then, one may construct an interpolating mass distribution, or spectrum, $\Phi_\tau^{M}$, corresponding to time $\tau \in [0,1]$ and parametrized by the transport plan $M$, according to
\begin{align*}
	\Phi_\tau^{M}(x) = \int_{\ccX \times \ccX} M(x_0,x_1) \delta\left(x-\hat{x}(\tau;x_0,x_1)\right)dx_0 dx_1,
\end{align*}
where $\delta$ denotes the Dirac delta function. That is, $\Phi_\tau^{M}(x)$ details the mass located at $x$ at time $\tau$ by considering the mass of all pairs $(x_0,x_1)$ such that their trajectories contains the point $(x,\tau)$. Also, this framework directly allows for defining extrapolating spectra, i.e., distributions of mass for $\tau > 1$, e.g., by letting the input signal be identically zero for $\tau >1$, implying that the unforced system dynamics are used to describe the time evolution of the state, and thus the spectrum. It may be noted that the interpolation for Toeplitz covariance matrices presented in \cite{ElvanderJK18_66} may be obtained as a special case of this formulation, using $A(t) \equiv 0$ and $B(t) \equiv I$.
%

\subsection{Partial information}
As may be noted in the OMT formulations in Section~\ref{subsec:monge_kantorovich}, the input data, i.e., the measurements, used for computing distances and obtaining transport plans, are the mass distributions themselves, i.e., $\Phi \in \ccM_+(\ccX)$. However, in many problem formulations these distributions are not only unknown, but are in fact the quantities of interest to be estimated. In such applications, the measurements instead correspond to functions of the spectrum $\Phi \in \ccM_+(\ccX)$, making the task of inferring it an inverse problem. As noted earlier, a common example of this is the availability of covariance matrices, or estimates thereof, that are linear mappings of power spectra, as may be described using the operator $\Gamma$ in \eqref{eq:gamma_operator}. Thus, the measurements convey only partial information of the underlying distributions, or spectra. Recently, this connection was used in order to define a notion of distance on the space of Toeplitz covariance matrices based on OMT problems in the spectral domain \cite{ElvanderJK18_66}. Specifically, the distance between two covariance matrices $R_0$ and $R_1$ was defined as the minimal objective value of
\begin{equation} \label{eq:omt_cov_mat}
\begin{aligned}
\minwrt[M \in \ccM_+(\ccX\times \ccX)]&\qquad \int_{\ccX\times \ccX} M(x_0,x_1)c(x_0,x_1) dx_0 dx_1 \\
	\text{subject to }&\quad\Gamma\left( \int_\ccX M(\cdot,x_1) dx_1\right) = R_0 \\
	&\quad \Gamma\left( \int_\ccX M(x_0,\cdot) dx_0\right)  = R_1.
\end{aligned}
\end{equation}
Note that this is equal to the smallest possible OMT distance between any two spectra $(\Phi_0,\Phi_1)$ consistent with the observed covariance matrices, i.e., that satisfies $\Gamma(\Phi_t) = R_t$, for $t = 0,1$. As was shown in \cite{ElvanderJK18_66}, the distance notion in \eqref{eq:omt_cov_mat} is in general not a metric, but is rather a semi-metric on the space of covariance matrices for any semi-metric cost function $\costfunc$.

In this work, we will limit our attention to the typical case when one does not have direct access to the mass distributions, or spectra, but instead to (estimates of) covariance matrices. Also, in the context of using dynamical models such as \eqref{eq:dyn_eq}, the partial information contained in the measurements may only be related to a state vector in a space of lower dimension than $\ccX$. Specifically, the system description in \eqref{eq:dyn_eq} may be extended with an observation equation according to
\begin{equation} \label{eq:system_eq}
\begin{aligned}
	\dot{x}(t) &= A(t)x(t) + B(t)u(t) \\
	y(t) &= F(t) x(t),
\end{aligned}
\end{equation}
implying that covariances are related to spectra detailing mass distributions on a space $\ccY$, potentially different from $\ccX$, where $y(t) \in \ccY$ is the observed state. Using the system in \eqref{eq:system_eq}, one may thus incorporate prior knowledge of underlying dynamics, expressed through the state equation, even though one only has access to partial measurements of an observed state. For the corresponding spectra, it holds that if $\Phi \in \ccM_+(\ccX)$, then $F(t)_\# \Phi \in \ccM_+(\ccY)$, where $F(t)_\#$ denotes the push-forward operation, as defined in \eqref{eq:push_forward_for_mapping}, i.e.,
\begin{align*}
	\int_{x \in \mathcal{U}} \left(F(t)_\#\Phi\right)(x)dx = \int_{F(t)x \in \mathcal{U}} \Phi(x) dx,
\end{align*}
for any $\mathcal{U} \subset \ccY$. 
Note that, using the system in \eqref{eq:system_eq}, covariance measurements corresponding to observable quantities are related to spectra on the space $\ccY$, i.e., to elements of $\ccM_+(\ccY)$, and only have an indirect link to spectra on the space $\ccX$. However, the push-forward operation $F(t)_\#$ allows for expressing the connection between covariance matrices and elements of $\ccM_+(\ccX)$ by modifying the operators $\Gamma$ to incorporate $F(t)_\#$ according to
\begin{align}\label{eq:gamma_push_forward}
	\Gamma_{t}\left(\Phi \right) = \int_\ccY a(y) \left( F(t)_\# \Phi \right)(y) a(y)^H dy.
\end{align}
It is worth noting that the domain of definition of the linear operator $\Gamma_t$ is still $\ccM(\ccX)$, whereas the domain of definition for the array steering vectors is $\ccY$, reflecting that the covariance measurements are related to elements of $\ccM_+(\ccY)$. Correspondingly, one may formulate an OMT problem with partial information as
\begin{equation} \label{eq:omt_with_dyn_observation}
\begin{aligned}
\minwrt[M \in \ccM_+(\ccX\times \ccX)]&\qquad \int_{\ccX\times \ccX} M(x_0,x_1)c(x_0,x_1) dx_0 dx_1 \\
	\text{subject to }&\quad \Gamma_{0}\left(\int_\ccX M(\cdot,x_1) dx_1 \right) = R_0 \\
	&\quad \Gamma_{1}\left(\int_\ccX M(x_0,\cdot) dx_0 \right) = R_1.
\end{aligned}
\end{equation}
It may here be noted that the transport plan $M$ is still an element of $\ccM_+(\ccX\times \ccX)$, allowing for modeling dynamical structures on the space generating data, i.e., $\ccX$, that are not present in the measurement space $\ccY$. 
Thus, in this context, the notion of partial information is twofold. Firstly, mass distributions are not observed directly, but only through linear measurements on a considerably smaller dimension, i.e., covariances in a finite-dimensional space as opposed to the infinite-dimensional space of measures for spectra. Secondly, these spectra are mappings of spectra that are elements of an even bigger space. We will in the next section see how these ideas, together with the concept of multi-marginal transport plans, can be used in order to arrive at compact formulations for spectral estimation.

\section{Multi-marginal optimal mass transport}\label{sec:omt_multi_margin}
It is worth noting that the formulation in \eqref{eq:omt_standard} seeks a transport plan connecting two mass distributions, or marginals, $\Phi_0$ and $\Phi_1$, which may be interpreted as a description of how to morph $\Phi_0$ into $\Phi_1$. A natural extension of this is to consider transport between a larger set of marginals, i.e., a set $\left\{ \Phi_\cT  \right\}_{t=0}^{\cT}$ for $\cT \geq 1$ \cite{pass2015multi}.
To this end, let $\bM \in \ccM_+(\bcX)$, where $\bcX=\cX_0\times \cX_1 \times \cdots \times \cX_\cT$, and define the set of projection operators $\projop{t}{}: \ccM(\bcX) \to \ccM(\ccX_t)$, for $t = 0,\ldots,\cT$, as
\begin{align} \label{eq:proj_operator}
	\projop{t}{}(\bM) = \int_{\ccX_0\times \cdots \times \ccX_{t-1}\times \ccX_{t+1}\times \cdots \times \ccX_\cT}\bM(x_0,\ldots,x_t,\ldots,x_\cT)dx_0\ldots dx_{t-1}dx_{t+1}\ldots dx_{\cT}.
\end{align}
It may be noted that one may express marginalization also for the standard two-marginal case using the operator $\projop{t}{}$. Then, $\bM \in \ccM_+(\bcX)$ is referred to as a multi-marginal transport plan for the set $\left\{ \Phi_t  \right\}_{t=0}^{\cT}$ if $\projop{t}{}(\bM) = \Phi_t$, for $t = 0,\ldots,\cT$. Note here that $\bM$ provides a complete description of the association of mass for the set $\left\{ \Phi_t  \right\}_{t=0}^{\cT}$, i.e., $\bM(x_0,x_1,\ldots,x_\cT)$ denotes the amount of mass at $x_0$, corresponding to the marginal $\Phi_0$, that is transported to location $x_t$, corresponding to marginal $\Phi_t$, for $t = 1,2,\ldots,\cT$. Correspondingly, one may define a cost function $\costfuncmulti: \bcX \to \RR_+$, yielding a generalization of the OMT problem as (see also \cite{pass2015multi})
\begin{equation} \label{eq:multi_margin_omt}
\begin{aligned}
	\minwrt[\bM \in \ccM_+(\bcX)]&\quad \int_{\bcX} \bM(\bx) \costfuncmulti(\bx) d\bx\\
	\text{subject to }&\quad  \projop{t}{} (\bM)  = \Phi_t, \; \quad \mbox{ for } t = 0,\ldots,\cT,
\end{aligned}
\end{equation}
where $\bx=(x_0,\ldots,x_\cT)$ and $d\bx=dx_0 dx_1\ldots dx_{\cT}$.
Thus, the minimizer of \eqref{eq:multi_margin_omt} is the transport plan associated with the minimal cost of transportation for the set $\left\{ \Phi_t  \right\}_{t=0}^{\cT}$. This type of formulation has previously found application in, e.g., tomography \cite{AbrahamABC17_75} and fluid dynamics \cite{Brenier08_237}, with actual numerical approximations and implementations being considered in \cite{benamou2015iterative}.
In this work, we consider extending this problem to scenarios in which the marginals $\left\{ \Phi_t  \right\}_{t=0}^{\cT}$ are not directly observable. Instead, we assume that one has access to partial information in the form of linear mappings of the marginals. In the context of spectral estimation, this corresponds to the temporal or spatial covariance matrix, or estimates thereof, i.e., a sequence $\left\{ R_t  \right\}_{t=0}^{\cT}$, where $R_t \in \RM^{p_t}$ is related to a marginal $\Phi_t \in \ccM_+(\ccX_t)$ through the measurement equation $R_t = \Gamma_t(\Phi_t)$. It should be noted that the operators $\Gamma_t: \ccM(\ccX_t) \to \RM^{p_t}$ may, in general, be different for different indices $t$. With this, the problem considered herein may be stated as
\begin{equation} \label{eq:multi_margin_omt_PI}
\begin{aligned}
	\minwrt[\substack{\bM \in \ccM_+(\bcX)\\ \Delta_t \in \RC^{p_t\times p_t}}]&\quad \int_{\bcX} \bM(\bx) \costfuncmulti(\bx) d\bx+\sum_{t=0}^\cT \gamma_t\|\Delta_t\|_{\rm F}^2\\
	\text{subject to }&\quad  \Gamma_t(\projop{t}{} (\bM))  = R_t+\Delta_t \; \quad \mbox{ for } t = 0,\ldots,\cT,
\end{aligned}
\end{equation}
where $\|\cdot \|_{\rm F}$ denotes the Frobenius norm and where $\gamma_t>0$, for $t = 0,1,\ldots,\cT$, are user-specified weights. 
Note that the measurement equations have been augmented by error terms $\Delta_t \in \RC^{p_t\times p_t}$ in order to allow for noisy covariance matrix estimates $R_t$. The parameters $\gamma_t$ then allow for 
determining a trade-off between how much consideration should be taken to the transport cost and the measurement error, respectively. 
For simplicity, we here penalize $\Delta_t$ by the squared Frobenius norm. However, if prior knowledge of the structure of the measurement noise is available, this may be replaced by other norms or penalty functions.
Also, in order to compute couplings between marginals corresponding to index pairs $(t,s)$, one requires the corresponding bi-marginal transport plan. Due to the structure of the multi-marginal transport plan $\bM$, this is directly computable by means of the bi-marginal projection operator $\projop{t,s}{}: \ccM_+(\bcX) \to \ccM_+(\ccX_t \times \ccX_s)$ defined by

\begin{align} \label{eq:proj_operator_bi_marginal}
	\projop{t,s}{}(\bM) = \int_{\ccX_0\times \cdots \times \ccX_{s-1}\times \ccX_{s+1} \times \cdots \times \ccX_{t-1}\times \ccX_{t+1}\times \cdots \times \ccX_\cT}\bM(x_0,\ldots,x_\cT)dx_0\ldots dx_{t-1}dx_{t+1}\ldots dx_{s-1}dx_{s+1}\ldots dx_{\cT}.
\end{align}
This may then be used, e.g., for interpolation between adjacent marginals for which the cost describes a dynamic model as in Section~\ref{sub_sec:omt_dyn}.
In Section~\ref{subsec:decoupling_costfuncs}, we present computationally efficient methods for computing discrete approximations of the projections in \eqref{eq:proj_operator} and \eqref{eq:proj_operator_bi_marginal}. In fact, computing the marginal projections in \eqref{eq:proj_operator} will constitute an integral component in the algorithm for solving \eqref{eq:multi_margin_omt_PI}, which will be presented in Section~\ref{sec:discrete_omt_PI}.

Next, we will see that a number of spectral estimation problems may be cast in the form \eqref{eq:multi_margin_omt_PI}. The first, referred to as sensor fusion, is concerned with combining measurements, in the form of covariance matrices, from several separate sensor arrays in order to, e.g., resolve ambiguities resulting from the geometry of individual arrays. It is here assumed that only one covariance matrix is available for each sensor array, and that there is no meaningful difference between the times at which these are collected. In contrast to this, in the second scenario, referred to as tracking, a sequence of covariance matrices is observed, and the problem of interest is to reconstruct a corresponding sequence of spatial spectra. In this setting, the dynamical OMT formulation described in Section~\ref{sub_sec:omt_dyn} serves as the building block for defining cost functions that represent a certain dynamics model. The last problem, referred to as barycenter tracking, aims to combine the former two approaches in order to form a robust reconstruction of a spectral trajectory. 
As will be clear in the following, the only thing distinguishing these spectral estimation problem is the choice of cost function $\costfuncmulti$. That is, depending on the structure of $\costfuncmulti$, the general problem in \eqref{eq:multi_margin_omt_PI} may be interpreted as a sensor fusion, tracking, or barycenter tracking problem.
All estimation problems in this section are formulated as convex programming problems. However, as they are defined on the space of measures, their dimensions are infinite. For ease of notation, we will in this description often assume that the domains of the marginals are the same, however, this generalizes straightforwardly to general domains.
In Sections~\ref{sec:entropy_reg_comp} and \ref{sec:discrete_omt_PI}, we address the problem of formulating practically implementable problems using discretized approximations, and present computationally efficient solvers.
%

\subsection{Sensor fusion} \label{sub_sec:sensor_fusion}
Consider a scenario in which waves impinge on a set of $J \in \RN$ sensor arrays, with corresponding array manifold vectors $a_j: \ccX\to \RC^{p_j}$ and operators $\Gamma_j: \ccM(\ccX) \to \RM^{p_j}$, for $j = 1,\ldots,J$, where it may be noted that the number of sensors, $p_j$, may differ among the different arrays. Then, a spatial spectrum $\Phi \in \ccM_+(\ccX)$ gives rise to a set of covariance matrices $R_j$, for $j = 1,\ldots,J$, where $R_j  = \Gamma_j(\Phi)$. Thus, as all sensor array measurements have been generated by the same signal, the problem of estimating the spatial spectrum $\Phi$, and thereby the location of the signal sources, corresponds to the inverse problem of finding $\Phi$ such that $R_j = \Gamma_j(\Phi)$, for $j = 1,\ldots,J$. However, in practice, calibration errors in the sensor arrays, i.e., errors in the operators $\Gamma_j$, may result in there existing no single $\Phi \in \ccM_+(\ccX)$ such that it is consistent with all array measurements. For cases in which the joint covariance matrix for the full set of sensors, or the signal itself, is available, this may be addressed using calibration methods such as \cite{PesaventoGW02_50,SeeG04_52}. However, in some scenarios, this information may not be available, due to, e.g., the covariance matrix for each separate array being estimated locally and then transmitted to a fusion center \cite{Kaplan06_42,LiWHS02_19}, with bandwidth limitations and/or synchronization errors between the separate arrays preventing the estimation of the joint covariance matrix, as well as direct processing of the signal itself \cite{Kaplan06_42}. Having only access to the separate covariance matrices, we herein propose to address this issue using a spectral barycenter formulation. Specifically, we propose to solve
\begin{equation} \label{eq:pair_wise_barycenter}
\begin{aligned}
	\minwrt[\substack{\Phi_j \in \ccM_+(\ccX)\\ \Delta_j \in \RC^{p_j\times p_j}}]&\quad \sum_{j=1}^J \Omtspec(\Phi_0, \Phi_j) + \gamma_j \norm{\Delta_j}_F^2\\
	\text{subject to }&\quad \Gamma_j(\Phi_j)  = R_j + \Delta_j, \; j = 1,\ldots,J, \\
\end{aligned}
\end{equation}
and estimate the spatial spectrum as the minimizer $\Phi_0$. Here, $\Omtspec$ is the OMT distance defined in \eqref{eq:omt_stationary_spectra}.
Note here that $\Phi_j$, for $j = 1,\ldots,J$, are spectra consistent with the set of observations $R_j$, whereas $\Phi_0$ is the spectrum closest, in the OMT sense, to this set of spectra, i.e., $\Phi_0$ is the spectral barycenter. The motivation for using the OMT distance $\Omtspec$, as opposed to, e.g., the standard $L_p$-norms, in forming the barycenter is that $\Omtspec$ explicitly takes the geometry of the underlying space, i.e., $\ccX$, into account, meaning that small values of $\Omtspec$ correspond to spectra that are close, in the sense of the peaks being close\footnote{In the localization example, this thus implies that the sources are physically close.}, as quantified by the cost function $\costfunc$. Choosing $\costfunc$ to be an intrinsic distance measure on $\ccX$, e.g., $c(x_0,x_1) = \norm{x_0-x_1}_2^2$, thus results in low values of $\Omtspec$ implying actual proximity, and vice versa. This also induces robustness to the estimate $\Phi$; small spatial perturbations imply small perturbations of the distance measure. In contrast, using $L_p$-norms as divergences on the space of spectra are inherently non-robust to such perturbations, as spatial perturbations are equivalent to changing the support of the corresponding spatial spectrum.

One may formulate the problem in \eqref{eq:pair_wise_barycenter} as a multi-marginal problem on the form \eqref{eq:multi_margin_omt_PI} by specifying the multi-marginal cost function $\costfuncmulti$ as
\begin{align}\label{eq:costfunc_central}
	\costfuncmulti(\bx) = \sum_{j=1}^J \costfunc(x_0,x_j),
\end{align}
where $\costfunc: \ccX \times \ccX \to \RR_+$ is the pair-wise cost function in the definition of $\Omtspec$. 
With this, the barycenter problem may be stated as
\begin{equation} \label{eq:multi_margin_omt_barycenter}
\begin{aligned}
	\minwrt[\substack{\bM \in \ccM_+(\bcX)\\ \Delta_j \in \RC^{p_j\times p_j}}]&\quad \int_{\bcX} \bM(\bx) \left( \sum_{j=1}^J \costfunc(x_0,x_j) \right) d\bx+\sum_{j=1}^J \gamma_j\|\Delta_j\|_{\rm F}^2\\
	\text{subject to }&\quad  \Gamma_j(\projop{j}{} (\bM))  = R_j+\Delta_j \; \quad \mbox{ for } j = 1,\ldots,J.
\end{aligned}
\end{equation}
To see that \eqref{eq:costfunc_central} indeed induces a barycenter solution when used in \eqref{eq:multi_margin_omt_barycenter}, note that the cost function $\costfuncmulti$ is structured as to penalize transport between a central marginal, corresponding to the index $j = 0$, and all other marginals, corresponding to indices $j = 1,\ldots,J$. As only marginals with indices $j = 1,\ldots,J$ correspond to actual measurements, this implies that the marginal $t = 0$, which is not constrained to be related to any measurements, should be close, in the OMT sense, to the entire ensemble of marginals.
Note here that the barycenter does not appear as an explicit variable, but is instead given by the projection of $\bM$ on the zeroth marginal, i.e., $\Phi = \projop{0}{} (\bM)$. 
It may be noted that, in general, one may utilize different OMT distances $S_j$ for different arrays, e.g., one may replace the sums in \eqref{eq:pair_wise_barycenter} and \eqref{eq:costfunc_central} with a weighted sum to get a weighted barycenter.
%
%
%
\subsection{Tracking}\label{sub_sec:tracking} 
The previous section introduced a method for fusing information from different sensor arrays as to produce an estimate of a single spatial spectrum. Formulated in this way, the time of the measurements does not play any role, i.e., the spectrum may either be considered to be a stationary estimate, or to correspond to a snapshot in time. In contrast to this, one may also consider a scenario in which a single sensor array\footnote{We present a formulation for handling multi-array cases in the next section.} is used to observe the evolution of a signal over time. 
Specifically, assume that estimates of the array covariance matrix are obtained at $\cT+1$ time instances corresponding to the nominal times $t = 0,1,\ldots,\cT$, giving rise to the sequence $\left\{ R_t \right\}_{t=0}^{\cT}$. 
Having access to an estimate of the sequence $\left\{ R_t \right\}_{t=0}^{\cT}$, one may naturally proceed to estimate a corresponding sequence of spectra, $\left\{ \Phi_t \right\}_{t=0}^{\cT}$, by considering each covariance matrix $R_t$ in isolation, thereby arriving at $\cT+1$ separate estimation problems. However, this approach ignores the quite reasonable assumption that the signal sources move continuously and hence, spatial spectra spaced closely in time should be similar. Arguably, using OMT in order to quantify similarity is a reasonable choice, as this divergence is directly related to proximity on the underlying space, distinguishing it from $L_p$-norms that lack this property. 
It should also be stressed that the OMT formulation does not require any \textit{a priori} knowledge of the number of signal sources, or the transmitted signal waveforms.
Also, as was shown in \cite{ElvanderJK18_66}, OMT formulations provide a powerful tool for finding smooth spectral trajectories interpolating noisy measurements. This tracking problem may be formulated as
\begin{equation}\label{eq:multi_fuse_spec}
\begin{aligned}
	\minwrt[\substack{\Phi_t \in \ccM_+(\ccX)\\ \Delta_t \in \RC^{p\times p}}]&\quad \sum_{t = 1}^TS_t(\Phi_{t-1},\Phi_t) + \gamma_t \norm{\Delta_t}_F^2\\
	\text{subject to }&\quad \Gamma_t\left(\Phi_t \right) = R_t + \Delta_t, \; t = 0,\ldots,\cT,
\end{aligned}
\end{equation}
i.e., the sequence $\left\{ \Phi_t \right\}_{t=0}^\cT$ minimizes a sequential OMT distance, while interpolating the (de-noised) measurement sequence $\left\{ R_t \right\}_{t=0}^\cT$. 
It may here be noted that the OMT distance, $\Omtspec_t$, is indexed by $t$, as to express that the underlying cost of transport may be time-varying.
Correspondingly, a multi-marginal formulation may be obtained by considering a cost function $\costfuncmulti$ such that
\begin{align} \label{eq:costfunc_sequential}
	\costfuncmulti(\bx) = \sum_{t=1}^\cT \costfunc_t(x_{t-1},x_t),
\end{align}
with the index $t$ expressing time-dependence, yielding
\begin{equation} \label{eq:multi_margin_omt_tracking}
\begin{aligned}
	\minwrt[\substack{\bM \in \ccM_+(\bcX)\\ \Delta_t \in \RC^{p\times p}}]&\quad \int_{\bcX} \bM(\bx) \left( \sum_{t=1}^\cT \costfunc_t(x_{t-1},x_t) \right) d\bx+\sum_{t=0}^\cT \gamma_t\|\Delta_t\|_{\rm F}^2\\
	\text{subject to }&\quad  \Gamma_t(\projop{t}{} (\bM))  = R_t+\Delta_t \; \quad \mbox{ for } t = 0,\ldots,\cT.
\end{aligned}
\end{equation}
Here, the estimated spectral sequence is given by the marginals of $\bM$, i.e., $\Phi_t = \projop{t}{}(\bM)$, for $t = 0,1,\ldots,\cT$. 
It is worth noting that the cost function $\costfuncmulti$ penalizes transport between consecutive marginals, i.e., between pairs corresponding to indices $(t-1,t)$, for $t = 1,\ldots,\cT$. When used in \eqref{eq:multi_margin_omt_tracking}, this implies that the underlying spatial spectrum is not expected to change rapidly over time.
Further, this formulation allows for exploiting assumptions of an underlying dynamic model, expressed in $\costfunc_t$ in \eqref{eq:costfunc_sequential}, as detailed in Section~\ref{sub_sec:omt_dyn}. Specifically, let the space on which the spatial spectra are defined be denoted $\ccY$, and let $\ccX$ correspond to a larger space, allowing for expressing the system's dynamics.
For example, in physically realistic applications, it is reasonable to assume that not only the movement speed of signal sources is finite, but also the acceleration. In this setting, one may, e.g., let the space $\ccX$ detail all pairs of location and velocity, whereas the measurement space $\ccY$ only details feasible locations (see also Example~\ref{example:angle_and_speed} below). In general, if the system description in \eqref{eq:system_eq} is used, with the observed state corresponding to the spatial spectrum, one may extend the operators $\Gamma_t$ to also include the push-forward operation, as detailed in \eqref{eq:gamma_push_forward}.
Here, the system dynamics are encoded into the tracking problem \eqref{eq:multi_margin_omt_tracking} by choosing $\costfunc_t$ as the quadratic cost in \eqref{eq:dynamic_cost}, as defined by the state space equation in \eqref{eq:dyn_eq}. The sequence of spatial spectra may then be obtained as $\Phi_t = F(t)_\# \projop{t}{}(\bM)$. We will in Section~\ref{sec:numerical_results} examine the implications of using a dynamical model as opposed to a static one when performing OMT-based spectral tracking.
\begin{example} \label{example:angle_and_speed}
Consider a DoA estimation scenario in $\RR^2$, in which a sequence of array covariance matrices is available, where one in addition to the direction of arrival, $\theta \in [-\pi,\pi)$ aims to model also the velocity of the targets, in order to enforce smooth trajectories. Letting $v(t) \in \RR$ denote the angular speed, i.e., $\dot{\theta}(t) = v(t)$, one may thus select $\ccY = [-\pi,\pi)$ and $\ccX = [-\pi,\pi) \times \RR$. In order to impose the assumption of smooth trajectories, one may then enforce finite acceleration by, in the state space representation in \eqref{eq:system_eq}, using the matrices
\begin{align*}
	A = \begin{bmatrix}
		0 & 1 \\ 0 & 0
	\end{bmatrix} \;,\; B = \left[\begin{array}{cc} 0 & 1  \end{array}\right]^T \;,\; F = \left[\begin{array}{cc} 1 & 0  \end{array}\right].
\end{align*}
It may be noted that the pair $(A,B)$ enforces that the angular position can only be influenced through its angular velocity, i.e., through acceleration, whereas the matrix $F$ expresses the fact that only the angular position is manifested in the array covariance matrix. The corresponding relation between spectra on $\ccX$ and $\ccY$, denoted $\Phi_t^\ccX$ and $\Phi_t^\ccY$, respectively, may then be expressed as
\begin{align*}
	\Phi^\ccY_t(\theta) = \left(F_\# \Phi^\ccX_t \right)(\theta) = \int_\RR \Phi_t^\ccX(\theta,v)dv,
\end{align*}
for $\theta \in [0,2\pi)$. The cost function $\costfunc$ implied by the LQ control signal $u$ is then given by (see, e.g., \cite{ChenK18_2})
\begin{align*}
	\costfunc(x_0,x_1) = \left(x_1 - \expm(A) x_0 \right)^TW(1,0)^{-1} \left(x_1 - \expm(A)x_0 \right)
\end{align*}
with
\begin{align*}
	\expm(A(t-\tau)) = \begin{bmatrix}
		1 & t-\tau \\ 0 & 1
	\end{bmatrix} \;,\;
	W(1,0) = \begin{bmatrix}
		1/3 & 1/2 \\ 1/2 & 1
	\end{bmatrix}.
\end{align*}
\end{example}\hfill$\qed$
%
%
%
%
%
%
%
%
%
%
\subsection{Combining tracking and barycenters} \label{subsec:barycenter_tracking}
In addition to the problems described in Sections~\ref{sub_sec:sensor_fusion} and \ref{sub_sec:tracking}, i.e., that of sensor fusion and spectral tracking, respectively, one may consider a combination of the two, i.e., the tracking of a spectral barycenter over time. This then allows for addressing issues of resolving spatial ambiguities, as well as promoting smooth spectral trajectories over time as to, e.g., induce robustness to noisy measurements. Therefore, consider $J$ sensor arrays observing a signal during $\cT+1$ time instances, giving rise to a set of covariance matrices $R_t^{(j)}$, for $j = 1,\ldots,J$ and $t = 0,\ldots,\cT$, where the subscript indicates time, and the superscript corresponds to the array index. Thus, we seek to estimate a sequence of spectral barycenters, $\left\{ \Phi_t \right\}_{t=0}^\cT$, given covariance matrix estimates $R_t^{(j)}$, for $t = 0,1,\ldots,\cT$ and $j = 1,2,\ldots,J$. In order to formulate the barycenter tracking problem, one may consider two separate types of transport costs; $\Omtspec_t$ detailing the OMT distance between consecutive barycenters, and $\tilde{\Omtspec}$ describing the distances between barycenters and spectra consistent with the array measurements. For example, the distance $\Omtspec_t$ may reflect underlying dynamics, e.g., expected movement, thereby allowing for incorporating assumptions of the time evolution of the spectrum. In contrast, $\tilde{\Omtspec}$ may correspond to a static OMT problem, corresponding to the assumption that the array covariance matrices are generated by the same spectrum. Practically, this may be implemented by using different costs of transport, $\costfunc$, for $\Omtspec_t$ and $\tilde{\Omtspec}$. 
Correspondingly, one may let the barycenter spectra be elements of $\ccM_+(\ccX)$, where the space $\ccX$ may detail the system dynamics, whereas the consistent spectra are elements of $\ccM_+(\ccY)$, where $\ccY$ is the measurement space.
With this, the barycenter tracking problem may be formulated as
\begin{equation} \label{eq:barycenter_tracking}
\begin{aligned}
	\minwrt[\substack{\Phi_t \in \ccM_+(\ccX)\\ \Phi_t^{(j)} \in \ccM_+(\ccY)\\ \Delta_t^{(j)} \in \RC^{p_j \times p_j}}]&\quad \sum_{t = 1}^{\cT}\Omtspec_t\left(\Phi_{t-1},\Phi_{t}\right) + \alpha \sum_{t=0}^\cT\sum_{j=1}^J \left(\tilde{\Omtspec}\left(\Phi_t^{(j)},F(t)_\#\Phi_t\right) +\gamma^{(j)} \norm{\Delta_t^{(j)}}_F^2\right) \\
	\text{subject to }&\quad \Gamma^{(j)}\left(\Phi_t^{(j)} \right) = R_t^{(j)} + \Delta_t^{(j)}, \;  \mbox{ for } t = 0,\ldots,\cT \text{ and } j = 1,\ldots,J,
\end{aligned}
\end{equation}
where $\alpha>0$ is a user-specified constant allowing for deciding a trade-off between the smoothness of the tracking and the distance between a barycenter $\Phi_t$ and its corresponding consistent spectra $\Phi_t^{(j)}$, for $t = 0,1,\ldots,\cT$ and $j = 1,2,\ldots,J$.
It may here be noted the latter distance is stated in terms of the push-forward of the barycenter spectrum, thereby allowing for some states to be measured (e.g., position) whereas others are not (e.g., velocity). Also, the operators $\Gamma^{(j)}$ are indexed by $j$, reflecting differences in geometry and location between the different arrays.
In order to arrive at a multi-marginal formulation on the form \eqref{eq:multi_margin_omt_PI}, one may define the cost function $\costfuncmulti$ as
\begin{align} \label{eq:cost_tensor_barycenter_tracking}
	\costfuncmulti(\bx) = \sum_{t=1}^{\cT} \costfunc_t(x_{t-1,0},x_{t,0}) + \alpha\sum_{t=0}^\cT \sum_{j=1}^J \tilde{\costfunc}(x_{t0},x_{tj})
\end{align}
where $\bx \in \bcX^J = \bcX \times \bcX \times \ldots \times \bcX$, which is the direct product of $\cT+1$ instances of $\bcX$, each corresponding to a barycenter problem at one time point. That is, this space is a product of the form 
$\bcX= \cX \times \ccY \times \dots \times \ccY$, with one instance of $\cX$ and $J$ instances of $ \ccY$.
Here, $\bx = (\bx_{0},\bx_{1},\ldots,\bx_\cT)$, where
\begin{align*}
	\bx_t = (x_{t0},x_{t1},\ldots,x_{tJ}),
\end{align*}
where the first index indicate the time $t$, for $t = 0,1,\ldots,\cT$, and the second index indicates the array $j$, for $j = 0,1,\ldots,J$, with 0 corresponding to the barycenter. Here, $\costfunc_t$ denotes the cost of transport for the tracking, whereas $\tilde{\costfunc}$ defines the cost of transportation between the barycenters and the marginal spectra. With this setup, one may formulate the multi-marginal problem as
\begin{equation} \label{eq:multi_margin_omt_barycenter_tracking}
\begin{aligned}
	\minwrt[\substack{\bM \in \ccM_+(\bcX^J)\\ \Delta_t^{(j)} \in \RC^{p_j\times p_j}}]&\quad \int_{\bcX^J} \bM(\bx) \costfuncmulti(\bx)d\bx+\alpha \sum_{t=0}^\cT \sum_{j=1}^J\gamma_t^{(j)}\|\Delta_t^{(j)}\|_{\rm F}^2\\
	\text{subject to }&\quad  \Gamma^{(j)}(\projop{tj}{} (\bM))  = R_t^{(j)}+\Delta_t^{(j)} \; \quad \mbox{ for } t = 0,\ldots,\cT \text{ and } j = 1,\ldots,J.
\end{aligned}
\end{equation}
Here, the operator $\projop{tj}{}$ projects onto the marginal $(t,j)$ of $\bM$ in direct analogy with \eqref{eq:proj_operator}, which for each time $t$ correspond to a spectrum consistent with the measurements of array $j$, for $j = 1,2,\ldots,J$. The corresponding sequence of barycenters may then be constructed as $\Phi_t = \projop{t0}{} (\bM)$, for $t = 0,1,\ldots,\cT$. It may be noted that the problem in \eqref{eq:multi_margin_omt_barycenter_tracking}, through the use of the cost function in \eqref{eq:cost_tensor_barycenter_tracking}, is a direct hybrid of the sensor fusion and the tracking problems from Sections~\ref{sub_sec:sensor_fusion} and \ref{sub_sec:tracking}.

Having shown that the multi-marginal OMT problem with partial information in \eqref{eq:multi_margin_omt_PI} may be used to formulate these spectral estimators, we proceed to present the main results of this work, i.e., how to construct discrete approximations of \eqref{eq:multi_margin_omt_PI} that allow for computationally efficient solvers. Illustrative numerical examples of the behavior of \eqref{eq:multi_margin_omt_PI} in spatial spectral estimation scenarios are then provided in Section~\ref{sec:numerical_results}.
%
\section{Entropy regularization for structured multimarginal OMT problems}\label{sec:entropy_reg_comp}

The multi-marginal OMT problem with partial information in \eqref{eq:multi_margin_omt_PI} is an infinite-dimensional problem, as the multi-marginal transport plan $\bM$ is an element of the function space $\ccM_+(\bcX)$. Fortunately, one may in practical implementations approximate \eqref{eq:multi_margin_omt_PI} by a discrete counterpart. In this section, we present the necessary building blocks for deriving a computationally efficient solution algorithm for such a problem. Starting by describing discretization of OMT problems with full information, i.e., problems of the form \eqref{eq:omt_standard}, followed by multi-marginal generalizations, we show that the critical component of the resulting practical algorithm is the possibility to efficiently compute the projections in \eqref{eq:proj_operator}. Section~\ref{sec:discrete_omt_PI} then presents an algorithm for solving the discrete counterpart of \eqref{eq:multi_margin_omt_PI}.
\subsection{Sinkhorn iterations for the bi-marginal optimal mass transport problem} 
In order to discretize the OMT problem in \eqref{eq:omt_standard}, one may construct a grid consisting of $n$ grid points $X = \left\{ \discstate_1, \discstate_2, \ldots, \discstate_n \right\}$ on the support of the marginals. This allows for representing the cost function $\costfunc$ on $X$ by a matrix $C\in\RR^{n\times n}$, where the element $c_{ij} = \costfunc(\discstate_i,\discstate_j)$ denotes the cost of transporting a unit mass from point $\discstate_i$ to $\discstate_j$. Correspondingly, the transport plan may be represented by a matrix $M\in\RR_+^{n\times n}$, whose elements $m_{ij}$ denote the amount of mass that is transported from $\discstate_i$ to $\discstate_j$. The Kantorovich optimal mass transport problem in discretized form then reads
\begin{equation}
\begin{aligned}
\minwrt[M \in \RR_+^{n\times n}]  \quad & \trace{C^T M} \label{eq:omt_discrete}\\
  \mbox{ subject to } \quad &\Phi_0=M \ett \\
  &\Phi_1=M^T \ett,
\end{aligned}
\end{equation}
where $\trace{\cdot}$ denotes the trace operation, and where $\ett \in \RR^n$ is a vector of all ones. It may be noted that the two marginal distributions are represented by $\Phi_0, \Phi_1 \in \RR_+^n$. Here, in direct correspondence with \eqref{eq:omt_standard}, the objective function $\trace{C^TM} = \sum_{i,j} c_{ij} m_{ij}$ gives the total cost of transport as implied by $M$, whereas the marginal constraints translate to the requirement that the row and column sums of $M$ equal $\Phi_0$ and $\Phi_1$, respectively.
Although \eqref{eq:omt_discrete} is a linear program, finding an optimal transport plan can be computationally cumbersome due to the large number of variables. In particular, a grid of size $n$ yields an OMT problem with $n^2$ variables. It may be noted that standard linear programming methods require $\mathcal{O}(n^3\log(n))$ operations to find the optimum for such a problem \cite{PeleW09}, proving infeasible even for moderate grid sizes.

One way to bypass the expensive computations inherent in finding the exact transport plan is to regularize the OMT problem, yielding an approximation of \eqref{eq:omt_discrete}. To this end, it was first suggested in \cite{Cuturi13} to introduce an entropy term to the objective function. In this setting, the problem in \eqref{eq:omt_discrete} is modified to
\begin{equation}
\begin{aligned} 
\minwrt[M \in \RR^n]  \quad & \trace{C^T M} + \epsilon \cD(M) \label{eq:omt_regularized} \\
  \mbox{ subject to } \quad &\Phi_0=M \ett \\
  &\Phi_1=M^T \ett,
\end{aligned}
\end{equation}
where
\begin{align*}
	\cD(M) \triangleq \sum_{i,j=1}^n \left( m_{ij} \log(m_{ij}) - m_{ij} + 1 \right)
\end{align*}
is an entropy term and $\epsilon>0$ a small constant. The entropy regularized OMT problem is related to the Schrödinger bridge problem, which studies the most likely evolution of a particle cloud observed at two different time instances \cite{ leonard2013schrodinger, chen2016relation}. Note that due to the entropy term, the optimal transport plan $M$ to problem \eqref{eq:omt_regularized} always has full support, whereas the solution to \eqref{eq:omt_discrete} may be sparse. 
However, in contrast to \eqref{eq:omt_discrete}, the regularized problem \eqref{eq:omt_regularized} is strictly convex and thus always guarantees a unique solution. Further, it can be shown that, as $\epsilon \to 0$, the solution of \eqref{eq:omt_regularized} converges to the solution of \eqref{eq:omt_discrete} with maximal entropy \cite{Cuturi13}. More importantly, though, the formulation in \eqref{eq:omt_regularized} allows for deriving an efficient method to find the (approximate) optimal transport plan $M$, as it implies a low-dimensional representation requiring only $2n$ variables instead of the original $n^2$ variables. Specifically, the regularized problem allows for expressing the solution $M$ through diagonal scaling of an $n\times n$ constant matrix. To see this, consider the Lagrange function of \eqref{eq:omt_regularized} with dual variables $\lambda_0, \lambda_1 \in \RR^n$, i.e., 
\begin{equation*}
L(M,\lambda_0,\lambda_1) = \trace{C^T M} + \epsilon \cD(M) + \lambda_0^T (\Phi_0- M\ett) + \lambda_1^T( \Phi_1 - M^T \ett).
\end{equation*}
For fixed dual variables, the minimum of $L$ with respect to $M$ is attained when the gradient with respect to the matrix entries $m_{ij}$ vanishes, i.e.,
\begin{align*}
& 0 = \frac{\partial L}{\partial m_{ij}} = c_{ij} + \epsilon \log(m_{ij}) - (\lambda_0)_i - (\lambda_1)_j,
\end{align*}
implying that
\begin{align*}
	m_{ij} = e^{ (\lambda_0)_i / \epsilon} e^{-c_{ij}/\epsilon} e^{ (\lambda_1)_j / \epsilon}.
\end{align*}
Hence, the solution is of the form
\begin{equation} \label{eq:transfer_matrix}
M = \diag(u_0) K \diag(u_1),
\end{equation}
where $K = \exp(-C/\epsilon)$, $u_0=\exp(\lambda_0/\epsilon), u_1=\exp(\lambda_1/\epsilon)$, and where $\exp(\cdot)$ denotes element-wise application of the exponential function. Then, as the matrix $K$ has strictly positive elements, it follows from Sinkhorn's theorem \cite{Sinkhorn67_74} that there is a unique matrix $M$ on the form \eqref{eq:transfer_matrix} with prescribed strictly positive row and column sums $\Phi_0$ and $\Phi_1$.
Moreover, the two positive vectors $u_0$ and $u_1$ are unique up to multiplication with a scalar and may be found via Sinkhorn iterations, i.e., by iteratively updating $u_0$ and $u_1$ as to satisfy the marginal constraints. Specifically, the Sinkhorn iterations (Algorithm~\ref{alg:sinkhorn}) are given by \cite{Cuturi13}
\begin{equation} \label{eq:sinkhorn}
\begin{aligned}
u_0 & = \Phi_0./ (Ku_1)  \\
u_1 & = \Phi_1./ (K^T u_0), 
\end{aligned}
\end{equation}
where $./$ denotes elementwise division. These iterations converge linearly \cite{LorenzF89_114_115}, with the computational bottleneck of the scheme being the two matrix-vector multiplications. The Sinkhorn iterations thus provide an efficient technique for finding approximate solutions to OMT problems.
Parallels between the Sinkhorn iteration scheme and established algorithms for convex optimization problems have been explored in several works. For instance, it has been shown that Sinkhorn iterations correspond to iterative Bregman projections \cite{benamou2015iterative}, and they have been derived as a block-coordinate ascent in the dual formulation of \eqref{eq:omt_regularized} \cite{KarlssonR17_10}. In this work, we utilize the latter result in Section~\ref{sec:discrete_omt_PI} in order to develop new Sinkhorn-type schemes for OMT problems with partial information of the marginals.

\begin{algorithm}
\begin{algorithmic}
\STATE Given: Initial guess $u_t$, for $t=0,1$
\WHILE{Sinkhorn not converged}
	\STATE $u_0  \leftarrow \Phi_0./ (Ku_1)$ 
	\STATE $u_1  \leftarrow \Phi_1./ (K^T u_0)$ 
\ENDWHILE
\RETURN $M \leftarrow \diag(u_0) K \diag(u_1)$
\end{algorithmic}
\caption{Sinkhorn method for the optimal mass transport problem \cite{Cuturi13}.
}\label{alg:sinkhorn}
\end{algorithm}

\subsection{Sinkhorn iterations for the multimarginal optimal mass transport problem} \label{sec:multimarginal_sinkhorn}
Consider a discretized and entropy regularized formulation of the multi-marginal OMT problem in \eqref{eq:multi_margin_omt}, i.e., a direct multi-marginal generalization of \eqref{eq:omt_discrete}.
The discretized marginals are then represented by vectors $\Phi_t\in\RR^n$ for $t=0,\dots,\cT$, with $\cT \geq 1$. In the multi-marginal setting, the mass transport plan and cost function are described by $(\cT+1)$-dimensional tensors $\bM, \bC \in \RR^{n^{\cT+1}}$. 
For ease of notation, we will herein assume that all marginals are discretized using the same number of points. However, this, as well as all presented results, generalizes straightforwardly to the case when the number of discretization points differ among the marginals.
As a discrete analog to the projection in \eqref{eq:proj_operator}, let $\projopdisc{t}{}: \RR^{n^{\cT+1}} \to \RR^n$ be the projection on the $t$-th marginal of a tensor $\bM$, which is computed by summing over all modes of $\bM$ except $t$, i.e., the $j$-th element of the vector $\projopdisc{t}{}( \bM ) \in \RR^n$ is given by
\begin{equation*}
\projopdisc{t}{}( \bM )_j = \sum_{ \substack{i_0,\dots, i_{t-1},\\ i_{t+1},\dots,i_\cT}} \bM_{i_0,\dots,i_{t-1},j,i_{t+1},\dots,i_\cT}.
\end{equation*}
Similarly, in analog to the bi-marginal projection in \eqref{eq:proj_operator_bi_marginal}, let $\projopdisc{t_1,t_2}{}( \bM )\in \RR^{n\times n}$ denote the projection of $\bM$ on the two joint marginals $t_1$ and $t_2$, i.e.,
\begin{equation*}
\projopdisc{t_1,t_2}{}( \bM )_{j,\ell} = \sum_{ \substack{i_t: \\t\in \{0,\ldots,\cT\}\setminus \{t_1,t_2\}}} \bM_{i_0,\dots,i_{t_1-1},j,i_{t_1+1},\dots,i_{t_2-1},\ell,i_{t_2+1},\dots,i_\cT}.
\end{equation*}
With this, the discretized and entropy regularized multi-marginal OMT problem is given by
\begin{equation} \label{eq:multimarginal_entropyreg_omt}
\begin{aligned}
\minwrt[\bM \in \RR^{n^{\cT+1}} ] \ &  \langle \bC,\bM \rangle  + \epsilon \cD(\bM) \\
\mbox{subject to } \ & \projopdisc{t}{}(\bM) = \Phi_t, \quad t=0,\dots,T,
\end{aligned}
\end{equation}
where $\langle \bC,\bM \rangle \triangleq\sum_{i_0,\dots,i_T} \bM_{i_0,\dots,i_T} \bC_{i_0,\dots,i_T}$, and where $\epsilon>0$ is a regularization parameter, with the definition of the entropy term being generalized to tensors as
\begin{equation*}
\cD(\bM) \triangleq \sum_{i_0,\dots,i_\cT} \left(\bM_{i_0,\dots,i_\cT} \log\left( \bM_{i_0,\dots,i_\cT} \right) - \bM_{i_0,\dots,i_\cT} + 1\right).
\end{equation*}

In direct analog to the bi-marginal case, by considering the Lagrangian relaxation \eqref{eq:multimarginal_entropyreg_omt}, one finds that the transport tensor may be represented by "diagonal" scaling of a constant tensor. Specifically, one may write $\bM = \bK \odot \bU$ for the two tensors $\bK, \bU \in\RR^{n^{\cT+1}}$, given by $\bK=\exp(-\bC/\epsilon)$ and 
\begin{equation*}
\bU = (u_0\otimes u_1\otimes \cdots \otimes u_\cT), \quad \iff \bU_{i_0,\dots,i_\cT} = \prod_{t=0}^\cT (u_t)_{i_t},
\end{equation*}
for a set of vectors $u_0,\dots,u_\cT\in \RR^n$.
It may be noted that this representation for the mass transfer tensor $\bM$ is a direct generalization of the representation in the standard OMT theory. That is, in the bi-marginal case we have that $\bM$, $\bK$, and $\bU = u_0 \cdot u_1^T$ are matrices, and
\begin{equation*}
\bM = \bK \odot \bU = \bK \odot (u_0 \cdot u_1^T) = \diag(u_0) \bK \diag(u_1),
\end{equation*}
as in \eqref{eq:transfer_matrix}. Thus, the problem in \eqref{eq:multimarginal_entropyreg_omt} may be reduced to finding $\cT+1$ scaling vectors in $\RR^n$ instead of directly optimizing over the complete mass tensor $\bM \in \RR^{n^{\cT+1}}$. Interestingly, the scaling vectors $u_0,\dots,u_\cT$ are unique up to a scalar for any $\cT \geq 1$ \cite{FranklinPA94}. Furthermore, they may be determined via a Sinkhorn iteration scheme similar to \eqref{eq:sinkhorn}, as presented in \cite{benamou2015iterative} for multi-marginal OMT problems. For completeness, we state the result here: given an initial set of positive vectors $u_t\in\RR^n$, for $t = 0,1,\ldots, \cT+1$, the Sinkhorn method (Algorithm~\ref{alg:multimarginal_sinkhorn}) is to iteratively update according to
\begin{equation} \label{eq:multimarginal_sinkhorn}
u_t \leftarrow u_t\odot \Phi_t./\projopdisc{t}{}(\bK \odot \bU)  , \quad t=0,1,\dots, \cT.
\end{equation}
It may be noted that also this scheme reduces to the standard Sinkhorn iterations \eqref{eq:sinkhorn} for the case $\cT=1$, i.e., a bi-marginal problem.
In \cite{benamou2015iterative}, these iterations were derived based on projections using the Kullback-Leibler divergence as a distance measure. In contrast, we here derive them as a block-coordinate ascent in a Lagrange dual of \eqref{eq:multimarginal_entropyreg_omt}, extending the results for the bi-marginal problem presented in \cite{KarlssonR17_10}. These results are presented in Section~\ref{sec:discrete_omt_PI} for the more general setting 
where only partial information of the marginals $\Phi_t$ is available for $t = 0,1,\ldots,\cT$, which the problem with full information in \eqref{eq:multimarginal_entropyreg_omt} is a special case of. 

\begin{algorithm}
\begin{algorithmic}
\STATE Given: Initial guess $u_t>0,$ for $t=0,\ldots,\cT$; starting point $t$
\WHILE{Sinkhorn not converged}
	\STATE  
		$u_t \leftarrow  u_t\odot \Phi_t./\projopdisc{t}{}(\bK \odot \bU)$  ,
	\STATE $t \leftarrow t + 1 \;({\rm mod}\; \cT+1)$
\ENDWHILE
\RETURN $\bM=\bK \odot \bU$
\end{algorithmic}
\caption{Sinkhorn method for the multimarginal optimal mass transport problem \cite{benamou2015iterative}.}\label{alg:multimarginal_sinkhorn}
\end{algorithm}

\subsection{Efficient computation for decoupling cost functions} \label{subsec:decoupling_costfuncs}
In many applications, the size of the mass transport tensor $\bM$, and thus of the auxiliary tensors $\bK$ and $\bU$, may be too large to manipulate directly. It is thus crucial to utilize additional structures in the problem whenever this is possible. In this section, some examples for structures in the cost tensor $\bC$ are described, allowing for efficient computation of subproblems in the multi-marginal Sinkhorn algorithm, detailed in Algorithm~\ref{alg:multimarginal_sinkhorn}.
The computational bottleneck of the Sinkhorn scheme in \eqref{eq:multimarginal_sinkhorn} is the computation of $\projopdisc{t}{}(\bK \odot \bU)$, where both the elementwise multiplication $\bK \odot \bU$ and the application of the projection operator $\projopdisc{t}{}$ may be expensive. However, in some cases, the choice of cost function, $\costfunc$, induces structure in the tensor $\bK$ that may be exploited in order to dramatically increase the efficiency of computing $\projopdisc{t}{}(\bK \odot \bU)$. Here, we will consider cases relevant for the applications presented in Section~\ref{sec:omt_multi_margin}.
In fact, for these cases, the computation of $\projopdisc{t}{}(\bK \odot \bU)$ may be performed by sequences of matrix-vector multiplications of the form $K u$, where $K$ is a matrix and $u$ is a vector, as detailed in the propositions in the following sections.
The proofs of the propositions may be found in Appendix~\ref{appendix:proofs}.
\subsubsection{Sequentially decoupling cost function}
Consider problems on the form \eqref{eq:multi_margin_omt} where the cost function $\costfuncmulti: \bcX \times \bcX \to \RR_+$ decouples sequentially according to \eqref{eq:costfunc_sequential}, which as detailed in Section~\ref{sub_sec:tracking} may be utilized in order to model sequential tracking over time. The following proposition shows how this special case allows for efficient computations of the projections $\projopdisc{t}{}(\bK \odot \bU)$ and $\projopdisc{t_1,t_2}{}(\bK \odot \bU)$ in the discrete OMT problem.
%
%
\begin{proposition} \label{prp:sequential_cost}
Let the elements of the cost tensor $\bC$ in \eqref{eq:multimarginal_entropyreg_omt} be of the form
\begin{equation*}
\bC_{i_0,\dots,i_\cT} = \sum_{t=1}^\cT C_{i_{t-1}, i_t},
\end{equation*}
for a cost matrix $C\in\RR^{n\times n}$, and let $\bK = \exp(-\bC/\epsilon)$, $K=\exp(-C/\epsilon)$, and $\bU=(u_0\otimes u_1 \otimes\cdots\otimes u_T)$. Then, for $0 \le t\le \cT$,
\begin{align*}
\projopdisc{t}{}(\bK \odot \bU)=&\left( u_0^T K \diag(u_1) K \dots K\diag(u_{t-1}) K \right)^T  \odot u_t \odot \left( K\diag(u_{t+1})K \ldots K \diag(u_{\cT-1}) K u_\cT \right)\\
\end{align*}
and, for $0 \le t_1< t_2\le \cT$,
\begin{equation} \label{eq:coupling_tracking}
\begin{aligned}
\projopdisc{t_1,t_2}{}(\bK \odot \bU)=&\diag\left( u_0^T K \diag(u_1) K \cdots K\diag(u_{t_1-1}) K \right)  \\
&\odot \diag(u_{t_1}) K\diag(u_{t_1+1}) K\ldots K \diag(u_{t_2-1})K \diag(u_{t_2})  \\
&\odot \diag\left( K\diag(u_{{t_2}+1})K \ldots K \diag(u_{\cT-1}) K u_\cT \right).
\end{aligned}
\end{equation}
\end{proposition}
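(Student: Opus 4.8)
The plan is to expand the definition of the projection operator directly and recognize the resulting nested sums as matrix products. Recall that $\bK = \exp(-\bC/\epsilon)$ with $\bC_{i_0,\dots,i_\cT} = \sum_{s=1}^\cT C_{i_{s-1},i_s}$, so by the additivity of $\bC$ over consecutive index pairs and the fact that the exponential turns sums into products, the kernel tensor factors as $\bK_{i_0,\dots,i_\cT} = \prod_{s=1}^\cT K_{i_{s-1},i_s}$ where $K = \exp(-C/\epsilon)$. Combining this with the rank-one structure $\bU_{i_0,\dots,i_\cT} = \prod_{r=0}^\cT (u_r)_{i_r}$, the elementwise product becomes
\begin{equation*}
(\bK \odot \bU)_{i_0,\dots,i_\cT} = (u_0)_{i_0} \left( \prod_{s=1}^\cT K_{i_{s-1},i_s} (u_s)_{i_s} \right),
\end{equation*}
which is exactly the form of a ``transfer-matrix'' chain, with each factor $K_{i_{s-1},i_s}(u_s)_{i_s}$ being the $(i_{s-1},i_s)$ entry of $K\diag(u_s)$.

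Next I would carry out the marginalization in $\projopdisc{t}{}$. Summing over all indices except $i_t = j$, the chain splits at position $t$ into a ``left'' part and a ``right'' part that share no summed indices, so the sum factors. The left part, summing over $i_0,\dots,i_{t-1}$, telescopes into the scalar-times-vector
\begin{equation*}
\sum_{i_0,\dots,i_{t-1}} (u_0)_{i_0} K_{i_0,i_1}(u_1)_{i_1}\cdots K_{i_{t-1},j} = \left( u_0^T K\diag(u_1)K\cdots K\diag(u_{t-1})K \right)_j,
\end{equation*}
the right part similarly telescopes into $\left(K\diag(u_{t+1})K\cdots K\diag(u_{\cT-1})Ku_\cT\right)_j$, and the middle factor $(u_t)_j$ survives untouched; taking the elementwise product of these three vectors in $\RR^n$ gives the claimed expression. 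For $\projopdisc{t_1,t_2}{}$ the argument is the same except that the chain now splits into three pieces: the segment before $t_1$ (fully summed, giving a row vector that becomes a diagonal), the segment from $t_1$ to $t_2$ (a matrix $\diag(u_{t_1})K\diag(u_{t_1+1})\cdots K\diag(u_{t_2})$ since only the interior indices $i_{t_1+1},\dots,i_{t_2-1}$ are summed), and the segment after $t_2$ (fully summed, giving another vector-as-diagonal); multiplying these as the elementwise (Hadamard) product of an $n\times n$ matrix with two diagonal matrices yields \eqref{eq:coupling_tracking}. One must also handle the boundary cases $t=0$, $t=\cT$, $t_1=0$, $t_2=\cT$ separately, where the corresponding left or right chain is empty and the factor reduces to $u_0$, $u_\cT$, or $\ett$ as appropriate; this is bookkeeping rather than substance.

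The main obstacle, and the only place where care is genuinely needed, is the index bookkeeping: making sure the split of the index set $\{0,\dots,\cT\}\setminus\{t\}$ (or $\setminus\{t_1,t_2\}$) into the left/middle/right blocks is clean, that no $K$ factor is double-counted or dropped at the split points, and that the $\diag(u_s)$ factors land on the correct side of each $K$ (the convention here attaches $(u_s)_{i_s}$ to the factor $K_{i_{s-1},i_s}$, i.e.\ to its right). A convenient way to make this rigorous without a messy multi-index computation is induction on $\cT$: peel off the last index $i_\cT$, use the definition of matrix-vector multiplication to absorb $\sum_{i_\cT} K_{i_{\cT-1},i_\cT}(u_\cT)_{i_\cT}$, and invoke the inductive hypothesis on the reduced $(\cT-1)$-marginal chain; but the direct telescoping argument above is equally valid and arguably more transparent.
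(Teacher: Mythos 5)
Your proof is correct, and the underlying computation---factoring $\bK$ over consecutive index pairs into a transfer-matrix chain and telescoping the split sums---is exactly the identity the paper relies on. The packaging differs: rather than splitting the marginal sum directly as you do, the paper first evaluates the full contraction $\langle \bK,\bU\rangle = u_0^T K\diag(u_1)K\cdots K\diag(u_{\cT-1})Ku_\cT$ and then extracts the projections via two small lemmas (Lemmas~\ref{lm:projection} and~\ref{lm:coupling}), whose proofs replace $u_t$ by $u_t\odot e_{i_t}$ in that closed form to pick out the $i_t$-th component. That organization buys reusability---the same two lemmas are invoked again verbatim in the proofs of Propositions~\ref{prp:barycenter_cost} and~\ref{prp:combined_cost}---whereas your direct telescoping is self-contained and, as you say, arguably more transparent for this one proposition. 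Your attention to the boundary cases $t=0,\cT$ is a point the paper leaves implicit. One notational caution: the $\odot$ joining the $\diag(\cdot)$ factors in \eqref{eq:coupling_tracking} must be read as ordinary matrix multiplication, i.e.\ $\diag(w_1\odot u_{t_1})\,W_2\,\diag(u_{t_2}\odot w_3)$ as in Lemma~\ref{lm:coupling}; a literal Hadamard product with a diagonal matrix would kill the off-diagonal entries. Your phrase ``elementwise (Hadamard) product'' inherits this abuse from the statement, but the entrywise formula your telescoping produces, $(w_1)_j(u_{t_1})_j(W_2)_{j\ell}(u_{t_2})_\ell(w_3)_\ell$, is the correct one, so the argument stands.
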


\begin{proof}
	See Appendix~\ref{appendix:proofs}.
\end{proof}
%
It is worth noting that the computation of $\projopdisc{t}{}(\bK \odot \bU)$ requires only $\cT$ matrix-vector multiplications of the form $K u_\tau$, i.e., $\mathcal{O}(\cT n^2)$ operations, whereas the complexity for a brute-force approach is $\mathcal{O}(n^{\cT+1})$ operations. The same holds for the multiplication of $\projopdisc{t_1,t_2}{}(\bK \odot \bU)$ with a vector. Furthermore, if the domains of the marginals have cost functions that decouple then the efficiency of the matrix vector computations can be improved even further, as described in the following remark. 

\begin{remark} \label{remark:tensor_mode_product}
In many multidimensional problems, the domain $X$ can be represented as a direct product $X=X_1\times X_2\times\cdots \times X_N$, with $|X_i|=n_i$. If, in addition, the  cost decouples in the same way, i.e.,  for two points $\discstate_0=\left(\discstate_0^{(1)}, \discstate_0^{(2)},\ldots, \discstate_0^{(N)}\right) \in X $ and $\discstate_1=\left(\discstate_1^{(1)}, \discstate_1^{(2)},\ldots, \discstate_1^{(N)}\right) \in X$,
		\begin{equation*} 
		c(\discstate_0,\discstate_1) = \sum_{i=1}^N c_i\left(\discstate_0^{(i)},\discstate_1^{(i)}\right)
		\end{equation*}
		where $c_i:X_i\times X_i\to\RR$ is a cost function in the $i$-th dimension, for $i=1,\ldots,N$, then the matrix $K=\exp(-C/\epsilon) \in \RR^{n\times n}$, where $n = \prod_{i=1}^N n_i$, can be decoupled as a tensor product $K=K_1\otimes K_2\otimes\cdots \otimes K_N$. 
		Multiplications with $K$ can then be done one dimension at a time. Practically, this can be efficiently implemented using specialized tensor toolboxes, e.g., with the command \texttt{tmprod} of the Matlab toolbox Tensorlab \cite{VervlietDSBL16_tensorlab}.
		Thus, the multiplication $K u$ requires only $\mathcal{O}(n \sum_i n_i)$ operations instead of $\mathcal{O}(n^2)$ which is required when using standard multiplication. From a memory perspective, this is also important since $K$ does not need to be computed and stored. In particular, this holds for a regular rectangular grid in $\mR^N$ with cost function $c(x,y)=\|x-y\|_p^p$, for some $p\geq 1$, which is the typical setup for the Wasserstein metric.
\end{remark}

\begin{remark}
Note that with a sequential cost as in Proposition \ref{prp:sequential_cost}, the multi-marginal OMT problem \eqref{eq:multimarginal_sinkhorn} may be separated into optimization problems containing only two marginals each:
\begin{equation*}
T(\Phi_{t-1}, \Phi_t),  \quad t=1,\dots,\cT.
\end{equation*}
\end{remark}
%

\subsubsection{Centrally decoupling cost function}
In order to model the barycenter formulation of OMT, Section~\ref{sub_sec:sensor_fusion} presented the use of centrally decoupling cost functions $\costfuncmulti$ according to \eqref{eq:costfunc_central}, as to model transport between a common barycenter and several marginal distributions. The following proposition describes the projection computations for this case.
%
%
\begin{proposition} \label{prp:barycenter_cost}
Let the elements of the cost tensor $\bC$ in \eqref{eq:multimarginal_entropyreg_omt} be of the form
\begin{equation*}
\bC_{i_0,\dots,i_J} = \sum_{j=1}^J C_{i_0, i_j},
\end{equation*}
for a cost matrix $C\in\RR^{n\times n}$, and let $\bK = \exp(-\bC/\epsilon)$, $K=\exp(-C/\epsilon)$, and $\bU=(u_0\otimes u_1 \otimes\cdots\otimes u_T)$. Then,
\begin{align}
\projopdisc{0}{}(\bK \odot \bU)=& u_0 \odot \bigodot_{\ell=1}^J ( K u_\ell), \label{eq:proj_barycenter_0}\\
\projopdisc{j}{}(\bK \odot \bU)=& u_j \odot K^T\left( u_0 \odot \bigodot_{\ell=1,\ell\neq j}^J( K u_\ell)\right)\quad \mbox{ for } j=1,\ldots, J. \label{eq:proj_barycenter_j}
\end{align}
For $j,j_1,j_2 =1,\dots,J $, and $j_1 \neq j_2$, the pair-wise projections are given by
\begin{align}
\projopdisc{0,j}{} (\bK \odot \bU) = & \diag\left( u_0 \odot \bigodot_{\substack{\ell=1 \\ \ell \neq j}}^J (K u_\ell) \right) K \diag(u_j),\label{eq:proj_barycenter_0j}\\
\projopdisc{j_1,j_2}{} (\bK \odot \bU) = & \diag(u_{j_1}) K^T \diag \left( u_0 \odot \bigodot_{\substack{\ell=1 \\\ell \neq j_1,j_2}}^J (K u_\ell) \right)  K \diag(u_{j_2}).\label{eq:proj_barycenter_jj}
\end{align}

\end{proposition}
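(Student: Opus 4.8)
The plan is to exploit that, since the cost tensor $\bC_{i_0,\dots,i_J}=\sum_{j=1}^J C_{i_0,i_j}$ is a sum of pairwise terms all involving the common index $i_0$, the kernel tensor $\bK=\exp(-\bC/\epsilon)$ factors into a product of copies of $K=\exp(-C/\epsilon)$ that share that index, i.e.
\begin{equation*}
\bK_{i_0,\dots,i_J} = \exp\!\Big(-\tfrac{1}{\epsilon}\sum_{j=1}^J C_{i_0,i_j}\Big) = \prod_{j=1}^J K_{i_0,i_j}.
\end{equation*}
Combining this with the rank-one structure $\bU_{i_0,\dots,i_J}=\prod_{t=0}^J (u_t)_{i_t}$ gives the entrywise identity
\begin{equation*}
(\bK \odot \bU)_{i_0,\dots,i_J} = (u_0)_{i_0}\prod_{j=1}^J K_{i_0,i_j}(u_j)_{i_j}.
\end{equation*}
Each projection in the statement is then the sum of this expression over a suitable subset of the indices, and the key observation is that, once $i_0$ is held fixed, the summand is a product of terms each depending on a single $i_j$, so the multiple sum factors across the modes $j=1,\dots,J$.

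First I would establish \eqref{eq:proj_barycenter_0}: summing over $i_1,\dots,i_J$ with $i_0$ fixed splits into $\prod_{j=1}^J\big(\sum_{i_j}K_{i_0,i_j}(u_j)_{i_j}\big)=\prod_{j=1}^J (K u_j)_{i_0}$, which after multiplication by $(u_0)_{i_0}$ is exactly the $i_0$-th entry of $u_0\odot\bigodot_{\ell=1}^J(Ku_\ell)$. For \eqref{eq:proj_barycenter_j} I would fix $i_j$, pull out the factor $(u_j)_{i_j}$, reduce the product over the modes $\ell\neq j$ to $\prod_{\ell\neq j}(Ku_\ell)_{i_0}$, and then carry out the remaining sum over $i_0$; since in this sum $K_{i_0,i_j}$ carries its free index in the column slot, that sum is a multiplication by $K^T$, yielding $u_j\odot K^T\!\big(u_0\odot\bigodot_{\ell\neq j}(Ku_\ell)\big)$.

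The two pairwise formulas follow the same recipe. For \eqref{eq:proj_barycenter_0j} I keep both $i_0$ and $i_j$ fixed, reduce the product over $\ell\neq j$ to $\prod_{\ell\neq j}(Ku_\ell)_{i_0}$, and read off the remaining $(i_0,i_j)$-dependence as $\diag(w)\,K\,\diag(u_j)$ with $w=u_0\odot\bigodot_{\ell\neq j}(Ku_\ell)$. For \eqref{eq:proj_barycenter_jj}, after fixing $i_{j_1}$ and $i_{j_2}$ and reducing over $\ell\neq j_1,j_2$, the leftover sum has the shape $\sum_{i_0}K_{i_0,i_{j_1}}\,v_{i_0}\,K_{i_0,i_{j_2}}$ with $v=u_0\odot\bigodot_{\ell\neq j_1,j_2}(Ku_\ell)$, which is precisely the $(i_{j_1},i_{j_2})$ entry of $K^T\diag(v)K$; multiplying by $(u_{j_1})_{i_{j_1}}$ and $(u_{j_2})_{i_{j_2}}$ produces $\diag(u_{j_1})\,K^T\diag(v)\,K\,\diag(u_{j_2})$.

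I do not expect a genuine obstacle here: the single structural fact doing all the work is the factorization of the exponential of a sum into a product, after which everything collapses to index bookkeeping. The only place to be careful is tracking, for each surviving occurrence of $K$, whether its free index sits in the row slot (so $K$ appears) or in the column slot (so $K^T$ appears), and which indices have already been summed out — exactly as in the proof of Proposition~\ref{prp:sequential_cost}.
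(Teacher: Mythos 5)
Your proof is correct: the factorization $\bK_{i_0,\dots,i_J}=\prod_{j=1}^J K_{i_0,i_j}$ is exactly the structural fact the paper uses, and your direct evaluation of each marginal and pairwise sum (including the $K$ versus $K^T$ bookkeeping and the $K^T\diag(v)K$ form in \eqref{eq:proj_barycenter_jj}) matches the stated formulas. The organization differs from the paper's, though: rather than summing out indices of $\bK\odot\bU$ directly for each projection, the paper first computes the scalar $\langle \bK,\bU\rangle = u_0^T\bigl(\bigodot_{\ell=1}^J Ku_\ell\bigr)$, rewrites it in the canonical forms $w_1^T\diag(u_t)w_2$ and $w_1^T\diag(u_{t_1})W_2\diag(u_{t_2})w_3$, and then invokes Lemma~\ref{lm:projection} and Lemma~\ref{lm:coupling}, whose proofs extract the projections by substituting $u_t\mapsto u_t\odot e_{i_t}$ — which is your index computation in disguise. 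The lemma route buys reusability (the same two lemmas also drive Propositions~\ref{prp:sequential_cost} and~\ref{prp:combined_cost}) and reduces each case to pattern-matching a scalar expression; your inline computation is more self-contained and makes the row-slot/column-slot distinction explicit, at the cost of repeating the same sum-factorization argument four times. Both are complete proofs; no gap.
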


\begin{proof}
	See Appendix~\ref{appendix:proofs}.
\end{proof}

\subsubsection{Sequential and central decoupling}
Consider a cost function $\costfuncmulti$ being a combination of sequentially and centrally decoupling costs, according to \eqref{eq:cost_tensor_barycenter_tracking}, which in Section~\ref{subsec:barycenter_tracking} was applied to modeling tracking of barycenters over time. For this case, the following proposition holds.
%

\begin{proposition} \label{prp:combined_cost}
Let the elements of the cost tensor $\bC$ in \eqref{eq:multimarginal_entropyreg_omt} be of the form
\begin{equation*}
\bC_{\left(i_{(t,j)}|(t,j)\in \Lambda\right)} = \sum_{t=1}^\cT C_{i_{({t-1},0)}, i_{(t,0)}} +\sum_{t=1}^\cT \sum_{j=1}^J \tilde{C}_{i_{(t,0)}, i_{(t,j)}},
\end{equation*}
for cost matrices $C\in\RR^{n\times n}$ and $\tilde{C} \in \RR^{n\times \tilde n}$, and define $\Lambda=\{(t,j)\,|\, t\in\{0,1,\ldots, \cT\}, j\in\{0,1\,\ldots,J\}\}$. Furthermore, let $\bK = \exp(-\bC/\epsilon)$, $K=\exp(-C/\epsilon)$, $\tilde{K}=\exp(-\tilde{C}/\epsilon)$, and let $\bU=\bigotimes_{(t,j)\in \Lambda} u_{(t,j)}$. Then, for the central marginals, corresponding to $(t,j)$ such that $j = 0$,
\begin{equation} \label{eq:projection_t0}
\projopdisc{(t,0)}{}(\bK \odot \bU)=\left( p_0^T K \diag(p_1) K \dots \diag(p_{t-1}) K \right)^T  \odot p_t \odot \left( K\diag(p_{t+1}) \ldots K \diag(p_{\cT-1}) K p_\cT \right),
\end{equation}
where $p_t=u_{(t,0)}\odot \bigodot_{j=1}^J \tilde{K}u_{(t,j)}$. Furthermore, for the non-central marginals, i.e., $(t,j)$ with $j = 1,2,\ldots,J$, it holds that
\begin{equation} \label{eq:projection_tj}
\begin{aligned}
	\projopdisc{(t,j)}{}(\bK \odot \bU)&=u_{(t,j)}\odot \tilde{K}^T\bigg(( p_0^T K \diag(p_1) K \dots K\diag(p_{t-1}) K )^T \\
	&\qquad\odot (p_t./(Ku_{(t,j)}) ) \odot ( K\diag(p_{t+1}) K \ldots K \diag(p_{\cT-1}) K p_\cT )\bigg).
\end{aligned}
\end{equation}
\end{proposition}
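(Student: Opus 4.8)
\bigskip

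The plan is to reduce the combined sequential-plus-central structure to the two cases already established in Propositions~\ref{prp:sequential_cost} and~\ref{prp:barycenter_cost} by a marginalization-in-stages argument. Since the cost tensor splits as a sum of terms that each couple only $(t{-}1,0)$ with $(t,0)$ (the sequential ``backbone'') or couple $(t,0)$ with $(t,j)$ for a fixed $t$ (the star at time $t$), the kernel tensor $\bK = \exp(-\bC/\epsilon)$ factors as a product over these groups: $\bK_{(i_{(t,j)})} = \prod_{t=1}^\cT K_{i_{(t{-}1,0)},i_{(t,0)}} \prod_{t=0}^\cT \prod_{j=1}^J \tilde K_{i_{(t,0)},i_{(t,j)}}$. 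Likewise $\bU$ factors as the outer product of the $u_{(t,j)}$. First I would sum out, for each fixed $t$, the ``leaf'' indices $i_{(t,j)}$ with $j\ge 1$ that do not appear in the marginal being computed. For a leaf index $i_{(t,j)}$ the only factors of $\bK\odot\bU$ depending on it are $\tilde K_{i_{(t,0)},i_{(t,j)}}(u_{(t,j)})_{i_{(t,j)}}$, so summing over $i_{(t,j)}$ replaces that pair of factors by $(\tilde K u_{(t,j)})_{i_{(t,0)}}$, a function of the central index $i_{(t,0)}$ alone. Carrying this out over all leaves absorbs their contribution into an effective weight on the central index, which is exactly the vector $p_t = u_{(t,0)}\odot \bigodot_{j=1}^J \tilde K u_{(t,j)}$ appearing in the statement.

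\bigskip

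For \eqref{eq:projection_t0}: when the target marginal is a central one $(t,0)$, \emph{every} leaf index gets summed out, so after the first reduction the problem collapses to a purely sequential multi-marginal problem on the $\cT+1$ central indices, with marginal weight vectors $p_0,\dots,p_\cT$ in place of $u_0,\dots,u_\cT$ and the same matrix $K$ as the sequential kernel. Applying Proposition~\ref{prp:sequential_cost} verbatim with $u_\tau \mapsto p_\tau$ then yields precisely \eqref{eq:projection_t0}. For \eqref{eq:projection_tj}: when the target is a non-central marginal $(t,j)$ with $j\ge 1$, I would sum out all leaves \emph{except} $i_{(t,j)}$. The leaves at all times $s\ne t$, and the leaves at time $t$ with index $\ne j$, get absorbed as before; the only subtlety is that the leaf $(t,j)$ we are keeping must \emph{not} be absorbed into $p_t$, so the effective central weight at time $t$ is $p_t./(K u_{(t,j)})$ — i.e., $p_t$ with the $(t,j)$-contribution divided back out — matching the middle factor in \eqref{eq:projection_tj}. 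The remaining sum over all the central indices $i_{(s,0)}$ is again a sequential chain; by Proposition~\ref{prp:sequential_cost} (or rather the computation in its proof, applied to the node $(t,0)$ with the modified weight), this sum equals the bracketed expression, a vector indexed by $i_{(t,0)}$. Finally, reinstating the dependence on $i_{(t,j)}$ means multiplying by $\tilde K_{i_{(t,0)},i_{(t,j)}}$ and $(u_{(t,j)})_{i_{(t,j)}}$ and summing over $i_{(t,0)}$, which is the operation $u_{(t,j)}\odot \tilde K^T(\,\cdot\,)$ applied to that vector — giving \eqref{eq:projection_tj}.

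\bigskip

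The main obstacle, I expect, is purely bookkeeping: keeping the index algebra straight while interchanging the order of summation, and in particular correctly identifying which factors depend on which indices so that the ``absorb the leaves'' step is justified by Fubini/finite reordering of sums. Care is needed at the node $(t,0)$ in the non-central case to argue cleanly that $p_t = (p_t./(Ku_{(t,j)}))\odot (K u_{(t,j)})$ and that only the first factor survives into the sequential chain while the second is held back. Once the reduction to Proposition~\ref{prp:sequential_cost} is set up correctly, no new estimates or inequalities are required — the result is an exact algebraic identity — so the remainder is just substitution into the already-proved sequential formula. I would present the leaf-absorption lemma once, explicitly, and then invoke it in both cases to keep the argument short.
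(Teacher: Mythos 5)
Your proposal is correct and follows essentially the same route as the paper's proof: both factor $\bK$ over the tree structure of the cost, absorb the leaf indices at each time into the effective weights $p_t=u_{(t,0)}\odot\bigodot_{j}\tilde K u_{(t,j)}$, reduce the central-marginal case to Proposition~\ref{prp:sequential_cost} with $u_\tau\mapsto p_\tau$, and handle the non-central case by dividing the held-back leaf's contribution out of $p_t$ and finishing with $u_{(t,j)}\odot\tilde K^T(\cdot)$. The only cosmetic difference is that the paper routes the final identification through Lemma~\ref{lm:projection} applied to $\langle\bK,\bU\rangle$ rather than your direct staged marginalization.
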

\begin{proof}
	See Appendix~\ref{appendix:proofs}.
\end{proof}
%
%
\begin{remark} \label{remark:bi_projection_barycenter_tracking}
The pairwise projections for index pairs $(t_1,0)$ and $(t_2,0)$, i.e., indices corresponding to the central marginals, may be computed analogously to Proposition~\ref{prp:sequential_cost}. That is,
\begin{align*}
\projopdisc{(t_1,0),(t_2,0)}{} (\bK \odot \bU) =&\diag\left( p_0^T K \diag(p_1) K \dots K\diag(p_{t_1-1}) K \right)  \\
&\odot \diag(p_{t_1}) K\diag(p_{t_1+1}) K\cdots K \diag(p_{t_2-1})K \diag(p_{t_2})  \\
&\odot \diag\left( K\diag(p_{{t_2}+1})K \ldots K \diag(p_{\cT-1}) K p_\cT \right).
\end{align*}
In case the index pairs are $(t,j_1)$ and $(t,j_2)$, i.e., when the indices correspond to the same central marginal, the pairwise projections can be computed similarly to Proposition~\ref{prp:barycenter_cost}. Specifically, the bi-marginal projection for a central marginal and a non-central marginal corresponding to the same time $t$ is given by
\begin{align*}
	\projopdisc{(t,0),(t,j)}{} =&  \diag(u_{(t,0)}) \diag  \Bigg( (\tilde K)^T\Big(( p_0^T K \diag(p_1) K \dots K\diag(p_{t-1}) K )^T  \odot (p_t./(u_{(t,0)} \odot \tilde K u_{(t,j)}) ) \\
	& \hspace{80pt} \odot ( K \diag(p_{t+1})K \ldots K \diag(p_{T-1}) K p_T )\Big) \Bigg) \tilde K \diag(u_{(t,j)}),
\end{align*}
whereas the bi-marginal projection for two non-central marginals corresponding to the same time $t$ may be expressed as
\begin{align*}
\projopdisc{(t,j_1),(t,j_2)}{} = & \diag(u_{(t,j_1)}) \tilde K^T \diag \Bigg( ( p_0^T K \diag(p_1) K \dots \diag(p_{t-1}) K )^T  \odot (p_t./(\tilde K u_{(t,j_1)} \odot \tilde K u_{(t,j_2)}) ) \\
	& \hspace{80pt} \odot ( K \diag(p_{t+1}) \ldots K \diag(p_{T-1}) K p_T ) \Bigg)  \tilde K \diag(u_{(t,j_2)}).
\end{align*}
For arbitrary index pairs $(t,j)$, expressions for the pairwise projections may be derived in a similar way.
\end{remark}

\section{Solving OMT problems with partial information of marginals} \label{sec:discrete_omt_PI}
In this section, we derive an efficient algorithm for solving discrete OMT problems where only partial information of the marginals are available, i.e., formulating discrete approximations of the problem in \eqref{eq:multi_margin_omt_PI}. In this setting, one does not have access to the marginal vectors $\Phi_t \in \RR^n$ directly, but instead have measurements of the form $r = \disccovop \Phi$, where $r \in \RR^m$, with $m < n$, is a vector of available information, and where $\disccovop \in \RR^{m \times n}$ represents a mapping from the full state information to the partial information. For example, $\disccovop$ may represent a discretization of the push-forward operator $F_\#$ in the case of an OMT problem with dynamics detailed by \eqref{eq:system_eq}, or of the operator $\Gamma$, mapping spectra to covariance matrices. Correspondingly, $r$ may be a discretization of the push-forward measure, or the (vectorized) covariance matrix, respectively. Note that all quantities in this section are assumed to be real-valued. As the considered covariance matrices $R$ are, in general, complex-valued, we arrive at the equivalent real-valued problems by constructing the corresponding information vectors $r$ as
\begin{align} \label{eq:cov_vector}
	r = \left[ \begin{array}{cc} \realpart\left(\text{vec}(R)\right)^T  & \imagpart\left(\text{vec}(R)\right)^T \end{array} \right]^T,
\end{align}
where $\realpart(\cdot)$ and $\imagpart(\cdot)$ denote the real and imaginary parts, respectively. The discretizations of the operators $\Gamma$, i.e., $\disccovop$, are then structured as to be consistent with this construction. It should be noted that the solution algorithm presented in this section coincides with the multi-marginal Sinkhorn iterations in \eqref{eq:multimarginal_sinkhorn} for the special case of full  information (corresponding to letting $\disccovop$ be the identity operator)
 and noiseless observations of the distribution marginals ($\gamma_t=\infty$).

\subsection{Entropy regularized multi-marginal OMT with partial information of marginals} \label{}
Consider a multi-marginal OMT problem, in which only partial, as well as noisy, information of the marginals is available, as detailed in \eqref{eq:multi_margin_omt_PI}. A direct discretization of this problem, with added entropy regularization, may be expressed as
\begin{equation} \label{eq:multimargin_generalcost}
\begin{aligned}
\minwrt[\bM, \Delta_t] \quad & \; \langle \bC, \bM \rangle 
+\epsilon \cD(\bM)+\sum_t \gamma_t \|\Delta_t\|_2^2 \\
\mbox{subject to} &\; \disccovop_t \projopdisc{t}{}(\bM)=r_t+\Delta_t \quad \mbox{ for } t=0,\ldots, \cT,
\end{aligned}
\end{equation}
where $\disccovop_t \in \RR^{m_t \times n}$ is a mapping from the full state information to the partial information $r_t \in \RR^{m_t}$, as described above. Also, to allow for noisy measurements $r_t$, the problem is augmented by perturbation vectors $\Delta_t \in \RR^{m_t}$ by direct analog to the matrix perturbations of \eqref{eq:multi_margin_omt_PI}, which are penalized by the squared $\ell_2$-norm scaled by the penalty parameters $\gamma_t > 0$. It may be noted that due to the presence of the entropy term, $\cD$, the problem in \eqref{eq:multimargin_generalcost} is strictly convex. Also, for finite $\gamma_t$, a solution always exists due to the introduction of the perturbation vectors $\Delta_t$. Altogether, this implies that for any $\epsilon >0$ and $\gamma_t < \infty$, the optimization problem \eqref{eq:multimargin_generalcost} has a unique solution. The following proposition characterizes the solution $\bM$ in terms of the variables of the Lagrange dual problem, yielding the dual formulation. 
%

\begin{proposition} \label{prp:multimarginal_dual}
The optimal solution to \eqref{eq:multimargin_generalcost} may be expressed as $\bM = \bK \odot \bU$, for two tensors $\bK, \bU \in\RR^{ n^{\cT+1}}$, given by $\bK=\exp(-\bC/\epsilon)$, and 
\begin{equation}\label{eq:U}
\bU = u_0\otimes u_1\otimes \cdots \otimes u_\cT, 
\quad \iff \ \bU_{i_0,\dots,i_\cT} = \prod_{t=0}^\cT (u_t)_{i_t},
\end{equation}
where the vectors $u_t$ are given by $u_t = \exp(\disccovop_t^T \lambda_t / \epsilon)$ for $t=0,\dots,\cT$, with $\lambda_t \in \RR^{m_t}$ denoting the Lagrange dual variable corresponding to the constraint on the projection $\projopdisc{t}{}(\cdot)$.
The optimal perturbation vectors may be expressed as $\Delta_t=-\frac{1}{2\gamma_t} \lambda_t$, for $t = 0,1,\ldots, \cT$.
%
Furthermore, a Lagrange dual of the multi-marginal OMT problem \eqref{eq:multimargin_generalcost} is given by
\begin{equation} \label{eq:dual_multimarginal_omt}
\maxwrt_{\lambda_0,\ldots, \lambda_\cT} \ \ -\epsilon \ \langle \bK, \bU\rangle -\frac{1}{4\gamma}\sum_{t=0}^\cT \|\lambda_t\|_2^2+ \sum_{t=0}^\cT\lambda_t^Tr_t.
\end{equation}
where $\bU$ is given by \eqref{eq:U} with $u_t = \exp(\disccovop_t^T \lambda_t / \epsilon)$ for $t=0,\dots,\cT$.
\end{proposition}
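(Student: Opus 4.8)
The plan is to form the Lagrangian of \eqref{eq:multimargin_generalcost} with multipliers $\lambda_t \in \RR^{m_t}$ attached to the $\cT+1$ linear constraints $\disccovop_t \projopdisc{t}{}(\bM) = r_t + \Delta_t$, and then eliminate the primal variables $\bM$ and $\Delta_t$ by the first-order stationarity conditions. Explicitly, I would write
\begin{equation*}
L(\bM,\{\Delta_t\},\{\lambda_t\}) = \langle \bC, \bM\rangle + \epsilon \cD(\bM) + \sum_{t=0}^\cT \gamma_t \|\Delta_t\|_2^2 + \sum_{t=0}^\cT \lambda_t^T\!\left( r_t + \Delta_t - \disccovop_t\projopdisc{t}{}(\bM)\right).
\end{equation*}
Because $\projopdisc{t}{}$ is linear and sums entries of $\bM$ over all modes but $t$, the term $\lambda_t^T \disccovop_t \projopdisc{t}{}(\bM)$ is linear in the entries of $\bM$, and its coefficient on $\bM_{i_0,\dots,i_\cT}$ is $(\disccovop_t^T\lambda_t)_{i_t}$; summing over $t$ gives the coefficient $\sum_{t=0}^\cT (\disccovop_t^T\lambda_t)_{i_t}$ on $\bM_{i_0,\dots,i_\cT}$.

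Next I would take the partial derivative of $L$ with respect to a single entry $\bM_{i_0,\dots,i_\cT}$. Since $\partial \cD/\partial \bM_{i_0,\dots,i_\cT} = \log \bM_{i_0,\dots,i_\cT}$, stationarity reads
\begin{equation*}
0 = \bC_{i_0,\dots,i_\cT} + \epsilon \log \bM_{i_0,\dots,i_\cT} - \sum_{t=0}^\cT (\disccovop_t^T\lambda_t)_{i_t},
\end{equation*}
so that $\bM_{i_0,\dots,i_\cT} = \exp(-\bC_{i_0,\dots,i_\cT}/\epsilon)\prod_{t=0}^\cT \exp\big((\disccovop_t^T\lambda_t)_{i_t}/\epsilon\big)$, which is exactly $\bM = \bK\odot\bU$ with $\bK = \exp(-\bC/\epsilon)$ and $u_t = \exp(\disccovop_t^T\lambda_t/\epsilon)$ as in \eqref{eq:U}. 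Differentiating $L$ with respect to $\Delta_t$ gives $2\gamma_t \Delta_t + \lambda_t = 0$, i.e.\ $\Delta_t = -\tfrac{1}{2\gamma_t}\lambda_t$, which is the claimed formula (under the abbreviation $\gamma$ used in \eqref{eq:dual_multimarginal_omt} this matches the $-\tfrac{1}{4\gamma}\|\lambda_t\|_2^2$ term below).

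Finally I would substitute these expressions back into $L$ to obtain the dual objective $g(\lambda_0,\dots,\lambda_\cT) = \inf_{\bM,\{\Delta_t\}} L$. At the stationary $\bM$, the identity $\sum m \log m - m$ combined with the stationarity relation $\epsilon\log \bM_{i_0,\dots,i_\cT} = \sum_t(\disccovop_t^T\lambda_t)_{i_t} - \bC_{i_0,\dots,i_\cT}$ collapses the terms $\langle\bC,\bM\rangle + \epsilon\cD(\bM) - \sum_t\lambda_t^T\disccovop_t\projopdisc{t}{}(\bM)$ down to $\epsilon\sum(1 - \bM_{i_0,\dots,i_\cT}) = \epsilon\,n^{\cT+1} - \epsilon\langle\bK,\bU\rangle$; the additive constant $\epsilon n^{\cT+1}$ is irrelevant to the maximization and is dropped. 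Substituting $\Delta_t = -\tfrac1{2\gamma_t}\lambda_t$ turns $\gamma_t\|\Delta_t\|_2^2 + \lambda_t^T\Delta_t$ into $-\tfrac{1}{4\gamma_t}\|\lambda_t\|_2^2$. The remaining term $\sum_t\lambda_t^T r_t$ is untouched, yielding precisely \eqref{eq:dual_multimarginal_omt}. I would close by invoking strong duality: the problem \eqref{eq:multimargin_generalcost} is strictly convex with a unique minimizer (as already noted, for $\epsilon>0$, $\gamma_t<\infty$), and Slater's condition holds trivially since the $\Delta_t$ render the affine constraints always satisfiable, so the dual is exact and the stationary primal point recovered above is the optimizer. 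The main obstacle, such as it is, is bookkeeping: carefully tracking that the multi-mode projection $\projopdisc{t}{}$ contributes the per-mode coefficient $(\disccovop_t^T\lambda_t)_{i_t}$ so that the product structure of $\bU$ emerges, and being careful that the entropy algebra telescopes to a constant plus $-\epsilon\langle\bK,\bU\rangle$; there is no real analytic difficulty once the Lagrangian is set up.
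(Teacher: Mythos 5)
Your proposal is correct and follows essentially the same route as the paper's proof: form the Lagrangian, set the gradients with respect to $\bM$ and $\Delta_t$ to zero to obtain $\bM=\bK\odot\bU$ and $\Delta_t=-\lambda_t/(2\gamma_t)$, and substitute back to collapse the cost-plus-entropy terms to $-\epsilon\langle\bK,\bU\rangle$ (up to an additive constant) and the perturbation terms to $-\tfrac{1}{4\gamma_t}\|\lambda_t\|_2^2$. Your explicit tracking of the dropped constant $\epsilon n^{\cT+1}$ and the closing appeal to strict convexity and Slater's condition are minor additions the paper leaves implicit, but the argument is the same.
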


\begin{proof}
	See Appendix~\ref{appendix:proofs}.
\end{proof}

%
With the result from Proposition~\ref{prp:multimarginal_dual}, we are now ready to state the method for solving \eqref{eq:multimargin_generalcost}.
%
%
%
\begin{theorem} \label{thm:multimarginal_scheme}
Given an initial set of vectors $\lambda_0,\dots,\lambda_\cT$, a block-coordinate ascent method for the dual of the multi-marginal OMT problem \eqref{eq:multimargin_generalcost} is to iteratively, until convergence, iterate the following steps:
\begin{itemize}
\item Let \begin{equation} \label{eq:multimarginal_omt_partial_info_vt}
v_t = \projopdisc{t}{}(\bK \odot \bU)./u_t
\end{equation}
where 
$\bU = u_0\otimes u_1\otimes \cdots \otimes u_\cT$
for the vectors $u_t = \exp(\disccovop_t^T \lambda_t / \epsilon)$, $t=0,\dots,\cT$.

\item Update the vector $\lambda_t$ as the solution to
\begin{equation}
\disccovop_t \left( v_t \odot \exp\left(\frac{\disccovop_t^T\lambda_t}{\epsilon}\right) \right) + \frac{\lambda_t}{2\gamma_t} - r_t = 0. \label{eq:tracking_sinkhorn_maximization0}
\end{equation}
\end{itemize}
\end{theorem}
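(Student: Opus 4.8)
The plan is to show that the two displayed steps are precisely one sweep of block-coordinate ascent applied to the dual objective in \eqref{eq:dual_multimarginal_omt}. I would work throughout with the dual function supplied by Proposition~\ref{prp:multimarginal_dual},
\[
D(\lambda_0,\ldots,\lambda_\cT) = -\epsilon\langle \bK,\bU\rangle - \sum_{t=0}^\cT\frac{1}{4\gamma_t}\|\lambda_t\|_2^2 + \sum_{t=0}^\cT \lambda_t^T r_t,
\]
where $\bU = u_0\otimes\cdots\otimes u_\cT$ and $u_t=\exp(\disccovop_t^T\lambda_t/\epsilon)$. The first step is to record that, with the remaining blocks held fixed, $D$ is strictly concave in $\lambda_t$: as a function of $\lambda_t$ the quantity $\langle\bK,\bU\rangle = \sum_{i_0,\ldots,i_\cT}\bK_{i_0,\ldots,i_\cT}\prod_s (u_s)_{i_s}$ is a nonnegative combination of terms $\exp\!\big((\disccovop_t^T\lambda_t)_{i_t}/\epsilon\big)$ and hence convex, so $-\epsilon\langle\bK,\bU\rangle$ is concave; adding the strictly concave quadratic $-\|\lambda_t\|_2^2/(4\gamma_t)$ (this is where $\gamma_t<\infty$ enters) and the linear term leaves a strictly concave function on $\RR^{m_t}$. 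Hence the block maximizer exists, is unique, and is characterized by $\nabla_{\lambda_t} D = 0$.

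The second step is to evaluate $\nabla_{\lambda_t} D$ by the chain rule. The only block-$t$ dependence in $\langle\bK,\bU\rangle$ enters through $(u_t)_{i_t}=\exp\!\big((\disccovop_t^T\lambda_t)_{i_t}/\epsilon\big)$, with $\partial (u_t)_{i_t}/\partial(\lambda_t)_k = \tfrac1\epsilon(\disccovop_t)_{k,i_t}(u_t)_{i_t}$. Collecting the contraction over the remaining modes and using the identity
\[
\projopdisc{t}{}(\bK\odot\bU)_{i_t} = (u_t)_{i_t}\sum_{\substack{i_s:\,s\neq t}}\bK_{i_0,\ldots,i_\cT}\prod_{s\neq t}(u_s)_{i_s},
\]
one obtains $\nabla_{\lambda_t}\langle\bK,\bU\rangle = \tfrac1\epsilon\,\disccovop_t\big(v_t\odot u_t\big)$, where $v_t = \projopdisc{t}{}(\bK\odot\bU)./u_t$ is exactly the vector defined in \eqref{eq:multimarginal_omt_partial_info_vt}. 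The key point, which I would state explicitly, is that $v_t$ does \emph{not} depend on $\lambda_t$: dividing the $t$-th projection by $u_t$ cancels the $(u_t)_{i_t}$ factor in the identity above, so the summation that remains involves only the other blocks. This is precisely what reduces the block subproblem to a single nonlinear equation in $\lambda_t$ (solved, in the implementation, by Newton's method). Differentiating the remaining terms gives $\nabla_{\lambda_t}\big[-\tfrac1{4\gamma_t}\|\lambda_t\|_2^2 + \lambda_t^T r_t\big] = -\tfrac1{2\gamma_t}\lambda_t + r_t$.

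The third step is to assemble the stationarity condition: $\nabla_{\lambda_t} D = 0$ reads
\[
-\disccovop_t\!\Big(v_t\odot\exp\!\big(\disccovop_t^T\lambda_t/\epsilon\big)\Big) - \frac{\lambda_t}{2\gamma_t} + r_t = 0,
\]
which, after multiplying by $-1$, is exactly \eqref{eq:tracking_sinkhorn_maximization0}. Since $v_t$ is frozen for the duration of the block update and $D$ is strictly concave in $\lambda_t$, this equation has a unique root, and replacing $\lambda_t$ by that root is by definition the maximization of $D$ over the $t$-th block. Iterating over $t = 0,1,\ldots,\cT$ is therefore block-coordinate ascent on \eqref{eq:dual_multimarginal_omt}, which is the assertion of the theorem. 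I expect the only real obstacle to be bookkeeping: carrying the chain-rule differentiation of $\langle\bK,\bU\rangle$ through the tensor-index notation and matching the resulting contraction with the definition of $\projopdisc{t}{}$, together with the clean verification that $v_t$ is independent of $\lambda_t$; once those are in place the conclusion is immediate. It is also worth remarking that in the full-information, noiseless limit $\disccovop_t = I$, $\gamma_t\to\infty$, the update \eqref{eq:tracking_sinkhorn_maximization0} collapses to the multi-marginal Sinkhorn step \eqref{eq:multimarginal_sinkhorn}.
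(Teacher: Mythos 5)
Your proposal is correct and follows essentially the same route as the paper's proof: substitute $u_t=\exp(\disccovop_t^T\lambda_t/\epsilon)$ into the dual objective from Proposition~\ref{prp:multimarginal_dual}, differentiate the first term blockwise to obtain $-\disccovop_t(v_t\odot u_t)$ with $v_t=\projopdisc{t}{}(\bK\odot\bU)./u_t$, and set the full gradient to zero to recover \eqref{eq:tracking_sinkhorn_maximization0}. Your added observations --- that the dual is strictly concave in each block (so the stationary point is the unique block maximizer) and that $v_t$ is independent of $\lambda_t$ --- are correct and make explicit what the paper leaves implicit, but they do not change the argument.
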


\begin{proof}
	See Appendix~\ref{appendix:proofs}.
\end{proof}


%
\begin{remark}
It is worth noting that, given the dual optimal variables $\lambda_t$, for $t = 0,1,\ldots,\cT$, obtained as the limit point of the iterations in Theorem~\ref{thm:multimarginal_scheme}, the primal optimal variables may be recovered using Proposition~\ref{prp:multimarginal_dual}.
\end{remark}
\begin{remark}
The number of constraints in \eqref{eq:multimargin_generalcost} may be smaller than the number of modes in the mass transfer tensor $\bM$.
In cases where there are no constraints on the $t$-th projection of $\bM$, the corresponding dual variable $\lambda_t$ is set to zero. The method described in Theorem \ref{thm:multimarginal_scheme} may then be modified by setting $u_t=\ett$ and only iterating over the smaller set of remaining vectors.
\end{remark}
Recall that in the standard Sinkhorn iterations \eqref{eq:sinkhorn} and \eqref{eq:multimarginal_sinkhorn}, the scaling factors $u_t$ are iteratively updated as to satisfy the marginal constraints. The same property holds for the iterations in Theorem~\ref{thm:multimarginal_scheme}. Specifically, if $\lambda_t$ is a solution to \eqref{eq:tracking_sinkhorn_maximization0}, then the implied transport plan in that iteration, i.e., $\bM = \bK \odot \bU$, where $\bU = u_0\otimes u_1\otimes \cdots \otimes u_\cT$, with $u_t = \exp(\disccovop_t^T \lambda_t / \epsilon)$, satisfies
\begin{equation*}
	\disccovop_t P_t(\bM)=r_t+\Delta_t = r_t -\frac{1}{2\gamma_t} \lambda_t.
\end{equation*}
In the case of full information and exact matching in a marginal (i.e., $\disccovop_t=I$ and $\gamma_t=\infty$), the update \eqref{eq:tracking_sinkhorn_maximization0} in Theorem~\ref{thm:multimarginal_scheme} reduces to the Sinkhorn iterations (Algorithm~\ref{alg:multimarginal_sinkhorn}) introduced in Section \ref{sec:multimarginal_sinkhorn}, i.e.,
\[
u_t \leftarrow r_t./v_t=u_t\odot r_t./\projopdisc{t}{}(\bK \odot \bU),
\]
whereas in the case with only full information (i.e., $\disccovop_t=I$ and $0<\gamma_t<\infty$), the update reduces to solving
\begin{equation}
 v_t \odot \exp(\lambda_t/\epsilon)  + \lambda_t/(2\gamma_t) - r_t = 0. 
\end{equation}
This equation may be solved element-wise using the Wright omega function (see appendix B in \cite{KarlssonR17_10}). This allows for computing the proximal operator of the regularized optimal mass transport \cite{KarlssonR17_10}, which is often used in first-order methods for non-smooth optimization. 
Similar expressions may also be obtained for other penalization terms than the squared $\ell_2$-error, which relates to the entropic proximal operator \cite{chizat2018scaling}.
Herein, we propose to solve \eqref{eq:tracking_sinkhorn_maximization0} using Newton's method. Therefore, note that the corresponding Jacobian is given by
\begin{equation*}
\frac{1}{\epsilon} \disccovop_t \diag( v_t \odot u_t ) \disccovop_t^T + \frac{1}{2\gamma_t}I.
\end{equation*}
The full method for solving \eqref{eq:multimargin_generalcost} is summarized in Algorithm~\ref{alg:multitracking_sinkhornnewton}. It may be noted that the multi-marginal transport plan $\bM$ may for some problems be too large to be stored in the memory. However, when used in modeling applications, one is primarily interested in being able to study projections of $\bM$, which, due to the structure of the transport plan, do not require $\bM$ to be constructed explicitly, as described in Section~\ref{subsec:decoupling_costfuncs}. Also, it may be noted that the marginals of $\bM$ are given directly at convergence of Algorithm~\ref{alg:multitracking_sinkhornnewton}, as $\Phi_t = u_t \odot v_t$, for $t=0,1,\ldots,\cT$.
%
\begin{algorithm}
\begin{algorithmic}
\STATE Given: Initial guess $\lambda_t, $ for $t=0,\ldots,\cT$; starting point $t$\\
$u_t \leftarrow \exp(\disccovop_t\lambda_t/\epsilon)$ for $t=0,\ldots,\cT$
\WHILE{Sinkhorn not converged}
	\STATE Construct $v_t$ according to \eqref{eq:multimarginal_omt_partial_info_vt}, i.e.,  \\
	$v_t \leftarrow \projopdisc{t}{}(\bK \odot \bU)./u_t$
	\WHILE{Newton not converged}
		\STATE $f \leftarrow \disccovop_t \diag(v_t)u_t - r_t + 1/(2\gamma_t) \lambda_t$
		\STATE $df \leftarrow (1/\epsilon) \disccovop_t \diag( v_t \odot u_t ) \disccovop_t^T + 1/(2\gamma_t)I $
		\STATE $\Delta\lambda \leftarrow - df\backslash f$
		\STATE $\lambda_t \leftarrow \lambda_t+\eta\Delta\lambda$, with $\eta$ determined by a linesearch
		\STATE $u_t \leftarrow \exp(\disccovop_t\lambda_t/\epsilon)$
	\ENDWHILE
	\STATE  $t \leftarrow t + 1 \;({\rm mod}\; \cT+1)$ 
\ENDWHILE
\RETURN $\bM = \bK \odot \bU$
\end{algorithmic}
\caption{Sinkhorn-Newton method for the multimarginal optimal mass transport with partial information of the marginals.}\label{alg:multitracking_sinkhornnewton}
\end{algorithm}

\begin{remark} \label{remark:Newton}
Sufficiently close to the optimal solution, the quadratic approximation of the dual objective underlying the Newton method becomes increasingly accurate. The inner Newton method for solving \eqref{eq:multimarginal_omt_partial_info_vt} then converges in the first iteration. In the authors' experience, this is typically achieved within the first few outer Sinkhorn iterations.	
\end{remark}

The computational bottleneck of Algorithm~\ref{alg:multitracking_sinkhornnewton} is the construction of the vectors $v_t$ in \eqref{eq:multimarginal_omt_partial_info_vt}, requiring the computation of $\projopdisc{t}{}(\bK \odot \bU)$. However, the methods presented in Section~\ref{subsec:decoupling_costfuncs} for exploiting structure in the cost tensor $\bC$, and thereby $\bK$, are directly applicable to solving the multi-marginal OMT problem with partial information. In particular, Theorem~\ref{thm:multimarginal_scheme} provides an efficient scheme for implementing the spatial spectral estimators in Sections~\ref{sub_sec:sensor_fusion} - \ref{subsec:barycenter_tracking}, with Propositions~\ref{prp:sequential_cost}, \ref{prp:barycenter_cost}, and \ref{prp:combined_cost} detailing the computation of $\projopdisc{t}{}(\bK \odot \bU)$ for the tracking, sensor fusion, and barycenter tracking problems, respectively.
%
%

\section{Numerical results}\label{sec:numerical_results}\vspace{-1mm}

In this section, we demonstrate the behavior of the proposed multi-marginal OMT problem in \eqref{eq:multi_margin_omt_PI} when applied to the discussed spatial spectral estimation problems. Throughout, the problem in \eqref{eq:multi_margin_omt_PI} is approximated by the discrete counterpart in \eqref{eq:multimargin_generalcost}, and solved using Algorithm~\ref{alg:multitracking_sinkhornnewton}. For the implementation of the projection operators, $\projopdisc{t}{}$, we use the results from the propositions in Section~\ref{subsec:decoupling_costfuncs}.
For the example in Section~\ref{subsec:2D_barycenter_tracking}, we provide a dimensionality analysis of the corresponding OMT problem, illustrating the benefit of the presented computational tools.

\subsection{Tracking with static and dynamical models} \label{subsec:1D_tracking}
In order to illustrate the different properties of the static and dynamic OMT formulations, we consider a DoA tracking example with two moving signal sources. Specifically, we consider a ULA consisting of 5 sensors with half-wavelength spacing, measuring the superposition of the source signals, modeled as independent Gaussian processes, together with a spatially white complex Gaussian sensor noise. The signal to noise ratio (SNR), defined as 
\begin{align*}
	\text{SNR} = \log_{10}(\sigma^2_{signal}/\sigma^2_{noise}),
\end{align*}
where $\sigma^2_{signal}$ and $\sigma^2_{noise}$ are the signal and noise powers, respectively, is 10 dB. The trajectories of the two sources are displayed in the top panel of Figure~\ref{fig:tracking_velocity_dynamic_gt}, with the bottom panel showing the velocities. At six different time instances, evenly spaced in time throughout the trajectories, we collect 25 array signal snapshots, from which the array covariance is estimated using the sample covariance matrix. Using the resulting sequence of covariance matrices, we then attempt to reconstruct the target trajectories using the static and dynamic OMT problems. For the static OMT problem, we use the cost function $\costfunc(\theta,\varphi) = \abs{\theta-\varphi}^2$, for $\theta,\varphi \in (-\pi,\pi]$. For the dynamical model, we introduce a latent velocity state, $v$, such that $\dot{\theta}(t) = v(t)$, and use the state space representation in \eqref{eq:system_eq}, where
\begin{align*}
	A = \begin{bmatrix}
		0 & 1 \\ 0 & 0
	\end{bmatrix} \;,\; B = \left[\begin{array}{cc} 0 & 1  \end{array}\right]^T \;,\; F = \left[\begin{array}{cc} 1 & 0  \end{array}\right],
\end{align*}
with the state vector being formed as $x(t) = \left[\begin{array}{cc} \theta(t) & v(t)\end{array}\right]^T$, i.e., the same model as in Example~\ref{example:angle_and_speed}. Note here that the $F$ matrix reflects the fact that only the angle, $\theta$, is directly observable in the array covariance matrix. The choice of $B$ implies that the angle may only be influenced via the velocity, i.e., through acceleration. Thus, the resulting dynamic cost function, as defined in \eqref{eq:quad_costfunc}, penalizes transport requiring acceleration, whereas the cost function for the static problem in contrast penalizes transport requiring velocity.
In the discrete implementations of the methods, we use 100 grid points to represent the angle, $\theta$, and the dynamical model uses 30 grid points for the velocity, $v$. Also, we use the regularization parameters $\epsilon = 10^{-1}$, and common parameters $\gamma = 30$ and $\gamma = 5$ for all marginals for the static and dynamic OMT models, respectively.
%
%
\begin{figure}[t]
        \centering
            \includegraphics[width=.5\textwidth]{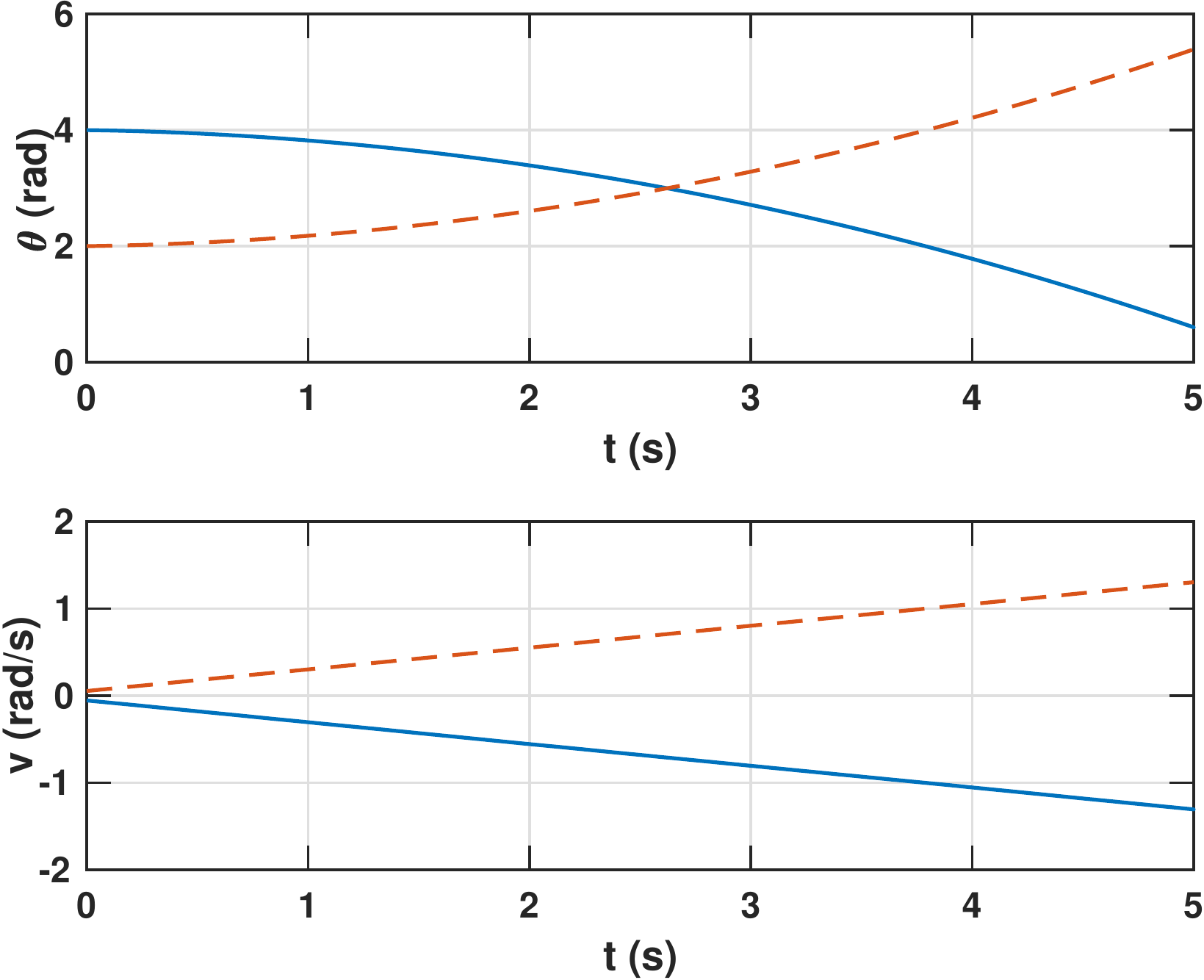}
           \caption{Ground truth the DoA tracking example with two moving targets. Top panel: the target DoAs, in radians, as a function of time. Bottom panel: the target velocities, in radians per second, as a function of time.}
            \label{fig:tracking_velocity_dynamic_gt}
\vspace{-2mm}\end{figure}
%
The results for the static and dynamical models are displayed in the top panels of Figures~\ref{fig:tracking_location_static} and \ref{fig:tracking_location_vel_dynamic}, respectively. Here, the trajectories in between the observation times, which are indicated by vertical dashed lines, are reconstructed using the interpolation procedure presented in Section~\ref{sub_sec:omt_dyn}, as defined by the obtained multi-marginal transport plan.
As noted in Section~\ref{sub_sec:omt_dyn}, this requires bi-marginal transport plans describing the mass transfer between the respective margins. These plans are computed as the bi-marginal projections detailed in Proposition~\ref{prp:sequential_cost}.
To simplify comparison, the ground truth trajectories from Figure~\ref{fig:tracking_velocity_dynamic_gt}, at the observation times, are superimposed in the plots. Also, the bottom panel of Figure~\ref{fig:tracking_location_static} displays the spectral estimates obtained by applying the Capon estimator \cite{Capon69} to the individual covariance matrix estimates. As may be noted, the Capon estimate contains several spurious peaks, caused by the noisy measurements. In contrast, as seen in Figure~\ref{fig:tracking_location_static}, the static model is able to produce reasonable spectral estimates for the observation times. However, the reconstructed trajectories fails to model the crossing of the paths of the targets. This is not unexpected: as the static OMT model penalizes movement, i.e., velocity, the cost of transport is smaller if the targets instead change course (note the trajectory between $t = 2 $ and $t = 3$). In contrast, the dynamical formulation, in which movement is expected, is able to produce considerably more accurate estimates, as may be seen in Figure~\ref{fig:tracking_location_vel_dynamic}, including the crossing of the paths of the targets. Note also that the dynamical formulation is able to reconstruct the spectrum also for the hidden velocity state.

%
%
\begin{figure}[t]
        \centering
            \includegraphics[width=.6\textwidth]{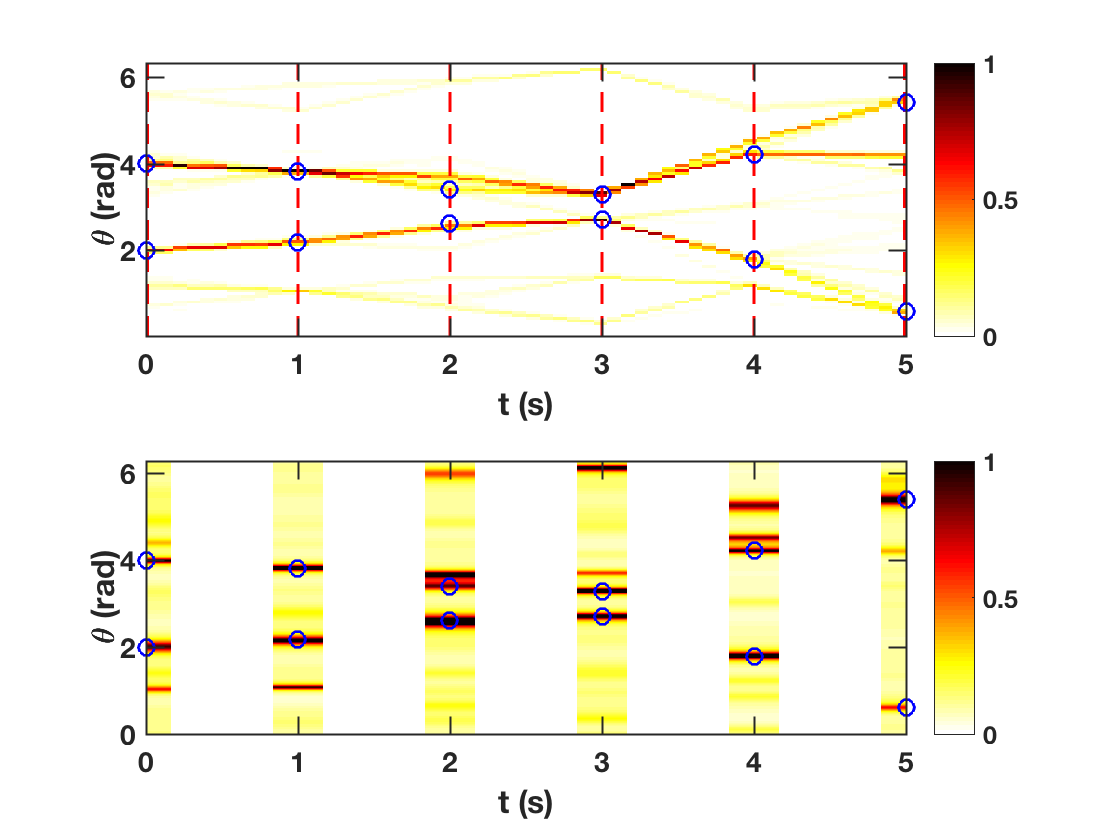}
           \caption{Top panel: reconstructed DoA spectrum using a static cost function in the OMT formulation, based on six observations of the array covariance matrix. The observation times are marked in dashed lines. The ground truth trajectories at the observation times are marked by rings. Bottom panel: estimates obtained using the Capon estimator applied to the individual covariance matrices.}
            \label{fig:tracking_location_static}
\vspace{-2mm}\end{figure}

\begin{figure}[t]
        \centering
            \includegraphics[width=.6\textwidth]{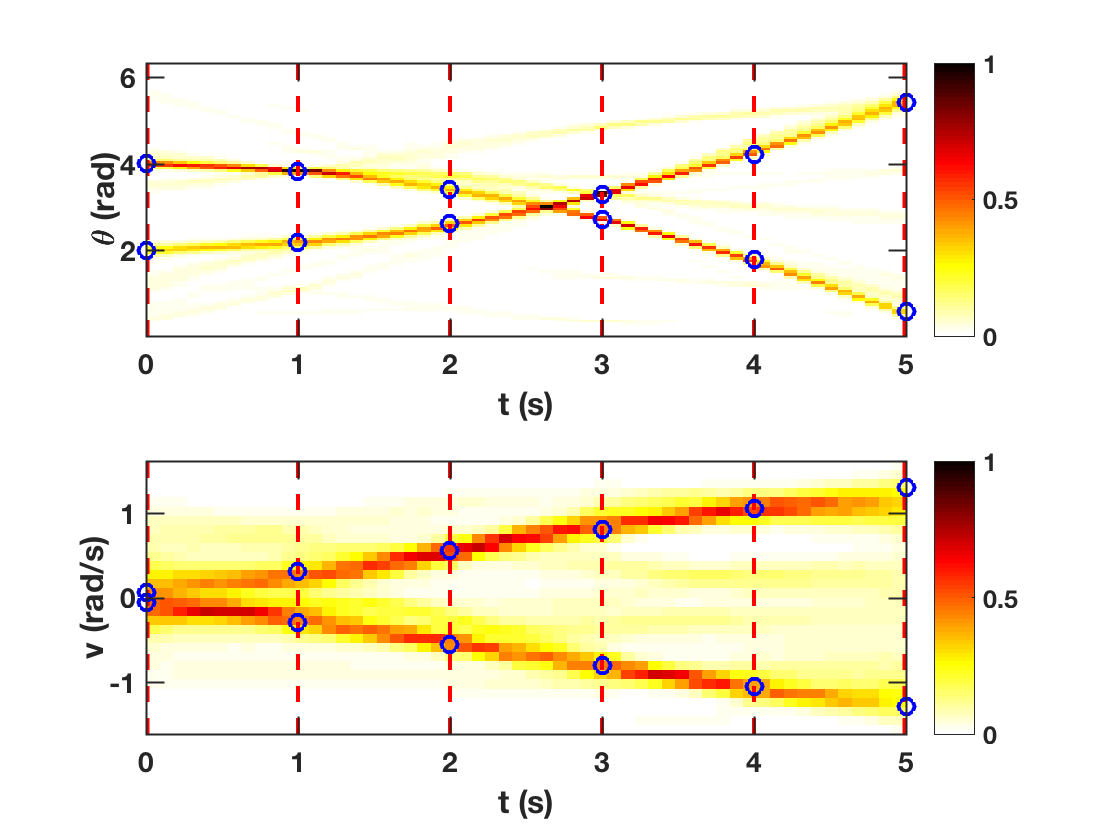}
           \caption{Reconstructed DoA and velocity spectra using a dynamic cost function in the OMT formulation, based on six observations of the array covariance matrix. The observation times are marked in dashed lines. The ground truth trajectories at the observation times are marked by rings. Top panel: reconstructed DoA spectrum. Bottom panel: reconstructed velocity spectrum.}
            \label{fig:tracking_location_vel_dynamic}
\vspace{-2mm}\end{figure}

\subsection{Sensor fusion in 3-D - audio example} \label{subsec:3D_sensor_fusion}
In order to illustrate the applicability of the multi-marginal OMT formulation in \eqref{eq:multi_margin_omt_PI}, in the form of \eqref{eq:multi_margin_omt_barycenter}, for performing sensor fusion, we consider a three-dimensional (3-D) localization problem, in which two sensor arrays observe two signal sources. The sources are modeled as localized speech sources, with the source signals being taken from babble noise excerpts from \cite{Audiotec}. The arrays consist of ten sensors each, arranged as a ULA for the first array, and as points on a circle, with two additional sensors perpendicular to the plane of the circle, for the second array.
The sensor spacing for the ULA is $0.1$ meters, and the radius for the circular array is $0.25$ meters.
The scenario is shown in Figures~\ref{fig:simulated_real_misaligned} and \ref{fig:simulated_real_misaligned_scenario_proj}, with Figure~\ref{fig:simulated_real_misaligned} displaying the full scenario, and Figure~\ref{fig:simulated_real_misaligned_scenario_proj} the scene as seen from above. 
The measured sensor signals are generated using acoustic impulse responses obtained from the randomized image method \cite{DeSenaAMW15_23}, using the room dimensions $4.3 \times 6.9 \times 2.6$ meters.  As the sources are broadband, we consider processing in the short-time Fourier transform (STFT) domain, i.e., equivalent to narrow-band filtering of the signal. Specifically, the signal is sampled at 16 kHz, and the STFT representation consists of 256 frequency bins, where each frame is constructed using 16 ms of the signal, using a Hann window with 50\% overlap.
We then compute estimates of the covariance matrices for the respective arrays corresponding to the frequency 2437.5 Hz, i.e., for the wavelength $0.1395$ meters, using the sample covariance estimate constructed from 100 signal snapshots. The proposed barycenter method in \eqref{eq:multi_margin_omt_barycenter} is then used to form a joint spatial spectral estimate using the two estimated covariance matrices, with the cost function $\costfunc(x_0,x_1) = \norm{x_0-x_1}_2^2$, for $x_0,x_1\in \RR^3$. For the discrete implementation in \eqref{eq:multimargin_generalcost}, we use a uniform gridding of the cube $[1.5,3.5]^3$, using $n = 75$ points in each dimension. The regularization parameters are $\gamma = 1$, common for all marginals, and $\epsilon = 10^{-2}$. Also, in order to illustrate the geometrical properties of the proposed formulation, we assume that the geometry of the ULA is only approximately known. Specifically, the assumed array geometry corresponds to a rotation in the $x-y$ plane of the true array, as shown in Figure~\ref{fig:simulated_real_misaligned_scenario_proj}. The obtained spectral estimate is superimposed in Figure~\ref{fig:simulated_real_misaligned}. As can be seen, the obtained estimate implies a spatial translation of the actual sources. However, it may be noted that the obtained estimate clearly identifies two targets, i.e., the error in the array orientation only results in a spatial perturbation, but no artifacts in the form of, e.g., spurious sources. This is also illustrated in Figure~\ref{fig:simulated_real_misaligned_proj}, showing the projection of the three-dimensional spectrum onto the two dimensional subspaces. As can be seen from the second and third panel, the position in the $z$-coordinate is unbiased, as the rotation only takes place in the $x-y$ plane.
%
%
\begin{figure}[t]
        \centering
            \includegraphics[width=.5\textwidth]{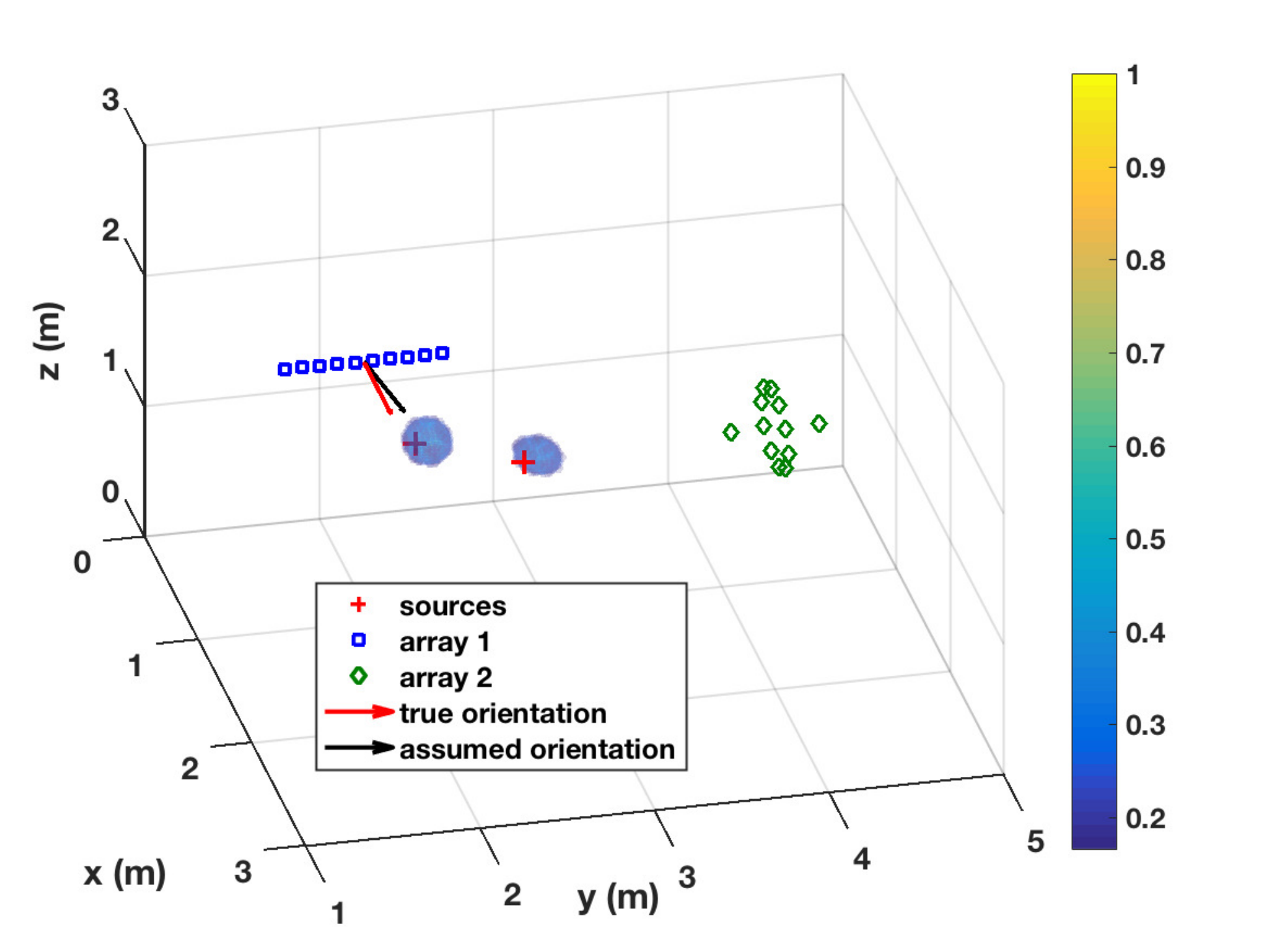}
           \caption{Scenario for localization in 3D using two sensor arrays and two signal sources. The true and assumed orientations of the first array are indicated by arrows. The three dimensional spectral estimate obtained using the formulation in \eqref{eq:multi_margin_omt_barycenter}  is superimposed.}
            \label{fig:simulated_real_misaligned}
\vspace{-2mm}\end{figure}
%
%
\begin{figure}[t]
        \centering
            \includegraphics[width=.5\textwidth]{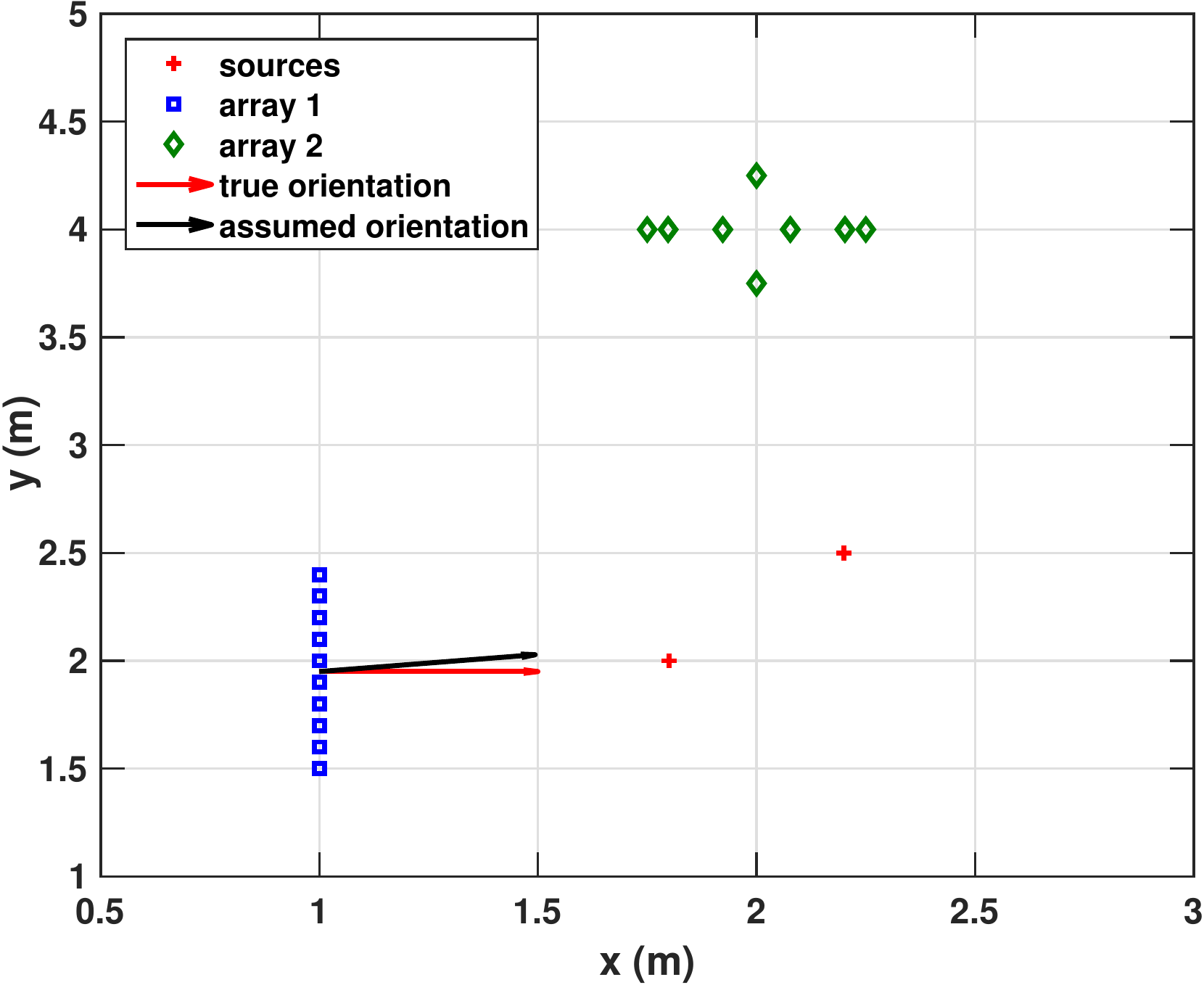}
           \caption{Birds eye view of the localization scenario in Figure~\ref{fig:simulated_real_misaligned}.}
            \label{fig:simulated_real_misaligned_scenario_proj}
\vspace{-2mm}\end{figure}
%
\begin{figure}[t]
        \centering
            \includegraphics[width=.5\textwidth]{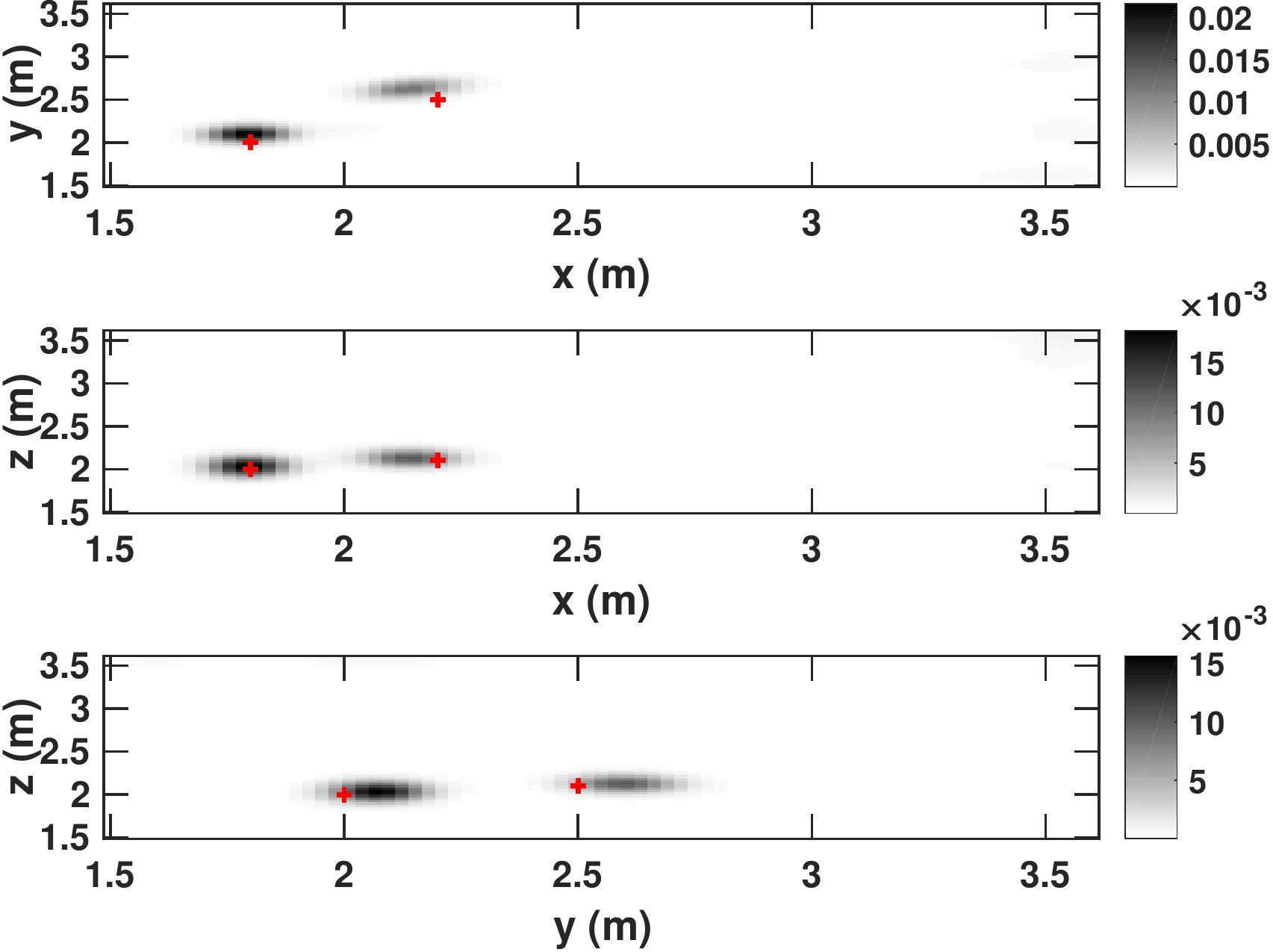}
           \caption{Two-dimensional projections of the three-dimensional spectral estimate obtained using the formulation in \eqref{eq:multi_margin_omt_barycenter}  for the estimation scenario in Figure~\ref{fig:simulated_real_misaligned}.}
            \label{fig:simulated_real_misaligned_proj}
\vspace{-2mm}\end{figure}

\subsection{Sensor fusion 2-D - simulation study}\label{subsec:2D_sensor_fusion_simulation}
%
%
\begin{figure}[t]
        \centering
        \vspace{1mm}
            \includegraphics[width=0.5\textwidth]{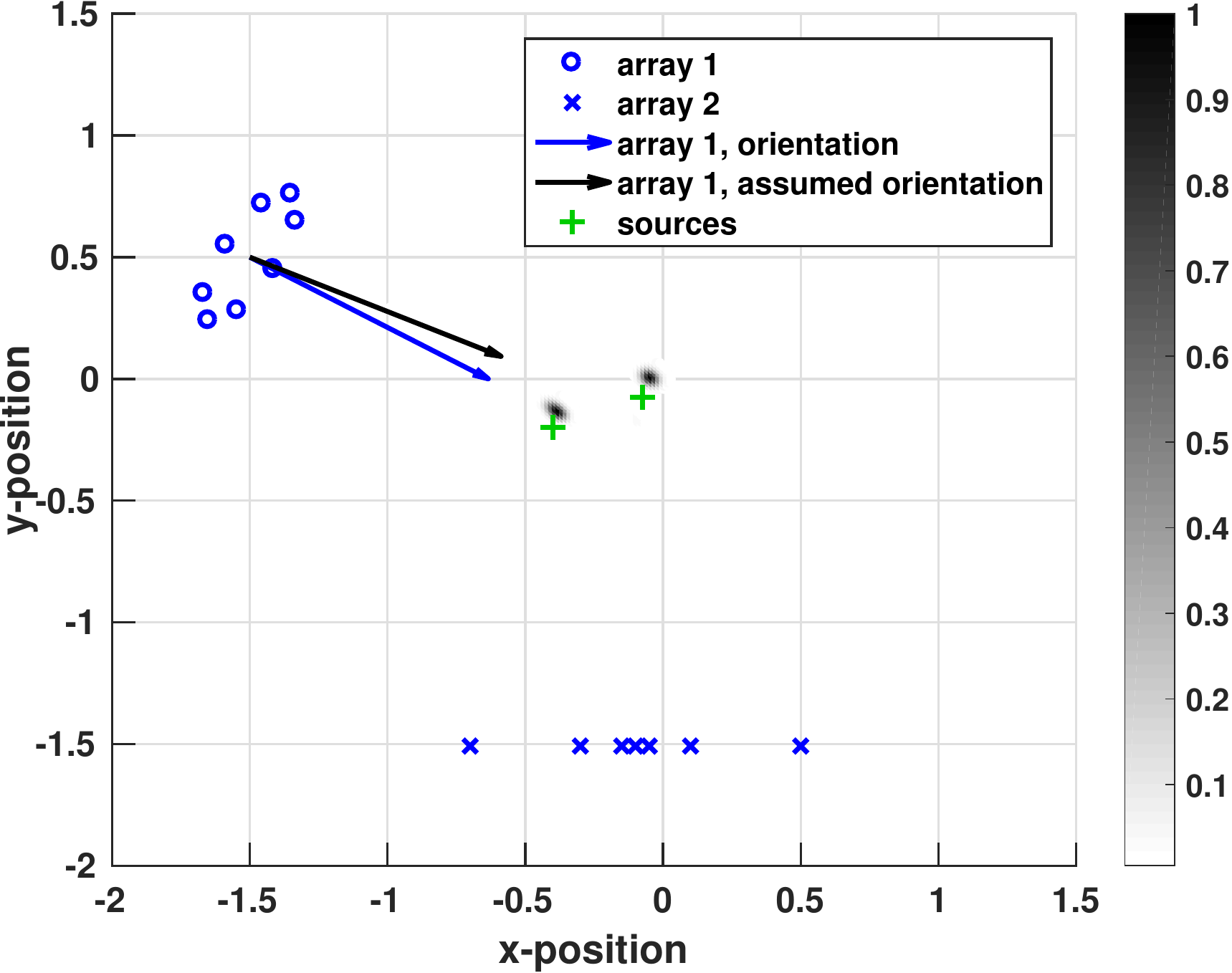}
            \vspace{.75mm}
           \caption{Spectral estimate as given by the multi-marginal barycenter formulation in \eqref{eq:multi_margin_omt_barycenter}. The alignment error is 6.7 degrees.}
            \label{fig:sensor_fusion_2D_OMT_multi_marginal}
\end{figure}
%
\begin{figure}[t]
        \centering
        \vspace{1mm}
            \includegraphics[width=0.5\textwidth]{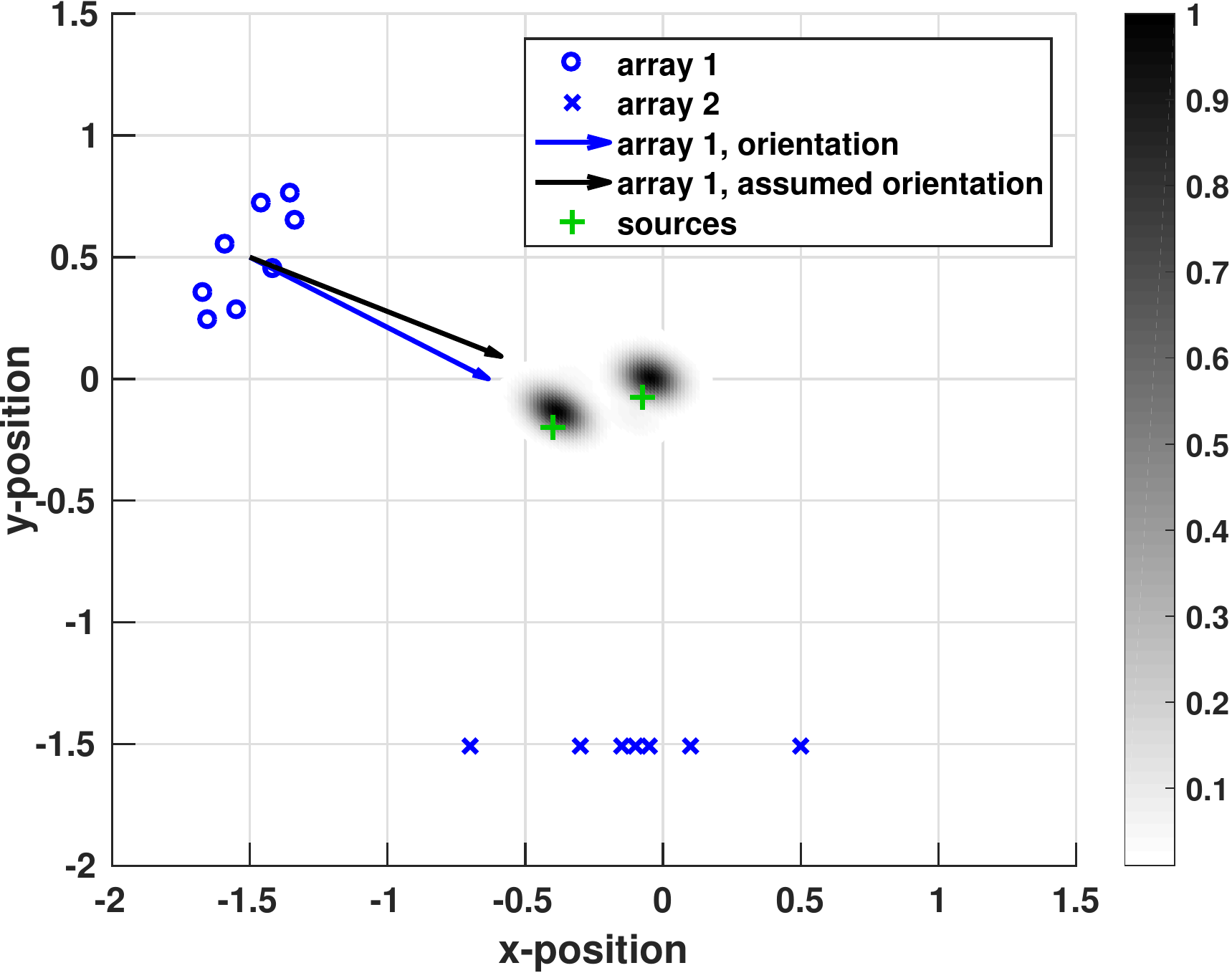}
            \vspace{.75mm}
           \caption{Spectral estimate as given by the pair-wise barycenter formulation in \eqref{eq:pair_wise_barycenter}. The alignment error is 6.7 degrees.}
            \label{fig:sensor_fusion_2D_OMT}
\end{figure}
%
%
\begin{figure}[t]
        \centering
        \vspace{1mm}
            \includegraphics[width=0.5\textwidth]{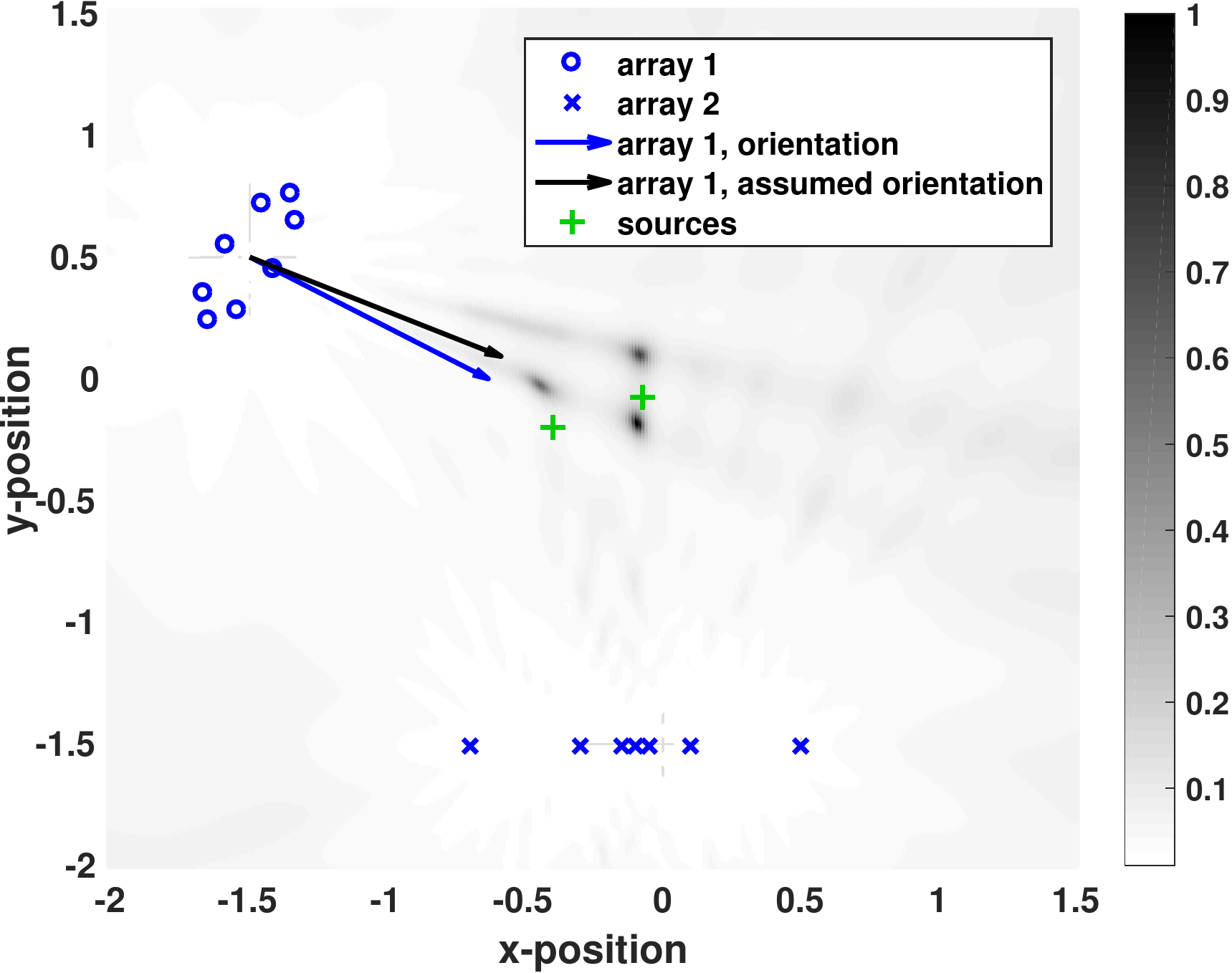}
            \vspace{.75mm}
           \caption{Pseudo-spectrum as given by non-coherent MUSIC. The alignment error is 6.7 degrees.}
            \label{fig:sensor_fusion_2D_music}
\end{figure}
%
\begin{figure}[t!]
        \centering
        \vspace{1mm}
            \includegraphics[width=0.5\textwidth]{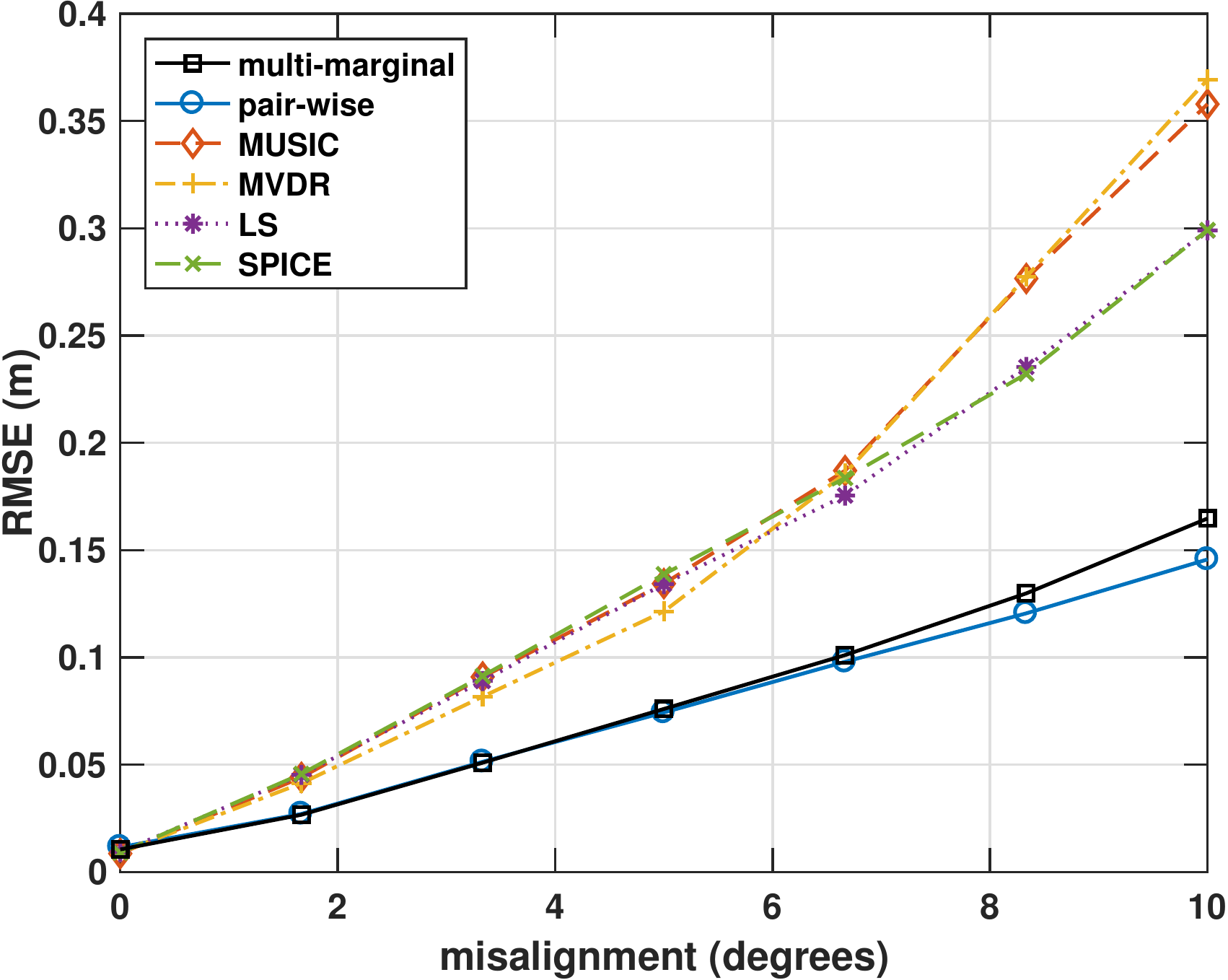}
            \vspace{.75mm}
           \caption{Error in location of spectral peaks, as function of the misalignment angle.}
            \label{fig:sensor_fusion_2D_simulation}
\end{figure}
%
%
%
%
As illustrated above, the proposed multi-marginal formulation is able to produce easily interpretable spectral estimates when used in 3-D sensor fusion scenarios, despite having erroneous information of the array geometry. Elaborating on this, we here conduct a Monte Carlo simulation study as to investigate the behavior of the spectral barycenter as a function of the array alignment error, and compare to other spectral estimation methods applicable to scenarios in which only the covariance matrices of individual sensor arrays, but not the inter-array covariances, are available. Specifically, consider a 2-D localization scenario, in which two uncorrelated sources impinge on two sensor arrays; one ellipsoidal shaped array consisting of 8 sensors, and one linear array consisting of 7 sensors. The wavefronts are here modeled as being circular. The scenario is shown in Figure~\ref{fig:sensor_fusion_2D_OMT_multi_marginal}. As may be seen, similar to the three-dimensional example, an unknown rotation is introduced to the ellipsoidal array, here varying this rotation between 0 and 10 degrees.
In the Monte Carlo simulation, 100 realizations are generated for each considered rotation angle. In each realization, the locations of the signal sources are randomized uniformly on the square $[-0.5,0.5]\times[-0.5,0.5]$. The sources are modeled as uncorrelated circularly symmetric Gaussian white noises with variance 100, and a spatially white noise with variance 1 is added to the sensor measurements. The wavelength of the impinging waves is twice that of the smallest sensor spacing in the linear array. We here consider spectral estimates obtained by the multi-marginal barycenter formulation in \eqref{eq:multi_margin_omt_barycenter}, as approximated by the discretization in \eqref{eq:multimargin_generalcost}. We also consider the formulation in \eqref{eq:pair_wise_barycenter} which is discretized and where the entropy regularization is employed separately for each pair-wise transport plan (see \cite{ElvanderHJK19_icassp} for details on the implementation of this problem). For both methods, we utilize the cost function $\costfunc(\discstate_k,\discstate_\ell) = \norm{\discstate_k-\discstate_\ell}_2^2$, for grid points $\discstate_k,\discstate_\ell \in \RR^2$. We grid the square $[-1,1]^2$ uniformly using $n = 100$ points in each dimension, and the algorithm parameters are $\gamma = 0.01$, common for all marginals, and $\epsilon = 10^{-3}$ and $\epsilon = 5\cdot 10^{-3}$ for the multi-marginal and pair-wise formulations, respectively. The array covariance matrices are estimated using as the sample covariance matrix from 500 signal snapshots.
As comparison, we also apply the non-coherent MUSIC and MVDR estimators, as described in \cite{RiekenF04_54}, as well as the least-squares (LS) estimator from \cite{LuoYWH17_270}, and the non-coherent SPICE estimator from \cite{SuleimanPPZ18_66} to the estimated covariance matrices in order to obtain estimates of the spatial spectrum.

Figures~\ref{fig:sensor_fusion_2D_OMT_multi_marginal}, \ref{fig:sensor_fusion_2D_OMT}, and \ref{fig:sensor_fusion_2D_music} provide illustrations of the behavior of the considered methods for the case of 6.7 degrees misalignment, for the multi-marginal formulation, the pair-wise regularized barycenter, and the MUSIC estimator, respectively. Comparing the two barycenter formulations, it may be noted that the estimate provided by the multi-marginal representation is considerably more concentrated, as compared to the formulation with pair-wise regularization. This is due to the optimization problem in \eqref{eq:multimargin_generalcost} being more well-conditioned, allowing for smaller entropy regularization while still retaining numerical stability. It should however be noted that the values of the parameter $\epsilon$ does not have exactly the same meaning for the two formulations, i.e., the problems are not equivalent even for identical parameter values. It may also be noted that the spectral estimates obtained using both barycenter formulations only imply a spatial perturbation of the signal sources. In contrast, the pseudo-spectral estimates obtained using the MUSIC estimator contains spurious peaks, in addition to larger deviations from the true source locations.

The results from the full simulation study are shown in Figure~\ref{fig:sensor_fusion_2D_simulation}, displaying the root mean squared error (RMSE) for the deviation of the spectral peaks to the true source locations. It may here be noted that for the case of no misalignment, the barycenter formulations produce estimates deviating slightly more from the ground truth than the comparison methods. However, as the misalignment increases, the RMSE for the barycenters estimators increase more slowly, indicating a greater robustness. This is not unexpected, as the formulations in \eqref{eq:multi_margin_omt_barycenter} and \eqref{eq:pair_wise_barycenter} allows for perturbations on the underlying domain, whereas the optimization criteria for the LS and SPICE estimators are related to $L_2$-distances for the spectra.

\subsection{Tracking of barycenters} \label{subsec:2D_barycenter_tracking}
As an illustration of the barycenter tracking formulation in Section~\ref{subsec:barycenter_tracking}, we consider a 2-D localization scenario in which two ULAs, each consisting of 15 sensors, measure signals generated by three moving sources. The scenario is illustrated in Figure~\ref{fig:scenario_4D_barycenter_tracking}. Here, the start location of each source is indicated by an asterisk. As can be seen, the trajectories of the targets all intersect at some point. Note that one of the ULAs has been rotated slightly, as indicated by the orientational arrows, motivating the use of a barycenter formulation in order to form an estimate of the spatial spectrum. At $\cT+1$ time points $t$, $t = 0,1,\ldots,\cT$, with $\cT = 7$, we collect 100 signal snapshots which are used to estimate the covariance matrices of the two arrays using the sample covariance matrix. The target signals are modeled as independent Gaussian sources, and spatially white Gaussian noise is added to the sensors, with an SNR of 20 dB. In order to model the tracking part, we use a two-dimensional extension of the state space description utilized in Section~\ref{subsec:1D_tracking}, where each of the two spatial components is endowed with a velocity state. With this description, the barycenter tracking problem considers transport on a four-dimensional space, i.e., over location and velocity for each spatial dimension. As the cost of transport between the barycenter at the observation times and the corresponding observation marginals, we use the Euclidean cost, i.e., $\costfunc(\discstate_0,\discstate_1) = \norm{\discstate_0-\discstate_1}_2^2$ for $\discstate_0, \discstate_1 \in \RR^2$. In forming the spectral estimates, we use the discretized version of the problem in \eqref{eq:multi_margin_omt_barycenter_tracking} using the parameters $\epsilon = 0.2$, $\gamma = 0.5$, common for all marginals, and $\alpha = 1$. The spatial domain is gridded uniformly with $n_x = 75$ points in each dimension, and the velocity domain is gridded uniformly with $n_v = 30$ points in each dimension. 

The resulting estimated spatial spectra are shown in Figure~\ref{fig:barycenter_tracking_4D_location}. As can be seen, the modes of the estimate correspond well to the ground truth, taking into account that the array rotation prevents perfect estimates. Note here that the rotation only causes a slight shift in location of the sources, but no spurious estimates. The corresponding estimated velocity spectra are shown in Figure~\ref{fig:barycenter_tracking_4D_velocity}. As can be seen, the distribution over velocity details three distinct modes that remain fairly constant over time, corresponding well to the ground truth constant velocity.
As comparison, Figure~\ref{fig:barycenter_tracking_4D_location_capon} displays the results obtained using the non-coherent MVDR estimator from \cite{RiekenF04_54}. Note that this estimator does not take any time dependence into account, and instead forms estimates at each separate time point $t$ using the available pair of array covariance matrices. It may be noted that these estimates display less concentrated estimates, as well as big differences in spectral power.

Finally, in order to illustrate how mass is transported on the 2-D spatial domain, Figure~\ref{fig:mass_trajectory_4D} tracks the location of the spectral peak corresponding to the third target along the trajectory, identified by computing the bi-marginal projections corresponding to consecutive barycenters, as detailed in Proposition~\ref{prp:sequential_cost} and Remark~\ref{remark:bi_projection_barycenter_tracking}. Note here that the distribution of mass remains fairly concentrated throughout the trajectory.

\begin{figure}[t!]
        \centering
        \vspace{1mm}
            \includegraphics[width=0.5\textwidth]{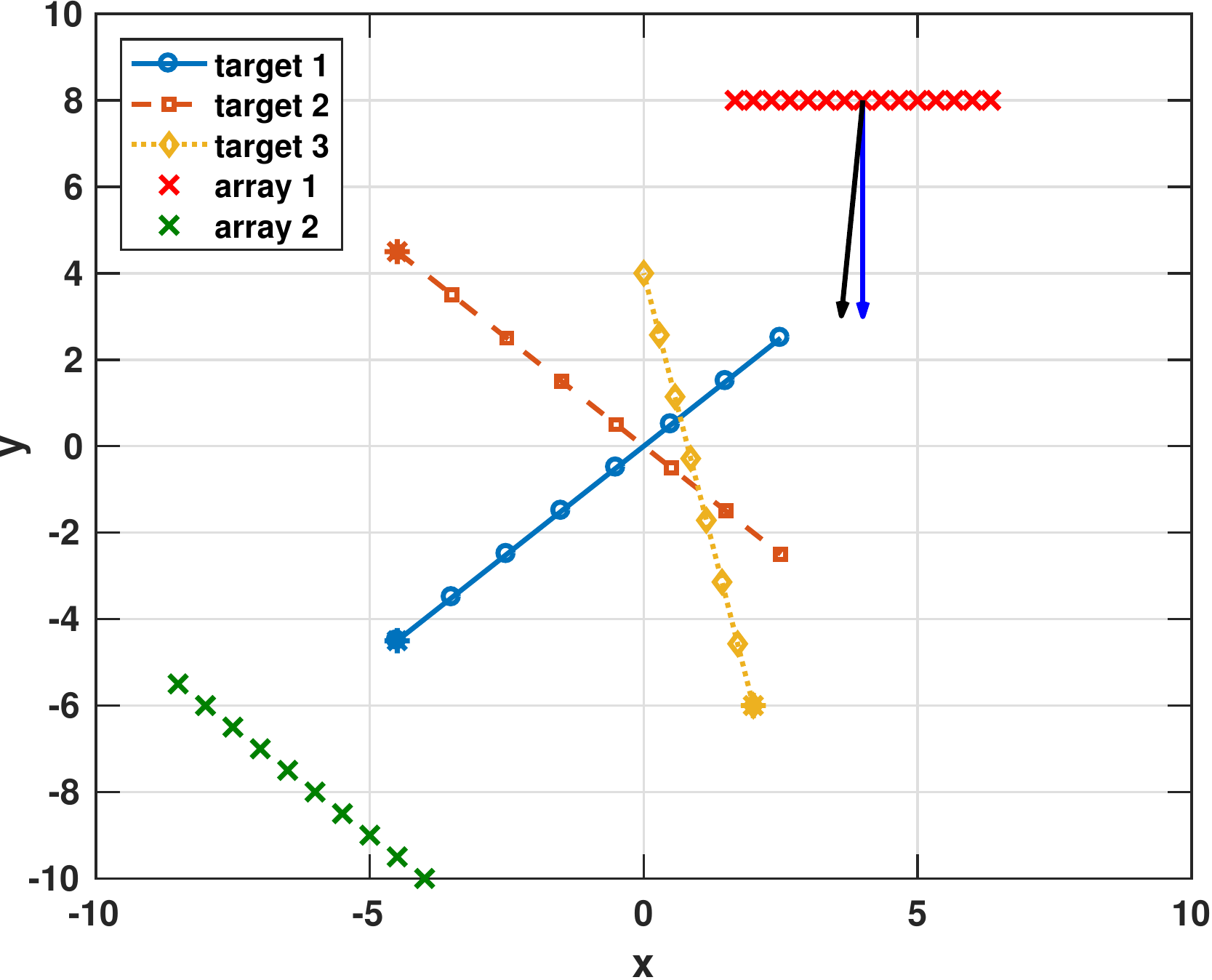}
            \vspace{.75mm}
           \caption{Scenario for the barycenter tracking problem with three moving sources. The start of the trajectory for each source is indicated by an asterisk $(*)$. Note here that one of the sensor arrays is slightly rotated, as indicated by the orientation arrows. }
            \label{fig:scenario_4D_barycenter_tracking}
\end{figure}

\begin{figure}[t!]
        \centering
        \vspace{1mm}
            \includegraphics[width=0.6\textwidth]{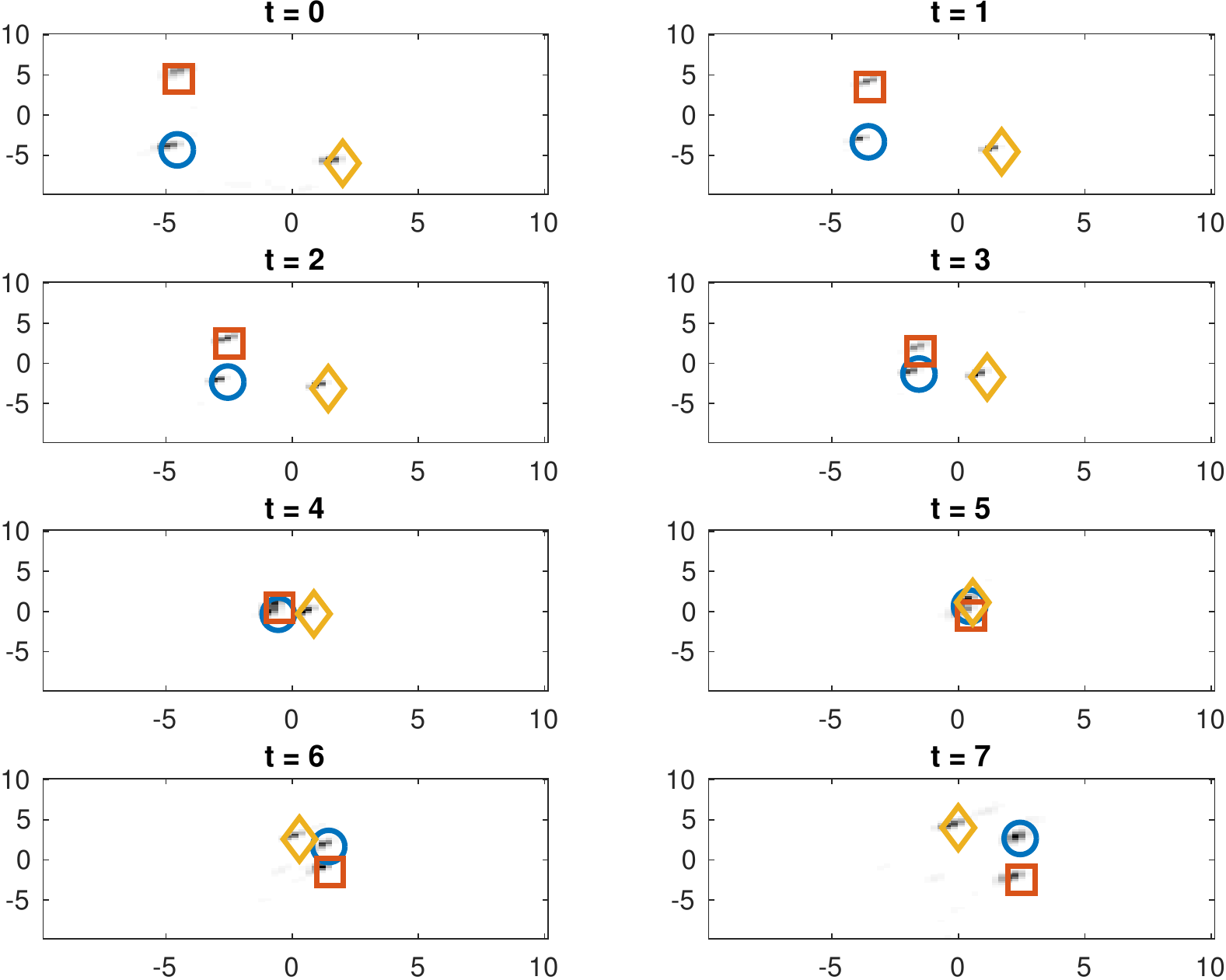}
            \vspace{.75mm}
           \caption{Estimated spatial spectra corresponding to the time points $t = 0,1,\ldots,\cT$, with $\cT = 7$, for the barycenter tracking problem. Note that the ground truth locations have been superimposed, using the same legend as in Figure~\ref{fig:scenario_4D_barycenter_tracking}.}
            \label{fig:barycenter_tracking_4D_location}
\end{figure}

\begin{figure}[t!]
        \centering
        \vspace{1mm}
            \includegraphics[width=0.6\textwidth]{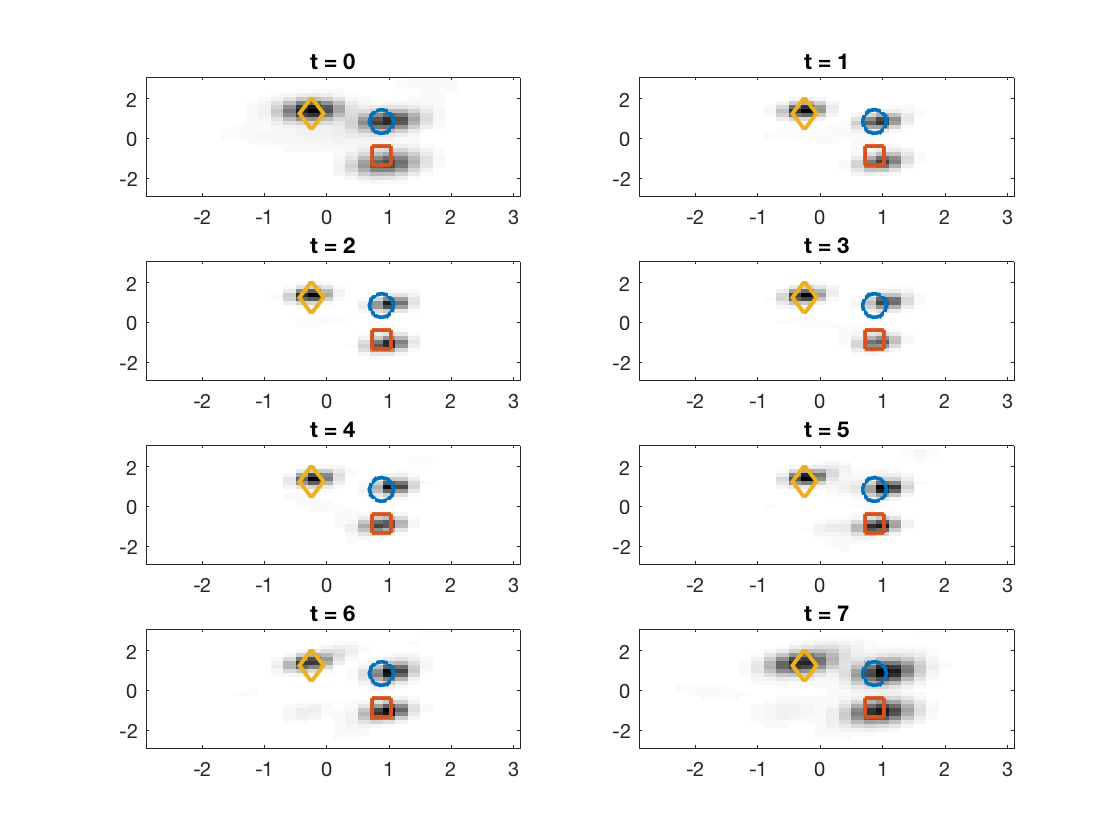}
            \vspace{.75mm}
           \caption{Estimated velocity spectra corresponding to the time points $t = 0,1,\ldots,\cT$, with $\cT = 7$, for the barycenter tracking problem. Note that the ground truth locations have been superimposed, using the same legend as in Figure~\ref{fig:scenario_4D_barycenter_tracking}.}
            \label{fig:barycenter_tracking_4D_velocity}
\end{figure}

\begin{figure}[t!]
        \centering
        \vspace{1mm}
            \includegraphics[width=0.6\textwidth]{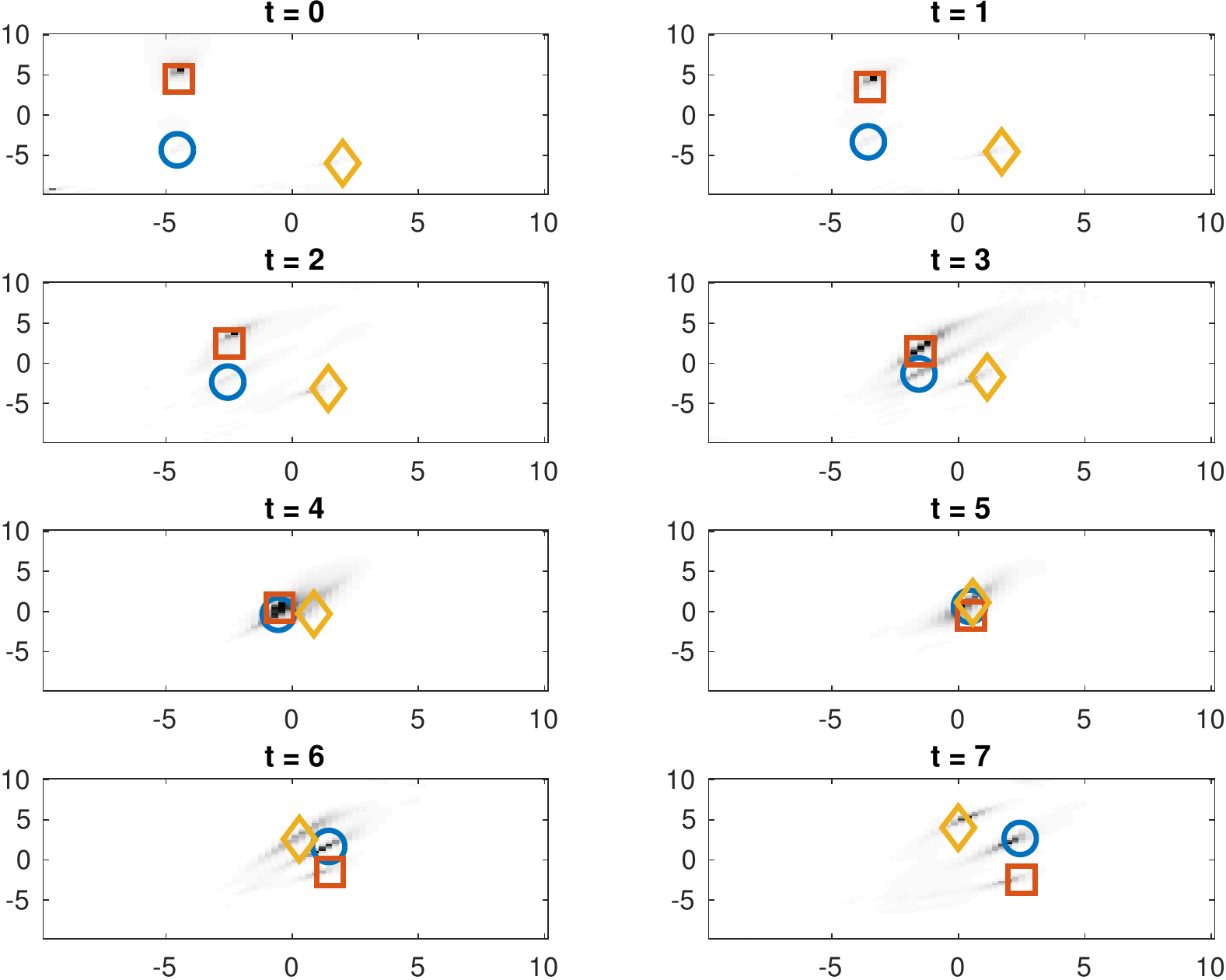}
            \vspace{.75mm}
           \caption{Estimated spatial spectra corresponding to the time points $t = 0,1,\ldots,\cT$, with $\cT = 7$, obtained using the non-coherent MVDR estimator. Note that the ground truth locations have been superimposed, using the same legend as in Figure~\ref{fig:scenario_4D_barycenter_tracking}.}
            \label{fig:barycenter_tracking_4D_location_capon}
\end{figure}

\begin{figure}[t!]
        \centering
        \vspace{1mm}
            \includegraphics[width=0.6\textwidth]{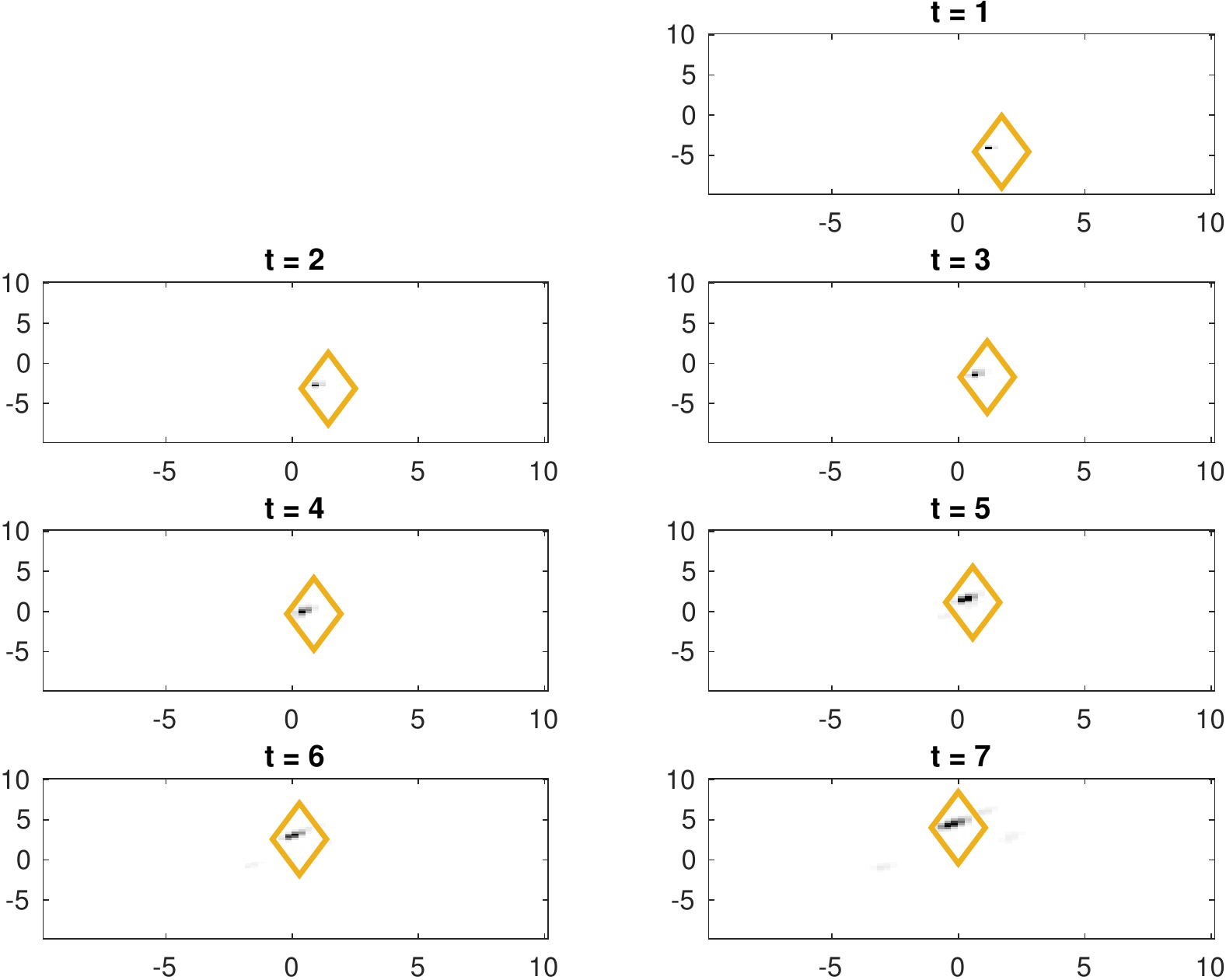}
            \vspace{.75mm}
           \caption{Tracking of the mass of the peak location corresponding to the third target at time $t = 0$.}
            \label{fig:mass_trajectory_4D}
\end{figure}

\subsubsection{Dimensionality and complexity analysis}
In this section, we provide a brief dimensionality analysis of the OMT problem used to model the barycenter tracking above, as well an analysis of the computational complexity of solving the dual problem in \eqref{eq:dual_multimarginal_omt} using Algorithm~\ref{alg:multitracking_sinkhornnewton}. Recall that the state space consists of a location and velocity component for each of the two spatial dimensions. Thus, the size of the discrete state space $X$ is $N= (n_x n_v)^d$, where $n_x=75$ and $n_v=30$ are the number of grid points on the spatial and velocity domain, respectively, and $d=2$ is the number of spatial dimensions. Accordingly, the size of the discrete measurement space $Y$ is $\tilde N = n_x^d$. As covariance matrices are estimated for $J = 2$ sensor arrays at $\cT +1$ time instances, with $\cT = 7$, this implies that the number of elements in the mass transport tensor $\bM$ in \eqref{eq:multimargin_generalcost} is
\begin{equation*}
N^{(\cT+1)} \tilde N^{(T+1)J} \approx 4.3 \cdot 10^{113}.
\end{equation*}
This number is larger than the number of particles in the observable universe; solving the primal OMT problem directly, or even constructing a tensor of this size, is thus infeasible. However, the dual problem in \eqref{eq:dual_multimarginal_omt} is formulated in terms of the dual vectors $\lambda_{(t,j)}\in\RR^{m}$, for $t=0,\dots,\cT,$ and  $j=1,\dots,J$. Here, $m = 225$ denotes the size of covariance vectors in \eqref{eq:cov_vector}, constructed by stacking the real and imaginary components of the corresponding covariance matrices, excluding the redundant parts resulting from the Hermitian structure. Thus, the number of real variables in the dual problem is given by
\begin{equation*}
(\cT+1)Jm = 3600,
\end{equation*}
constituting a dramatic complexity reduction as compared to the primal problem.

To analyze the computational complexity for solving the dual problem, we consider the complexity of performing one iteration sweep of Algorithm~\ref{alg:multitracking_sinkhornnewton}, i.e., one sweep through all index pairs $(t,j)$, for $t=0,\dots,\cT,$ and  $j=1,\dots,J$, as to update the corresponding dual vectors $\lambda_{(t,j)}$, where we run through $j$ in the inner cycle and $t$ in the outer cycle.
It may be noted that for each index pair $(t,j)$, this requires computing the vector $v_{(t,j)}$ in \eqref{eq:multimarginal_omt_partial_info_vt}, followed by solving \eqref{eq:tracking_sinkhorn_maximization0}, as detailed in Theorem~\ref{thm:multimarginal_scheme}.
The computation of $v_{(t,j)}$ is detailed in the projection in \eqref{eq:projection_tj} in Proposition~\ref{prp:combined_cost}, disregarding the vector $u_{(t,j)}$.
It may be noted that when cycling through $j = 1,2,\ldots,J$ for a given $t$, one does not have to re-compute $p_\tau$ for $\tau \neq t$, i.e., the multiplicative factors to the left and right of $p_t./(Ku_{(t,j)})$ in \eqref{eq:projection_tj} have to be updated only once for each $t$. 
Further, these factors may be stored so that updating them for a given $t$ requires only one multiplication with $K\in\RR^{N\times N}$. 

After updating the left and right factors in \eqref{eq:projection_tj}, computing the set of vectors $v_{(t,j)}$ for $j = 1,2,\ldots,J$, including the intermittent updates of $p_t$, requires $J(J+1)$ multiplications with $\tilde K \in \RR^{N \times n_x^2}$. By exploiting the structures of $K$ and $\tilde K$, as described in Remark~\ref{remark:tensor_mode_product}, i.e., the decoupling in the spatial dimensions, the total complexity for performing the update of the left and right factors and computing the set $\left\{ v_{(t,j)} \right\}_{j=1}^J$, is $\mathcal{O}(  d (n_x n_v)^3 +  J(J+1) d n_x^3 n_v)$. 
Finding the roots of \eqref{eq:tracking_sinkhorn_maximization0} by Newton's method requires solving a system of linear equations of size $m \times m$ in each inner Newton iteration. However, as indicated in Remark~\ref{remark:Newton}, after a few outer Sinkhorn iterations, Newton's method in general converges directly, i.e., it suffices to solve a single system of equations in each outer iteration.
Thus, after these initial outer iterations, the complexity for updating all dual variables $\lambda_{(t,j)}$, for $t = 0,1,\ldots,\cT$, and $j = 1,2,\ldots,J$, is 
\begin{equation*}
\mathcal{O}\left( \cT( d (n_x n_v)^3 +  J(J+1) d n_x^3 n_v + Jm^3) \right).
\end{equation*}
\section{Conclusions}\label{sec:conclusions}
In this work, we have proposed a framework for formulating multi-marginal OMT problems for scenarios in which the underlying mass distributions are only indirectly observable, referred to as partial information of the marginals. Motivated by examples from spatial spectral estimation in array processing, we have shown that the proposed formulations may be used for modeling information fusion, as well as for tracking the evolution of mass distributions over time. By leveraging the geometrical properties of OMT, the proposed formulations have been shown to yield robust spectral estimates, as well as allowing for exploiting prior knowledge of underlying dynamics. Also, we have presented computational tools, leading to computationally efficient solution algorithms for the transport problem. Even though the original primal OMT formulation may be prohibitively large, we have shown that by considering dual formulations, as well as exploiting inherent structures in the problem, one may arrive at tractable solvers even in high-dimensional settings.

\appendix

\section{Proofs}\label{appendix:proofs}
In this section, we provide proofs of Propositions~\ref{prp:sequential_cost} - \ref{prp:multimarginal_dual}, as well as for Theorem~\ref{thm:multimarginal_scheme}.
The proofs of Proposition~\ref{prp:sequential_cost} - \ref{prp:combined_cost}, for exploiting tensor structures in order to compute the projections on the marginals, are based on the following two lemmas.
\begin{lemma}\label{lm:projection}

Let $\bU=u_0 \otimes u_1\otimes \cdots \otimes u_\cT$ and suppose that
\[
\langle \bK,\bU\rangle=w_1^T \diag(u_t)w_2,
\]
where $w_1$ and $w_2$ are vectors that may depend on $u_\ell$, for $\ell\neq t$. Then,
\[
\projopdisc{t}{}(\bK \odot \bU) =w_1 \odot u_t \odot w_2.
\]

\end{lemma}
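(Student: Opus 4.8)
The plan is to exploit the multilinearity of the inner product $\langle \bK, \bU\rangle$ in the rank-one factors $u_0, \ldots, u_\cT$ and relate the derivative of this scalar with respect to $u_t$ to the projection $\projopdisc{t}{}(\bK\odot\bU)$. Concretely, observe first that
\[
\langle \bK, \bU\rangle = \sum_{i_0,\dots,i_\cT} \bK_{i_0,\dots,i_\cT}\prod_{s=0}^\cT (u_s)_{i_s},
\]
so that, differentiating with respect to the $j$-th entry of $u_t$ and keeping all other factors fixed,
\[
\frac{\partial}{\partial (u_t)_j}\langle \bK, \bU\rangle
= \sum_{\substack{i_0,\dots,i_{t-1},\\ i_{t+1},\dots,i_\cT}} \bK_{i_0,\dots,i_{t-1},j,i_{t+1},\dots,i_\cT}\prod_{s\neq t}(u_s)_{i_s}
= \projopdisc{t}{}(\bK\odot\bU)_j .
\]
The last equality is exactly the definition of the marginal projection applied to the tensor $\bK\odot\bU$, since $(\bK\odot\bU)_{i_0,\dots,i_\cT} = \bK_{i_0,\dots,i_\cT}\prod_s (u_s)_{i_s}$ and summing over all modes except $t$ leaves the factor $(u_t)_j$ untouched while summing the product of $\bK$ with the remaining factors.

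Next I would differentiate the assumed expression $\langle \bK,\bU\rangle = w_1^T\diag(u_t)w_2 = \sum_j (w_1)_j (u_t)_j (w_2)_j$ with respect to $(u_t)_j$. Because $w_1$ and $w_2$ depend only on $u_\ell$ for $\ell\neq t$ and not on $u_t$, this derivative is simply $(w_1)_j (w_2)_j$, i.e.\ the $j$-th entry of $w_1\odot w_2$. However, the naive equality $\projopdisc{t}{}(\bK\odot\bU)_j = (w_1)_j(w_2)_j$ is missing the factor $(u_t)_j$ that appears in the claimed conclusion $\projopdisc{t}{}(\bK\odot\bU) = w_1\odot u_t\odot w_2$; so the correct route is not to differentiate the rank-one form directly but rather to note that $\projopdisc{t}{}(\bK\odot\bU)_j$ is, by the first computation, the coefficient obtained by summing $\bK$ against $\prod_{s\neq t}(u_s)_{i_s}$ and then \emph{multiplying by} $(u_t)_j$. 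In other words, if we write $\projopdisc{t}{}(\bK\odot\bU)_j = (u_t)_j \cdot q_j$ where $q_j = \sum_{i_s, s\neq t}\bK_{\dots,j,\dots}\prod_{s\neq t}(u_s)_{i_s}$, then $q_j$ is precisely $\partial \langle \bK,\bU\rangle/\partial(u_t)_j$ evaluated as if $u_t$ were absent, which by the hypothesis equals $(w_1)_j(w_2)_j$. Hence $q = w_1\odot w_2$ and $\projopdisc{t}{}(\bK\odot\bU) = u_t\odot w_1\odot w_2 = w_1\odot u_t\odot w_2$, as claimed.

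The cleanest way to make the middle step rigorous, avoiding any ambiguity about "differentiating as if $u_t$ were absent," is to introduce an auxiliary variable: replace $u_t$ by a generic vector $z\in\RR^n$ while keeping all other $u_\ell$ fixed, set $\bU(z) = u_0\otimes\cdots\otimes z\otimes\cdots\otimes u_\cT$, and note two facts. First, $\langle\bK,\bU(z)\rangle = w_1^T\diag(z)w_2$ holds for every $z$, since the hypothesis is an algebraic identity in the entries of $u_t$ (the form of $w_1, w_2$ does not change when only $u_t$ is varied, as they depend on the other factors only). Second, $\projopdisc{t}{}(\bK\odot\bU(z)) = \diag(z)\, r$, where $r_j = \sum_{i_s, s\neq t}\bK_{\dots,j,\dots}\prod_{s\neq t}(u_s)_{i_s}$ is independent of $z$; equivalently $r = \projopdisc{t}{}(\bK\odot\bU(\ett))$. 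Differentiating the identity $\langle\bK,\bU(z)\rangle = w_1^T\diag(z)w_2$ in $z$ gives $r = w_1\odot w_2$, and substituting $z = u_t$ yields $\projopdisc{t}{}(\bK\odot\bU) = u_t\odot w_1\odot w_2$. I expect the only real subtlety — the main obstacle — to be bookkeeping the factor $u_t$ correctly: one must be careful to peel it off before matching against the bilinear form $w_1^T\diag(u_t)w_2$, rather than leaving it in and concluding the wrong identity. No deep machinery is needed; the argument is essentially a one-line observation about the structure of $\bK\odot\bU$ made precise.
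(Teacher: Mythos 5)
Your argument is correct and is essentially the same as the paper's: both proofs exploit the fact that the hypothesis is an identity that is linear in the $t$-th factor, the paper by substituting $u_t\odot e_{i_t}$ into both sides to read off the $i_t$-th entry, and you by replacing $u_t$ with a generic $z$ and comparing coefficients before multiplying back by $u_t$. The only blemish is that your first displayed equality, $\frac{\partial}{\partial (u_t)_j}\langle \bK,\bU\rangle = \projopdisc{t}{}(\bK\odot\bU)_j$, is false as written (the projection carries the extra factor $(u_t)_j$), but you identify and repair this yourself in the subsequent paragraphs, so the final argument stands.
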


\begin{proof}
Let $e_{i_t}$ denote the unit vector whose element $i_t$ is equal to one, and with all other elements equal to zero. Then, by replacing $u_t$ with $u_t\odot e_{i_t}$ in the expression for $\langle \bK,\bU\rangle$, we get the element $i_t$ of the projection on the $t$:th marginal, i.e., 
\begin{align*}
&\langle \bK,u_0 \otimes \cdots \otimes u_{t-1} \otimes (u_{t}\odot e_{i_t}) \otimes u_{t+1} \otimes \cdots \otimes u_\cT\rangle\\
&= \sum_{\ell_0,\ell_1\ldots, \ell_\cT} \bK_{\ell_0,\ell_1\ldots, \ell_\cT} \left(\prod_{s=0}^{t-1}(u_s)_{\ell_s}\right)(u_{t}\odot e_{i_t})_{\ell_t}\left(\prod_{s=t+1}^\cT(u_s)_{\ell_s}\right)\\
&= \sum_{\substack{\ell_0,\ldots, \ell_{t-1} \\ \ell_{t+1},\ldots, \ell_\cT}} \bK_{\ell_0,\ell_1\ldots, \ell_\cT} \left(\prod_{s=0}^{t-1}(u_s)_{\ell_s}\right)(u_{t})_{i_t}\left(\prod_{s=t+1}^\cT(u_s)_{\ell_s}\right)\\
&=(\projopdisc{t}{}(\bK \odot \bU))_{i_t}.
\end{align*}
It may be noted that in the second line, a term in the sum is only non-zero if $i_t=\ell_t$, yielding the second equality.
By the assumption in the lemma, we get
\begin{align*}
(\projopdisc{t}{}(\bK \odot \bU))_{i_t}&=w_1^T \diag(u_t\odot e_{i_t})w_2 =(w_1)_{i_t} (u_t)_{i_t} (w_2)_{i_t},
\end{align*}
from which the result follows.

\end{proof}

\begin{lemma}\label{lm:coupling}

Let $\bU=u_0 \otimes u_1\otimes \cdots \otimes u_\cT$ and suppose that
\[
\langle \bK,\bU\rangle=w_1^T \diag(u_{t_1})W_2 \diag(u_{t_2})w_3,
\]
where $w_1, w_3\in \RR^n$ and $W_2\in \RR^{n\times n}$ may depend on $u_\ell$, for $\ell\notin \{t_1,t_2\}$. Then,
\[
\projopdisc{t_1,t_2}{}(\bK \odot \bU) =\diag(w_1 \odot u_t) W_2 \diag(u_{t_2}\odot w_3 ).
\]

\end{lemma}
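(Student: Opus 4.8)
The plan is to mimic the argument used in the proof of Lemma~\ref{lm:projection}, now contracting two modes simultaneously rather than one. First I would introduce the unit vectors $e_{i_{t_1}}$ and $e_{i_{t_2}}$ in $\RR^n$ (each having a single one in the indicated coordinate and zeros elsewhere) and observe that, by the definition of the bi-marginal projection, the entry $(\projopdisc{t_1,t_2}{}(\bK\odot\bU))_{i_{t_1},i_{t_2}}$ is obtained by replacing $u_{t_1}$ with $u_{t_1}\odot e_{i_{t_1}}$ and $u_{t_2}$ with $u_{t_2}\odot e_{i_{t_2}}$ inside the scalar $\langle\bK,\bU\rangle$. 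Indeed, expanding $\langle\bK,\, u_0\otimes\cdots\otimes(u_{t_1}\odot e_{i_{t_1}})\otimes\cdots\otimes(u_{t_2}\odot e_{i_{t_2}})\otimes\cdots\otimes u_\cT\rangle$ as a sum over all indices $\ell_0,\dots,\ell_\cT$, every term with $\ell_{t_1}\neq i_{t_1}$ or $\ell_{t_2}\neq i_{t_2}$ vanishes, so what remains is exactly the sum of $\bK_{\ell_0,\dots,\ell_\cT}\prod_s (u_s)_{\ell_s}$ over the free indices $\{\ell_t : t\notin\{t_1,t_2\}\}$ with $\ell_{t_1}=i_{t_1}$ and $\ell_{t_2}=i_{t_2}$ fixed, which is precisely $(\projopdisc{t_1,t_2}{}(\bK\odot\bU))_{i_{t_1},i_{t_2}}$. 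Without loss of generality one may assume $t_1<t_2$ to fix the order of the tensor factors; the argument is insensitive to this.

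Next I would substitute this replacement into the assumed factorization $\langle\bK,\bU\rangle=w_1^T\diag(u_{t_1})W_2\diag(u_{t_2})w_3$. Since $w_1$, $W_2$, and $w_3$ depend only on $u_\ell$ for $\ell\notin\{t_1,t_2\}$, they are unaffected by the substitution, and using $\diag(u_{t_1}\odot e_{i_{t_1}})=\diag(u_{t_1})\diag(e_{i_{t_1}})$ together with $\diag(e_{i_{t_1}})=e_{i_{t_1}}e_{i_{t_1}}^T$ (and likewise for $t_2$), one obtains
\[
(\projopdisc{t_1,t_2}{}(\bK\odot\bU))_{i_{t_1},i_{t_2}}
=(w_1)_{i_{t_1}}(u_{t_1})_{i_{t_1}}(W_2)_{i_{t_1},i_{t_2}}(u_{t_2})_{i_{t_2}}(w_3)_{i_{t_2}}.
\]
Recognizing the right-hand side as the $(i_{t_1},i_{t_2})$ entry of $\diag(w_1\odot u_{t_1})\,W_2\,\diag(u_{t_2}\odot w_3)$, and letting $i_{t_1},i_{t_2}$ range over $\{1,\dots,n\}^2$, yields the claimed identity.

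The computation is routine once the bookkeeping is in place; the only point requiring genuine care is the first step, namely verifying that inserting the two Hadamard factors $e_{i_{t_1}}$ and $e_{i_{t_2}}$ really isolates the correct partial sum over the remaining modes. I would handle this exactly as the analogous single-mode step in Lemma~\ref{lm:projection}, simply carrying two frozen indices through the multi-index sum instead of one.
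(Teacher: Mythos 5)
Your proposal is correct and follows essentially the same route as the paper: freeze the two indices by Hadamard-multiplying $u_{t_1}$ and $u_{t_2}$ with unit vectors, identify the resulting scalar as the $(i_{t_1},i_{t_2})$ entry of the bi-marginal projection (exactly as in Lemma~\ref{lm:projection}, but with two frozen modes), and then read off $(w_1)_{i_{t_1}}(u_{t_1})_{i_{t_1}}(W_2)_{i_{t_1},i_{t_2}}(u_{t_2})_{i_{t_2}}(w_3)_{i_{t_2}}$ from the assumed factorization. Your version even spells out the index-vanishing argument that the paper compresses into ``analogously to Lemma~\ref{lm:projection}'', and you correctly write $u_{t_1}$ where the lemma statement has the typo $u_t$.
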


\begin{proof}
Analogously to the proof in Lemma~\ref{lm:projection}, we may express the bi-marginal projections as
\[
(\projopdisc{t_1,t_2}{}(\bK \odot \bU))_{i_{t_1},i_{t_2}}=\langle \bK,u_0 \otimes \cdots \otimes u_{{t_1}-1} \otimes (u_{{t_1}}\odot e_{i_{t_1}}) \otimes u_{{t_1}+1} \otimes \cdots \otimes u_{{t_2}-1} \otimes (u_{{t_2}}\odot e_{i_{t_2}}) \otimes u_{{t_2}+1}\otimes \cdots \otimes u_\cT\rangle.
\]
Hence, by the assumption,
\begin{align*}
(\projopdisc{t_1,t_2}{}(\bK \odot \bU))_{i_{t_1},i_{t_2}}&=w_1^T \diag(u_{{t_1}}\odot e_{i_{t_1}})W_2\diag(u_{{t_2}}\odot e_{i_{t_2}})w_3\\&=(w_1)_{i_{t_1}} (u_{t_1})_{i_{t_1}} (W_2)_{i_{t_1},i_{t_2}}(u_{t_2})_{i_{t_2}}(w_3)_{i_{t_2}},
\end{align*}
and thus the result follows.

\end{proof}

\begin{proof}[Proof of Proposition~\ref{prp:sequential_cost}.]

	First, note that due to the assumption of a sequential cost, we have
	\begin{equation*}
	\bK_{i_0,\ldots, i_T} = \prod_{t=1}^\cT K_{i_{t-1},i_t},
	\end{equation*}
	for the tensor $\bK=\exp(-\bC/\epsilon)$ and matrix $K=\exp(-C/\epsilon)$. Therefore,
\begin{align*}
	\langle \bK, \bU\rangle&=\sum_{i_0,i_1,\ldots, i_\cT} \left(\prod_{t=1}^\cT K_{i_{t-1},i_t}\right) \prod_{t=0}^\cT (u_t)_{i_t}\\
	&=\sum_{i_0,i_1,\ldots, i_\cT} (u_0)_{i_0} \prod_{s=1}^\cT (K \diag(u_t))_{i_{t-1},i_{t}}\\
	&=u_0^T K\diag(u_1) K \ldots K \diag(u_{\cT-1}) Ku_{\cT}\\
	&=\ett^T\diag(u_0) K\diag(u_1) K \ldots K \diag(u_{\cT-1}) K\diag(u_{\cT})\ett.
\end{align*}
Thus, $\langle \bK, \bU\rangle$ may be written as in Lemma~\ref{lm:projection} with
\begin{align*}
w_1&=\left(\ett^T\diag(u_0) K \diag(u_1) K \ldots K\diag(u_{t-1}) K\right)^T,\\ 
w_2&=K\diag(u_{t+1})K \ldots K \diag(u_{\cT-1}) K \diag(u_\cT)\ett,
\end{align*}
and hence, 
\begin{align*}
\projopdisc{t}{}(\bK \odot \bU)=&\left( u_0^T K \diag(u_1) K \dots K\diag(u_{t-1}) K \right)^T  \odot u_t \odot \left( K\diag(u_{t+1})K \ldots K \diag(u_{\cT-1}) K u_\cT \right).
\end{align*}
Moreover, in order to derive an expression for $\projopdisc{t_1,t_2}{}(\bK \odot \bU)$, note that $\langle \bK, \bU\rangle$ can be written as in Lemma~\ref{lm:coupling} with 
\begin{align*}
w_1&=\left(\ett^T\diag(u_0) K \diag(u_1) K \ldots K\diag(u_{t_1-1}) K\right)^T\\ 
W_2&=K\diag(u_{t_1+1})K \ldots K \diag(u_{t_2-1}) K \\
w_3&=K\diag(u_{t_2+1})K \ldots K \diag(u_{\cT-1}) K \diag(u_\cT)\ett,
\end{align*}
and hence the expression \eqref{eq:coupling_tracking} follows.

\end{proof}

\begin{proof}[Proof of Proposition \ref{prp:barycenter_cost}.] 
Due to the structure of the cost tensor $\bC$, we have
\begin{equation*}
\bK_{i_0,\ldots, i_J} = \prod_{j=1}^J K_{i_0,i_j},
\end{equation*}
	for the tensor $\bK=\exp(-\bC/\epsilon)$ and the matrix $K=\exp(-C/\epsilon)$. Thus, one may write
\begin{align*}
	\langle \bK, \bU\rangle=&  \sum_{ i_0,i_1,\dots, i_{J} } \left(\prod_{\ell=1}^J K_{i_0,i_\ell}\right) \prod_{\ell=0}^J (u_\ell)_{i_\ell}  \\
= & \sum_{ i_0} (u_0)_{i_0} \sum_{ i_1,\dots, i_{J} } \prod_{\ell=1}^J K_{i_0,i_\ell} (u_\ell)_{i_\ell}  \\
= & \sum_{ i_0} (u_0)_{i_0} \prod_{\ell=1}^J ( K u_\ell )_{i_0}\\
= & u_0^T \left(\bigodot_{\ell=1}^J  K u_\ell \right).
\end{align*}
It may be noted that this may be expressed as
\begin{align*}
	\langle \bK, \bU\rangle&=\ett^T \diag(u_0) \left(\bigodot_{\ell=1}^J  K u_\ell \right),
\end{align*}
as well as
\begin{align*}
	\langle \bK, \bU\rangle&=(Ku_j)^T \left(u_0 \odot \bigodot_{\substack{\ell=1\\\ell\neq j}}^J  K u_\ell \right)
=\ett^T \diag(u_j)K^T \left(u_0 \odot \bigodot_{\substack{\ell=1\\\ell\neq j}}^J  K u_\ell \right).
\end{align*}
Applying Lemma~\ref{lm:projection} yields the expressions \eqref{eq:proj_barycenter_0}
and \eqref{eq:proj_barycenter_j}
for $\projopdisc{j}{}( \bK \odot \bU)$ for $j=0,\ldots, J$. Alternatively, one may rewrite $\langle \bK, \bU\rangle$ as
\begin{equation*}
	\langle \bK, \bU\rangle=\ett^T \diag(u_0)\, \diag\left( \bigodot_{\substack{\ell=1\\\ell\neq j}}^J  K u_\ell \right)K\diag(u_{j})\ett,
\end{equation*}
and
\begin{equation*}
\langle \bK, \bU\rangle =\ett^T \diag(u_{j_1})K^T \diag\left(u_0 \odot \bigodot_{\substack{\ell=1\\\ell\neq j_1,j_2}}^J  K u_\ell \right)K\diag(u_{j_2})\ett.
\end{equation*}
With Lemma~\ref{lm:coupling}, this leads to the expressions \eqref{eq:proj_barycenter_0j}
and \eqref{eq:proj_barycenter_jj}
for the couplings  $\projopdisc{j_1,j_2}{}( \bK \odot \bU)$ for $j_1,j_2=0,\ldots, J$.

\end{proof}

\begin{proof}[Proof of Proposition \ref{prp:combined_cost}.] 
 
Recall the definition of the set $\Lambda=\{(t,j)\,|\, t\in\{0,1,\ldots, \cT\}, j\in\{0,1\,\ldots,J\}\}$. The structure of the cost tensor $\bC$ then implies that each element of the tensor $\bK=\exp(-\bC/\epsilon)$ may be expressed as
\begin{equation*}
\bK_{\left(i_{(t,j)}|(t,j)\in \Lambda\right)} = \left(\prod_{t=1}^\cT K_{i_{(t-1,0)}i_{(t,0)}}\right) \prod_{t=0}^\cT \prod_{j=1}^J \tilde K_{i_{(t,0)}i_{(t,j)}},
\end{equation*}
with the matrices defined as $K = \exp( -C/\epsilon) $ and $\tilde K = \exp(-\tilde C/\epsilon)$.
Furthermore, the elements of the tensor $\bU$ are given by
\begin{equation*}
\bU_{\left(i_{(t,j)}|(t,j)\in \Lambda\right)} = \prod_{t=0}^\cT \prod_{j=0}^J (u_{(t, j)})_{i_{(t, j)}}.
\end{equation*}
Therefore,
\begin{align*}
\langle \bK, \bU\rangle=&  \sum_{ \substack{ i_{(s,\ell)}\\ (s,\ell)\in\Lambda } } \left(\prod_{t=1}^\cT K_{i_{(t-1,0)}i_{(t,0)}}\right)\left( \prod_{t=0}^\cT \prod_{j=1}^J \tilde K_{i_{(t,0)}i_{(t,j)}} \right)\prod_{t=0}^\cT \prod_{j=0}^J (u_{(t, j)})_{i_{(t, j)}}\\
=& \sum_{ \substack{ i_{(s,0)}\\ s=0,1,\ldots, \cT} }  \left(\prod_{t=1}^\cT K_{i_{(t-1,0)}i_{(t,0)}}\right) \prod_{t=0}^\cT \left(\sum_{ \substack{ i_{(t,\ell)}\\ \ell=1,2,\ldots, J } }  \left(\prod_{j=1}^J \tilde K_{i_{(t,0)}i_{(t,j)}}\right)  \prod_{j=0}^J (u_{(t, j)})_{i_{(t, j)}} \right).
\end{align*}
Note that the last sum is a projection of the type \eqref{eq:proj_barycenter_0} in Proposition~\ref{prp:barycenter_cost}, and we denote it as
\begin{align*}
 \sum_{\substack{i_{(t, \ell)} \\ \ell=1,2,\ldots, J}} \left(\prod_{j=1}^J \tilde K_{i_{(t, 0)}i_{(t, j)}}\right) \prod_{j=0}^J (u_{(t, j)})_{i_{(t, j)}}  
  =& \left( u_{(t, 0)} \odot  \bigodot_{j=1}^J \left( \tilde K u_{(t, j)} \right) \right)_{i_{(t, 0)}} = (p_t)_{i_{(t,0)}}.
\end{align*}
Hence, using the proof of Proposition~\ref{prp:sequential_cost}, one may write
\begin{align*}
\langle \bK, \bU\rangle&=  p_0^T K \diag(p_1) K   \ldots K \diag(p_{\cT-1}) K p_\cT\\
&=  p_0^T K    \ldots K\diag(p_{t-1})K \diag(u_{(t, 0)}) \diag\left(   \bigodot_{j=1}^J \left( \tilde K u_{(t, j)} \right) \right) K \diag(p_{t+1})K\ldots  K p_\cT
\end{align*}
Thus, the projections on the central marginals, corresponding to index $(t,0)$ for  $t = 0,1,\ldots,\cT$, may be computed as in Proposition~\ref{prp:sequential_cost}, yielding 
\begin{equation*}
\projopdisc{(t,0)}{}(\bK \odot \bU)=\left( p_0^T K \diag(p_1) K \dots \diag(p_{t-1}) K \right)^T  \odot p_t \odot \left( K\diag(p_{t+1}) \ldots K \diag(p_{\cT-1}) K p_\cT \right).
\end{equation*}
For the other marginals, i.e., for $(t,j)$ such that $j > 0$, one may express $\langle \bK, \bU\rangle$ similarly as in the proof of \eqref{eq:proj_barycenter_j} of Proposition~\ref{prp:barycenter_cost}, i.e.,
\begin{align*}
\langle \bK, \bU\rangle =& \ett^T \diag( u_{(t,j)}) \tilde K^T \bigg(( p_0^T K \diag(p_1) K \dots \diag(p_{t-1}) K )^T \\
	&\qquad\odot (p_t./(Ku_{(t,j)}) ) \odot ( K\diag(p_{t+1}) \ldots K \diag(p_{\cT-1}) K p_\cT )\bigg),
\end{align*}
which, with Lemma~\ref{lm:projection}, yields the expression \eqref{eq:projection_tj}.

\end{proof}

\begin{proof}[Proof of Proposition \ref{prp:multimarginal_dual}.] 
As to simplify the exposition, let $\lambda$ and $\Delta$ denote the sets $\lambda_0,\ldots,\lambda_\cT$ and $\Delta_0,\ldots,\Delta_\cT$, respectively. The Lagrangian of \eqref{eq:multimargin_generalcost}, with dual variable $\lambda$, is detailed as
\begin{equation}
\begin{split}
L(\bM,\Delta,\lambda) = & \sum_{i_0,\ldots, i_{\cT}} \bC_{i_0,\dots,i_\cT} \bM_{i_0,\ldots, i_\cT}+\epsilon \cD(\bM) \\ & \qquad + \sum_{t=0}^\cT\gamma_t \|\Delta_t\|_2^2 + \sum_{t=0}^\cT\lambda_t^T(r_t+\Delta_t- \disccovop_t \projopdisc{t}{}(\bM)).
\end{split} \label{eq:multimarginal_lagrangian}
\end{equation}
For fixed $\lambda$, the Lagrangian is minimized when the gradients with respect to $\bM$ and $\Delta$ vanish.
For $\bM$, this requires
\begin{align*}
& \bC_{i_0,\dots,i_\cT} +\epsilon \log \bM_{i_0,\ldots, i_\cT} -\sum_{t=0}^\cT \left(\lambda_t^T \disccovop_t \right)_{i_t} = 0 \\
 \Leftrightarrow \ & \bM_{i_0,\ldots, i_\cT} = \exp(-\bC_{i_0,\dots,i_\cT}/\epsilon) \prod_{t=0}^\cT \exp( \disccovop_t^T \lambda_t / \epsilon ).
\end{align*}
Thus, one may express the mass transport tensor as $ \bM = \bK \odot \bU$ for two sensors $\bK, \bU \in \RR^{n^{\cT+1}}$, defined as $\bK=\exp(-\bC/\epsilon)$ and
\begin{equation*}
\bU = u_0\otimes u_1\otimes \cdots \otimes u_\cT, \quad \text{ with } u_t = \exp(\disccovop_t^T \lambda_t / \epsilon).
\end{equation*}
This proves the first statement of the proposition. Further, the Lagrangian \eqref{eq:multimarginal_lagrangian} is minimized with respect to $\Delta$ when
\begin{equation*}
2\gamma_t \Delta_t+\lambda_t= 0 \quad \Rightarrow \ \Delta_t=-\frac{1}{2\gamma_t} \lambda_t \quad \text{for } t=0,1,\ldots,\cT. \label{eq:tracking_gradient_delta}
\end{equation*}
Thus, the Langrangian \eqref{eq:multimarginal_lagrangian}, when minimized with respect to $\bM$ and $\Delta$, becomes
\begin{equation*}
\min_{\bM,\Delta} L(\bM,\Delta,\lambda) = \ -\epsilon \! \sum_{i_0,\ldots, i_{\cT}}  \bM_{i_0,\ldots, i_\cT}- \sum_{t=0}^\cT \left( \frac{1}{4\gamma_t}\|\lambda_t\|_2^2+ \lambda_t^Tr_t \right),
\end{equation*}
yielding the dual problem
\begin{equation*}
\maxwrt_{\lambda_0,\ldots, \lambda_\cT} \ -\epsilon \! \sum_{i_0,\ldots, i_{\cT}} \bK_{i_0,\ldots, i_\cT} \bU_{i_0,\ldots, i_\cT} -\sum_{t=0}^\cT\frac{1}{4\gamma_t} \|\lambda_t\|_2^2+ \sum_{t=0}^\cT\lambda_t^Tr_t.
\end{equation*}

\end{proof}

\begin{proof}[Proof of Theorem \ref{thm:multimarginal_scheme}.] 

A block-coordinate ascent method for the dual problem \eqref{eq:dual_multimarginal_omt} is to iteratively update one of the dual variables $\lambda_t$, for $t = 0,1,\ldots,\cT$, as to maximize \eqref{eq:dual_multimarginal_omt} while keeping the other dual variables fixed. The maximizing $\lambda_t$ may be found as the root of the corresponding gradient. Therefore, note that substituting the explicit expression for the elements in $\bU$, as detailed in Proposition~\ref{prp:multimarginal_dual}, into the first term of the dual objective \eqref{eq:dual_multimarginal_omt} yields
\begin{align*}
\sum_{i_0,\ldots, i_{\cT}} \bK_{i_0,\ldots, i_\cT} \bU_{i_0,\ldots, i_\cT}  =  \sum_{i_0,\ldots, i_{\cT}} \bK_{i_0,\ldots, i_\cT} \prod_{t=0}^\cT  \left(\exp(\disccovop_t^T \lambda_t / \epsilon) \right)_{i_t}.
\end{align*}
The gradient of this term with respect to $\lambda_t$ can be written as
\begin{equation*}
- \disccovop_t ( v_t \odot u_t ),
\end{equation*}
with $u_t=\exp(\disccovop_t^T \lambda_t/\epsilon)$ and the vector $v_t$ defined as
\begin{equation*}
\begin{aligned}
(v_t)_{i_t} = & \sum_{ \substack{i_0,\dots, i_{t-1},\\ i_{t+1},\dots,i_\cT} } K_{i_0,\ldots, i_{t-1}, i_t, i_{t+1},\ldots, i_\cT} \prod_{ \substack{s=1\\s\neq t}}^\cT (u_s)_{i_s} \\
= & \left( \sum_{ \substack{i_0,\dots, i_{t-1},\\ i_{t+1},\dots,i_\cT}} \bK_{i_0,\dots,i_{t-1},i_t,i_{t+1},\dots,i_\cT} \prod_{s=0}^\cT (u_s)_{i_s} \right)_{i_t} / (u_t)_{i_t} \\
= & \left( \projopdisc{t}{}(\bK \odot \bU) \right)_{i_t} / (u_t)_{i_t}.
\end{aligned}
\end{equation*}
Hence, the gradient of the dual objective function \eqref{eq:dual_multimarginal_omt} with respect to $\lambda_t$ is
\begin{equation*}
- \disccovop_t ( v_t \odot u_t ) - \lambda_t/(2\gamma_t) + r_t.
\end{equation*}

\end{proof}

\bibliographystyle{plain}
\bibliography{ElvanderHJK19_multi_arXiv}
\end{document}